\crefname{inequality}{Ineq.}{Ineqs.}
\DeclareMathAlphabet{\mathsfit}{T1}{\sfdefault}{\mddefault}{\sldefault}
\SetMathAlphabet{\mathsfit}{bold}{T1}{\sfdefault}{\bfdefault}{\sldefault}
\newtheorem{theorem}{Theorem}[section]
\newtheorem{lemma}[theorem]{Lemma}
\newtheorem{proposition}[theorem]{Proposition}
\newtheorem{corollary}[theorem]{Corollary}
\theoremstyle{definition}
\newtheorem{definition}[theorem]{Definition}
\newtheorem{remark}[theorem]{Remark}
\numberwithin{equation}{section}
\DeclareMathOperator{\tr}{tr}
\DeclarePairedDelimiter\abs{\lvert}{\rvert}
\DeclarePairedDelimiter\norm{\lVert}{\rVert}
\DeclarePairedDelimiter\parens{\lparen}{\rparen}
\DeclarePairedDelimiter\braces{\lbrace}{\rbrace}
\DeclarePairedDelimiter\bra{\langle}{\rvert}
\DeclarePairedDelimiter\ket{\lvert}{\rangle}
\DeclarePairedDelimiterX\braket[2]{\langle}{\rangle}{#1\delimsize\vert\mathopen{}#2}
\DeclarePairedDelimiterX\ketbra[2]{\lvert}{\rvert}{#1\delimsize\rangle\delimsize\langle\mathopen{}#2}
\renewcommand{\Re}{\operatorname{Re}}
\renewcommand{\Im}{\operatorname{Im}}
\newcommand{\expect}{\mathop{\mathbb{E}}}
\newcommand{\R}{\mathbb{R}}
\newcommand{\C}{\mathbb{C}}
\newcommand{\N}{\mathbb{N}}
\newcommand{\Id}{\mathbbm{1}}
\newcommand{\QHE}{\ensuremath{\mathsf{QHE}}\xspace}
\newcommand{\MIP}{\ensuremath{\mathsf{MIP}}\xspace}
\newcommand{\RE}{\ensuremath{\mathsf{RE}}\xspace}
\newcommand{\Enc}{\ensuremath{\mathsf{Enc}}\xspace} 
\newcommand{\Dec}{\ensuremath{\mathsf{Dec}}\xspace}
\newcommand{\Gen}{\ensuremath{\mathsf{Gen}}\xspace}
\newcommand{\Eval}{\mathsf{Eval}}
\newcommand{\sk}{\mathsf{sk}}
\newcommand{\SK}{\mathsf{SK}}
\newcommand{\CT}{\mathsf{CT}}
\newcommand{\M}{\mathsf{M}}
\newcommand{\ct}{\mathsf{ct}}
\newcommand{\mcA}{\mathcal{A}}
\newcommand{\mcB}{\mathcal{B}}
\newcommand{\mcC}{\mathcal{C}}
\newcommand{\mcG}{\mathcal{G}}
\newcommand{\mcGcomp}{\mathcal{G}_{\textnormal{comp}}}
\newcommand{\mcGseq}{\mathcal{G}_{\textnormal{seq}}}
\newcommand{\Scomp}{S_{\textnormal{comp}}}
\newcommand{\mcH}{\mathcal{H}}
\newcommand{\mcI}{\mathcal{I}}
\newcommand{\mcO}{\mathcal{O}}
\newcommand{\mcS}{\mathcal{S}}
\newcommand{\msA}{\mathscr{A}}
\newcommand{\msAPOVM}{\mathscr{A}_{\textnormal{POVM}}}
\newcommand{\msB}{\mathscr{B}}
\newcommand{\ot}{\otimes}
\renewcommand{\varepsilon}{\epsilon}
\renewcommand{\paragraph}[1]{\subsubsection*{#1}}
\begin{document}
\title{A bound on the quantum value of \texorpdfstring{\\}{} all compiled nonlocal games\footnotetext{A short version of this work announcing results with no proofs is available in the \emph{Proceedings of the 57th Annual ACM Symposium on Theory of Computing (STOC'25)}.}}
\date{}
\author[1]{Alexander Kulpe\thanks{\url{alexander.kulpe@rub.de}}}
\author[2]{Giulio Malavolta\thanks{\url{giulio.malavolta@hotmail.it}}}
\author[3]{Connor Paddock\thanks{\url{cpaulpad@uottawa.ca}}}
\author[1]{Simon Schmidt\thanks{\url{s.schmidt@rub.de}}}
\author[1]{Michael Walter\thanks{\url{michael.walter@rub.de}}}
\affil[1]{Ruhr University Bochum}
\affil[2]{Bocconi University}
\affil[3]{University of Ottawa}
\hypersetup{pdfauthor={Alexander Kulpe, Giulio Malavolta, Connor Paddock, Simon Schmidt, Michael Walter}}
\maketitle

\begin{abstract}
    A cryptographic compiler introduced by Kalai, Lombardi, Vaikuntanathan, and Yang\ (STOC'23) converts any nonlocal game into an interactive protocol with a single computationally bounded prover.
    Although the compiler is known to be sound in the case of classical provers and complete in the quantum case, quantum soundness has so far only been established for special classes of games.
    
    In this work, we establish a quantum soundness result for all compiled two-player nonlocal games.
    In particular, we prove that the quantum commuting operator value of the underlying nonlocal game is an upper bound on the quantum value of the compiled game.
    Our result employs techniques from operator algebras in a computational and cryptographic setting to establish information-theoretic objects in the asymptotic limit of the security parameter.
    It further relies on a sequential characterization of quantum commuting operator correlations, which may be of independent interest.
\end{abstract}
\tableofcontents

\section{Introduction}\label{sec:cmp_intro}
A \emph{nonlocal game} consists of two (or more) cooperative players that interact with a referee.
In the game, the referee samples a question for each player, to which each player replies with an answer.
The referee decides if the players win or lose based on the tuple of questions and answers.
Communication is not permitted between players, hence each player has no information about the questions given to the other players, nor do they know the answers provided to the referee by the other players.
Nevertheless, the description of the game is known to the players ahead of time, allowing them to strategize and maximize their probability of winning the game.
The \emph{classical value}~$\omega_c(\mcG)$ of a nonlocal game $\mcG$ is the maximum winning probability of classical players, while the \emph{quantum value}~$\omega_q(\mcG)$ denotes the maximum winning probability of quantum players sharing a finite amount of quantum resources (such as entangled quantum states, like EPR pairs).
In the quantum setting, the no-communication assumption can either be modeled by (i)~spatially separating the players so that they act on tensor product subsystems or (ii)~requiring that the players' actions commute on the joint system.
While these two conditions are equivalent when the state space is finite-dimensional, they are inequivalent in infinite dimensions.
The \emph{quantum commuting operator value} $\omega_{qc}(\mcG)$ denotes the maximum winning probability over strategies where the players' measurement operators commute.

Nonlocal games have proved highly influential in quantum information.
They have significantly advanced and operationalized our understanding of entanglement~\cite{bell1964einstein}, and offered a productive framework for separating correlations arising from various physical models.
Notably, they have been used to discern the classical and quantum correlations~$C_c\subsetneq C_q$~\cite{chsh}, the quantum correlations and their closure~$C_q\subsetneq C_{qa}$~\cite{Slofstra17}, and the latter from the commuting operator correlations~$C_{qa}\subsetneq C_{qc}$ as a consequence of~$\MIP^*=\RE$~\cite{mipre}.
Along the way, they became an important topic in complexity theory, through their connection to multiprover interactive proofs \cite{CHTW04}, and they allow certifying quantum computation in the two-prover setting \cite{reichardt2013classical,coladangelo2024verifier,grilo:LIPIcs.ICALP.2019.28,mipre}.

\subsection{Background on compiled nonlocal games}

A fundamental question in this area is whether two non-communicating provers are really necessary to build such protocols.
The single-prover setting, where a verifier interacts with a single computationally-bounded prover, is both \emph{theoretically appealing} and \emph{practically motivated} since the no-communication assumption can be difficult to enforce.
A breakthrough by Mahadev showed that quantum computations can be verified in this setting~\cite{mahadev}.
To establish a conceptual connection between the two worlds, Kalai, Lombardi, Vaikuntanathan, and Yang (KLVY) proposed a generic procedure to transform any nonlocal game \cref{sfig:nonlocal_game} into a single-prover protocol, called the \emph{compiled nonlocal game}~\cref{sfig:comp_game}, replacing the no-communication assumption between players with a computational assumption on the prover~\cite{klvy}.
For instance, the KLVY compiler translates a two-player game into a four-round game with a single player (prover) and a referee (verifier).
Questions are asked and answered \emph{sequentially}, rather than in parallel, and the leaking of information to the next round is prevented by cryptographic assumptions.
To achieve the desired functionality the construction employs a quantum homomorphic encryption (QHE) scheme with classical messages \cite{qfhe,brakerskiqfhe}, which is used to encrypt the first question (and consequently the answer) of the prover, whereas the second question is sent in plain.
This results in a \emph{quantum polynomial time (QPT)} assumption on the prover, as any prover with greater computational power could break the security of the QHE scheme, and gain information about the encrypted questions.


In addition to outlining the compilation procedure, \cite{klvy} established \emph{classical soundness} and \emph{quantum completeness} of the compiler.
This means that while classical provers cannot exceed the classical value of the corresponding nonlocal game, quantum provers can even achieve the quantum value, in the asymptotic limit where the security parameter~$\lambda$ (of the underlying quantum homomorphic encryption scheme) tends to infinity.
This implies that any nonlocal game with~$\omega_c < \omega_q$ can be converted into a protocol that classically verifies quantum advantage.

\begin{figure}[t]
\centering
\begin{subfigure}[b]{0.3\textwidth}
\centering
\includegraphics[height=2.75cm]{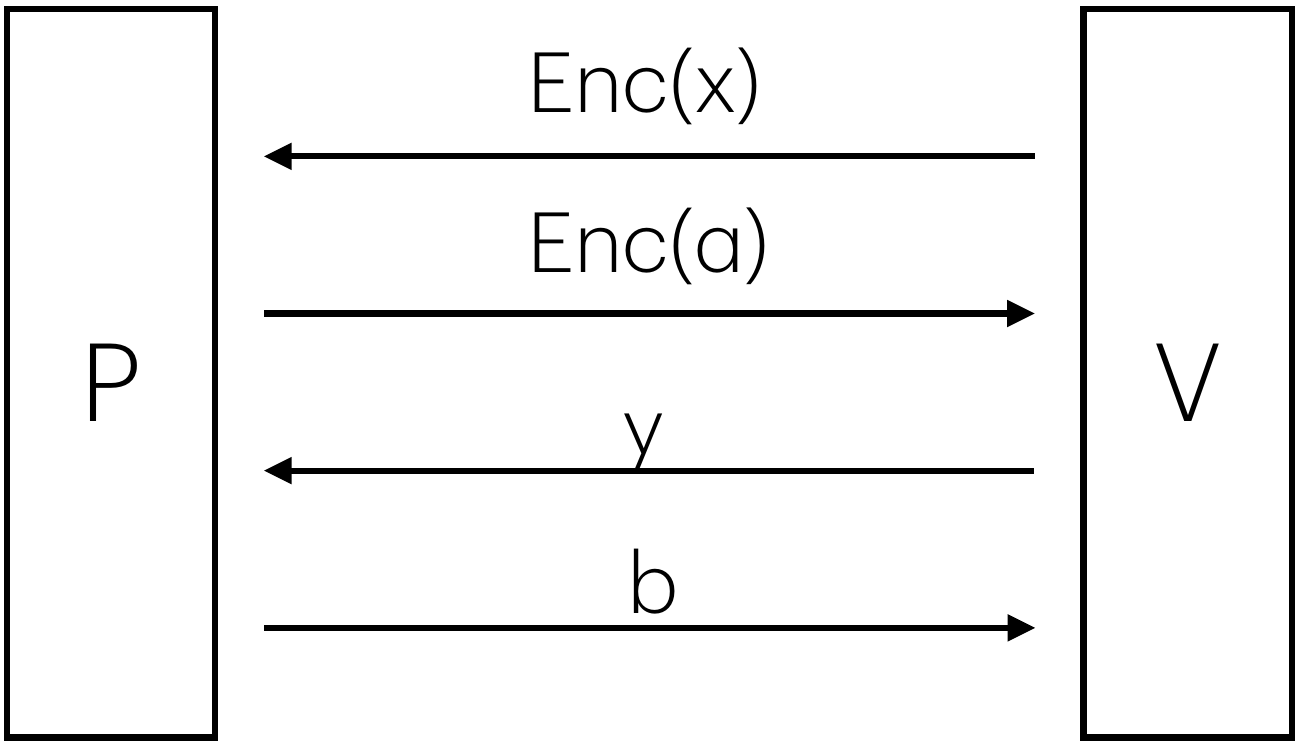}
\caption{Two-player nonlocal game}
\label{sfig:nonlocal_game}
\end{subfigure}
\hfill
\begin{subfigure}[b]{0.3\textwidth}
    \includegraphics[height=2.75cm]{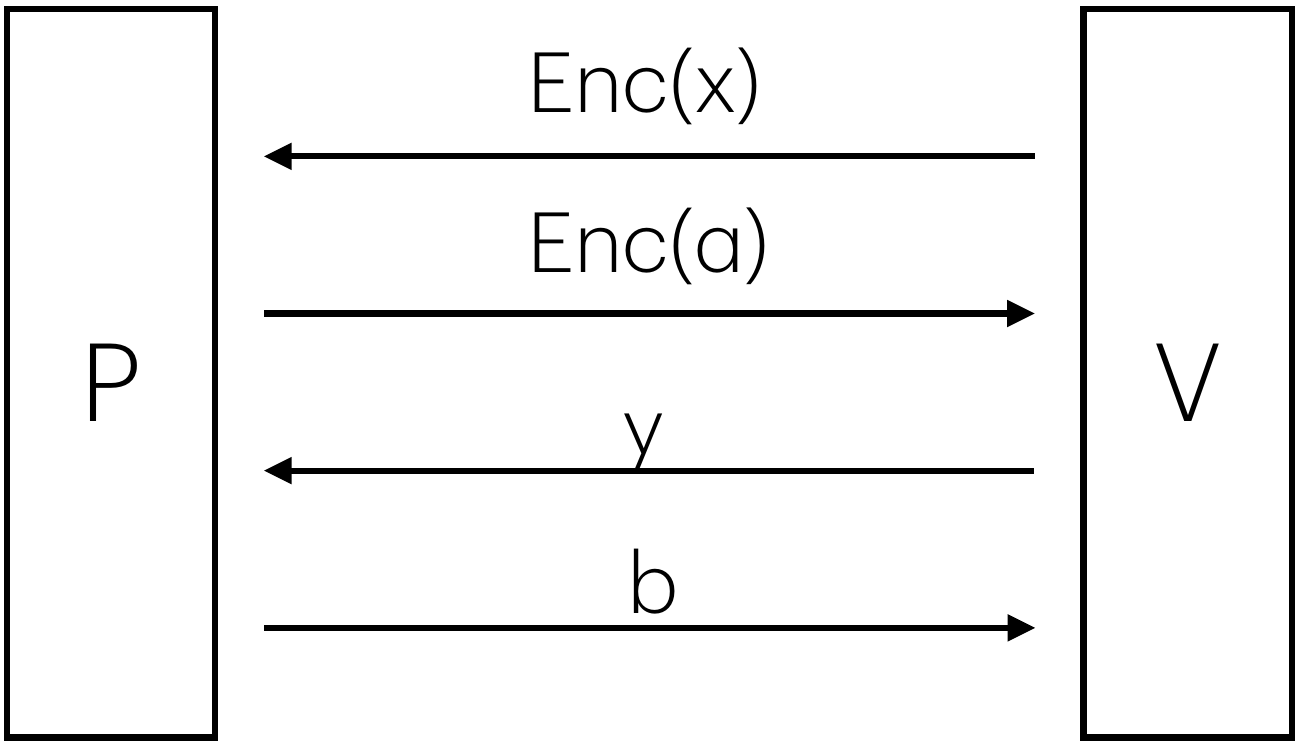}
    \caption{Compiled nonlocal game~\cite{klvy}}
    \label{sfig:comp_game}
\end{subfigure}
\hfill
\begin{subfigure}[b]{0.3\textwidth}
    \includegraphics[height=2.75cm]{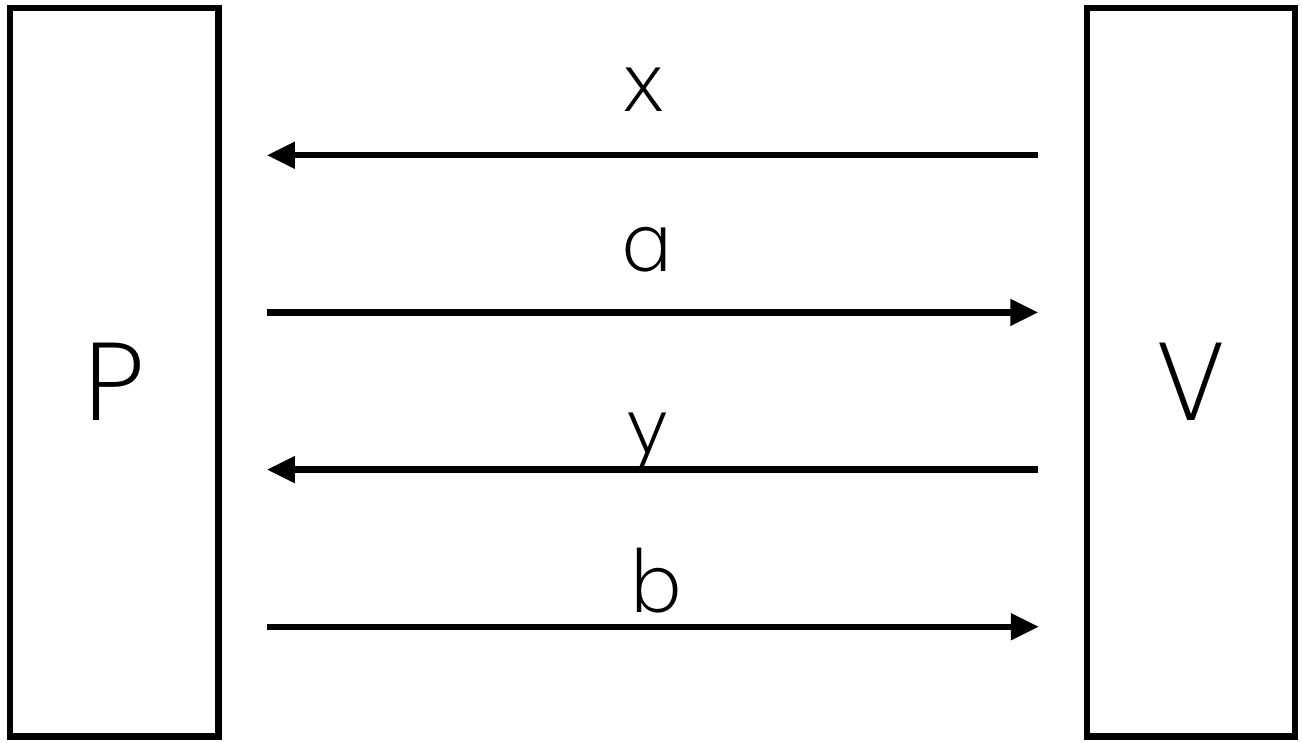}
    \caption{Sequential game without cryptography}
    \label{sfig:seq_game}
\end{subfigure}
\hfill
\caption{\cref{sfig:nonlocal_game} depicts a conventional nonlocal game between a verifier $\mathsf{V}$ and non-communicating computationally unbounded quantum provers $\mathsf{A}$ and $\mathsf{B}$. \cref{sfig:comp_game} depicts a compiled game with verifier $\mathsf{V}$ and a single computationally bounded quantum prover~$\mathsf{P}$. The security of the encryption scheme ensures that post measurement states following the first round of communication are computationally indistinguishable. \cref{sfig:seq_game} depicts a sequential game, which is an idealized compiled game, with a verifier $\mathsf{V}$ and a single quantum prover $\mathsf{P}$. In a sequential game we assume that all post-measurement states after the first round of communication are identical.}
\label{fig}
\end{figure}

However, establishing \emph{quantum soundness}, i.e., an upper bound on the success probability that can be achieved by QPT strategies for the compiled game, has proven to be a more difficult task. Recent works have achieved a bound only for special classes of games, such as the CHSH game~\cite{natarajan2023bounding}, the class of XOR games~\cite{xor,baroni2024quantum}, tilted-CHSH scenarios~\cite{mehta2024self}, and for self-tests on Pauli measurements on maximally entangled states~\cite{metger2024succinct}. As a consequence of these works, compiled nonlocal games paved the way for a conceptually elegant and modular way to perform verification of BQP and QMA computations with a classical verifier.
Despite this progress, establishing a general bound on the quantum value that applies to \emph{all} compiled nonlocal games remained \emph{elusive}.




\subsection{Main results}\label{sub:main_results}
In this work, we make progress on this fundamental open problem by showing that the quantum value of the compiled game is always upper-bounded by the quantum commuting operator value of the nonlocal game, for any arbitrary two-player nonlocal game.
To achieve this we also introduce some new~conceptual~tools.

\begin{theorem}\label{thm:main1}
For large enough security parameter~$\lambda$, no QPT strategy can win the compiled game with probability exceeding the quantum commuting operator value of the game by any constant.
\end{theorem}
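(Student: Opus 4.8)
The plan is to argue by contradiction in the asymptotic regime of the security parameter. Suppose that for some fixed constant $\epsilon > 0$ there is a family of QPT prover strategies $\{\mpP_\lambda\}_\lambda$ winning the compiled game with probability at least $\omega_{qc}(\mcG) + \epsilon$ for infinitely many $\lambda$; passing to a subsequence, we may assume this holds for all $\lambda$ in an infinite index set. For each such $\lambda$ I would first unpack $\mpP_\lambda$ into its sequential components: an initial pure state $\ket{\psi_\lambda}$ on a finite-dimensional space $\mcH_\lambda$; the round-one operation, which --- acting on $\ket{\psi_\lambda}$ together with the received ciphertext, composed with the verifier's decryption --- induces an instrument $\{\mcE^x_{a,\lambda}\}_a$ with post-measurement states on a register $\mcH'_\lambda$; and the round-two POVMs $\{B^y_{b,\lambda}\}_b$ applied by the prover after receiving $y$ in the clear. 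By construction the verifier's acceptance probability is exactly $\mbE_{(x,y)} \sum_{a,b} p_\lambda(a,b\mid x,y)\, V(x,y,a,b)$, where $V$ is the predicate of $\mcG$ and $p_\lambda(a,b\mid x,y) := \tr\bigl(B^y_{b,\lambda}\,\mcE^x_{a,\lambda}(\ketbra{\psi_\lambda}{\psi_\lambda})\bigr)$.

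The second step is to extract non-signaling from the computational assumption. Since the answer $a$ is produced before $y$ is revealed, the correlation $p_\lambda$ is automatically non-signaling from $y$ to $a$. For the reverse direction, observe that the reduced state of the prover after round one, namely $\sigma^x_\lambda := \sum_a \mcE^x_{a,\lambda}(\ketbra{\psi_\lambda}{\psi_\lambda})$, is precisely the output of a QPT algorithm run on an encryption of $x$, while the round-two POVM $\{B^y_{b,\lambda}\}_b$ is itself implemented in QPT. Hence semantic security of the \QHE scheme forces $\bigl|\tr(B^y_{b,\lambda}\sigma^x_\lambda) - \tr(B^y_{b,\lambda}\sigma^{x'}_\lambda)\bigr| = \negl(\lambda)$ for all $x,x',y,b$ --- otherwise a distinguisher that runs the prover's two rounds and guesses from the outcome $b$ would break the scheme. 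Equivalently, $p_\lambda$ is $\negl(\lambda)$-non-signaling from $x$ to $b$ as well.

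The third step, which is the technical heart, is to pass to the limit $\lambda \to \infty$ at the level of operators rather than merely of correlations; this is unavoidable, since a generic non-signaling correlation can far exceed $\omega_{qc}(\mcG)$. After dilating each instrument (Stinespring), the strategy of $\mpP_\lambda$ becomes a genuine sequential measurement process built from a unit vector, isometries, and projective measurements on $\mcH_\lambda$; these generate a representation of a \emph{fixed} universal $C^*$-algebra $\msA$ --- there are only finitely many question and answer symbols --- together with a state $\tau_\lambda$ that encodes $p_\lambda$. Taking a weak-$*$ accumulation point $\tau$ of $\{\tau_\lambda\}_\lambda$ on $\msA$ (equivalently, an ultralimit along a non-principal ultrafilter) and passing to the GNS representation, I would obtain a von Neumann algebra carrying a state that realizes an \emph{exactly} non-signaling sequential quantum strategy whose correlation $p^\star = \lim_\lambda p_\lambda$ still satisfies $\mbE_{(x,y)} \sum_{a,b} p^\star(a,b\mid x,y)\, V(x,y,a,b) \geq \omega_{qc}(\mcG) + \epsilon$, the $\negl(\lambda)$ corrections having vanished in the limit. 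I would then invoke the sequential characterization of quantum commuting operator correlations --- which I would establish separately --- namely that every exactly non-signaling sequential quantum correlation belongs to $C_{qc}$. This yields $p^\star \in C_{qc}$, hence $\omega_{qc}(\mcG) + \epsilon \le \omega_{qc}(\mcG)$, the desired contradiction.

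The main obstacle is the interplay of the last two steps. One must carry out the asymptotic limit so that it simultaneously (i)~upgrades computational indistinguishability, which at each finite $\lambda$ holds only against efficient tests, into an exact information-theoretic non-signaling identity in the limiting operator-algebraic object, and (ii)~retains enough of the sequential structure --- the instrument, the post-measurement states, and the fixed order of the two measurements --- for the sequential characterization to apply. Establishing that characterization, i.e.\ that a non-signaling sequential quantum process can be dilated to a genuine commuting-operator model, is itself a substantial ingredient and is the component flagged as being of independent interest.
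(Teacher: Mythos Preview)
Your overall architecture matches the paper's: extract a sequential strategy from the QPT prover, use a universal $C^*$-algebra and weak-$*$ compactness to take a limit as $\lambda\to\infty$, and then invoke a sequential characterization of $C_{qc}$. However, there is a genuine gap in how the pieces fit together, and it is precisely the point you flag as the main obstacle.

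In your second step you only establish that $\bigl|\tr(B^y_{b,\lambda}\sigma^x_\lambda) - \tr(B^y_{b,\lambda}\sigma^{x'}_\lambda)\bigr|$ is negligible, i.e.\ computational non-signaling against the \emph{individual} POVM elements $B^y_{b,\lambda}$. But the sequential characterization you want to invoke (the paper's \cref{thm:quantum strong nonsig c*}) does \emph{not} say that every correlation-level non-signaling sequential quantum strategy lies in $C_{qc}$ --- that is false, since it would force $C_{qc}=C_{ns}$. What it says is that a \emph{strongly} non-signaling strategy lies in $C_{qc}$, where ``strongly'' means the post-round-one states $\sigma^x$ agree as functionals on the entire $C^*$-algebra generated by $\{B^y_{b}\}$, equivalently on every noncommutative polynomial $P(\{B^y_{b}\})$. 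To obtain this in the limit you must show, for each fixed polynomial $P$, that $\bigl|\tr\bigl(\sigma^x_\lambda\,P(\{B^y_{b,\lambda}\})\bigr) - \tr\bigl(\sigma^{x'}_\lambda\,P(\{B^y_{b,\lambda}\})\bigr)\bigr|=\negl(\lambda)$. This is the paper's \cref{prop:polysecurity}, and it does not follow formally from semantic security applied to a single round-two measurement; one has to argue that any fixed-degree polynomial in the $B^y_{b,\lambda}$ admits a QPT block encoding (via the linear-combination and product constructions of Gily\'en et al.), so that the corresponding observable is itself an efficient distinguisher. Without this step, your weak-$*$ limit gives a sequential strategy whose post-round-one states agree only on the generators $B_{yb}$, which is strictly weaker than strong non-signaling and insufficient to conclude membership in $C_{qc}$.

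A secondary point: your proposed universal algebra $\msA$ is meant to encode both Alice's dilated instruments and Bob's POVMs. The paper makes a cleaner choice --- it works only with the Bob POVM algebra $\msAPOVM^{\mcI_B,\mcO_B}$ and encodes Alice's side as a family of positive linear functionals $\phi^\lambda_{xa}(\cdot)=\tr(\sigma^\lambda_{xa}\,\vartheta_\lambda(\cdot))$ on that algebra, one for each $(x,a)$. This sidesteps the question of what universal relations Alice's instruments should satisfy, and makes the Banach--Alaoglu step and the Radon--Nikodym construction of commuting Alice POVMs completely transparent. Your approach of packing Alice's Stinespring dilations into $\msA$ is not obviously wrong, but it is more cumbersome and you would still need the polynomial-level indistinguishability above to recover strong non-signaling for the limiting state.
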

\noindent
In other words, we show that for any two-player nonlocal game~$\mathcal{G}$ and for any quantum polynomial-time (QPT) strategy~$S$ for the compiled game $\mcGcomp$,
\begin{align*}
  \limsup_{\lambda\to\infty} \omega_\lambda(\mcGcomp,S) \leq \omega_{qc}(\mcG),
\end{align*}
where $\omega_\lambda(\mcGcomp,S)$ denotes the winning probability of the strategy~$S$ as a function of the security parameter~$\lambda$ (\cref{def:comp_game}).
Thus, our theorem establishes quantum commuting operator soundness for all compiled two-player nonlocal games.
The proof of \cref{thm:main1} follows directly from \cref{thm:compiled-game-bound,cor:comp_val_bound}.


At a technical level, our proof combines methods from operator algebras in the context of cryptographic protocols. Our proof builds on two technical ingredients, that might be of independent interest:
\begin{enumerate}[noitemsep,label={(\roman*)}]
\item a \emph{sequential} characterization of quantum commuting operator correlations, motivated in an idealization of the security guarantee offered by the KLVY compiler~\cref{sfig:seq_game}, and
\item the idea of analyzing cryptographic protocols \emph{with computational security} by taking the limit of the security parameter to infinity to obtain a protocol that offers \emph{information-theoretic security}.
\end{enumerate}
Our characterization generalizes a result in~\cite{NCPV12} 
that describes the spatial quantum correlations in terms of so-called ``quansal'' correlations.
It also solves an open problem in the context of relating Bell and prepare-and-measure scenarios and strategies, characterizing the image of~$C_{qc}$ under the mapping defined in~\cite{wright2023invertible}. 
The characterization of commuting operator correlations was also established independently in the context of steering in~\cite{banacki2023steering}.

It is instructive to compare our general bound with prior works on compiled nonlocal games that established quantum soundness in special cases -- first for the CHSH game and then for general XOR games~\cite{natarajan2023bounding,xor,baroni2024quantum}.
Although these results appear to give tighter bounds on the compiled value, bounding it by the quantum value~$\omega_q$ rather than the commuting operator value~$\omega_{qc}$, this is not in fact the case -- by a result of Tsirelson~\cite{Tsirelson87}, we have~$\omega_q = \omega_{qc}$ for the class of XOR games!
This coincidence is also apparent in the SOS proof techniques in~\cite{natarajan2023bounding,xor,baroni2024quantum}, which naturally bound the commuting operator value rather than the quantum one, but so far have resisted generalization to general games.

Self-testing is a powerful primitive in many applications of nonlocal games, as it allows inferring the prover's strategy from the observed correlations (up to unobservable degrees of freedom).
It has both theoretical and practical significance.
For example, a self-testing result for the compiled CHSH game was key ingredient for BQP verification~\cite{natarajan2023bounding}.

While self-testing is conventionally defined in terms of local dilations~\cite{mayers2004self}, this is not suitable for the commuting operator setting.
Following \cite{paddock2023operator}, we say a nonlocal game~$\mcG$ is a \emph{commuting operator self-test} if any optimal commuting operator strategy gives rise to the same expectation values, for all polynomials in the POVM elements.
The latter can be conveniently captured in terms of a state on the max-tensor product of abstract POVM algebras (\cref{def:com selftest}).
We can then prove the following asymptotic self-testing result in the compiled setting:

\begin{theorem}\label{thm:main2}
Let $\mcG$ be a commuting operator self-test.
If a QPT prover wins the compiled game with probability tending to the commuting operator value, then for any question-answer pair for Alice the expectation values of all polynomials in Bob's POVMs are uniquely determined as~$\lambda\to\infty$.
\end{theorem}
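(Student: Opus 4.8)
The plan is to leverage the machinery already developed for \cref{thm:main1}, observing that the proof of the value bound (\cref{thm:compiled-game-bound,cor:comp_val_bound}) does not merely produce a numerical inequality: it extracts, from any QPT prover for $\mcGcomp$, an honest commuting operator object for $\mcG$ in the limit $\lambda\to\infty$. Concretely, I would first isolate from that proof the statement that, along a suitable subsequence of security parameters, the relevant expectation values computed in the compiled game converge to $\phi(a\ot b)$, where $\phi$ is a state on the max-tensor product $\mcA\ot_{\max}\mcB$ of the abstract POVM algebras of Alice and Bob (\cref{def:com selftest}) and $a,b$ range over polynomials in the respective POVM elements. The sequential characterization of commuting operator correlations (ingredient (i)) is precisely what lets Bob's measurement on the post-first-round state of the compiled prover be organized into such a commuting datum, and the pass-to-the-limit argument (ingredient (ii)) is what makes $\phi$ a genuine state rather than an approximate one. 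So the first step is to record that \cref{thm:compiled-game-bound} yields not only $\limsup_\lambda\omega_\lambda(\mcGcomp,S)\le\omega_{qc}(\mcG)$ but also the convergence of these ``local'' expectation values, and that the commuting operator value of the extracted strategy equals $\lim_\lambda\omega_\lambda(\mcGcomp,S)$ (taking $a\ot b$ to range over the monomials of the winning predicate).

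Given this, the argument is short. Suppose the QPT prover wins with probability tending to $\omega_{qc}(\mcG)$. Pass to any subsequence along which the extracted state $\phi$ exists, using weak-$*$ compactness of the state space of $\mcA\ot_{\max}\mcB$. By the previous paragraph the value of the commuting operator strategy determined by $\phi$ equals $\lim_\lambda\omega_\lambda(\mcGcomp,S)=\omega_{qc}(\mcG)$, so $\phi$ is the state of an \emph{optimal} commuting operator strategy; since $\mcG$ is a commuting operator self-test, $\phi$ is the unique optimal state $\phi^\star$ on $\mcA\ot_{\max}\mcB$. As every subsequential limit equals the same $\phi^\star$, the full sequence converges: for every question--answer pair $(x,a)$ of Alice and every polynomial $q$ in Bob's POVM elements, the compiled-game expectation of $q$ jointly with Alice answering $a$ on question $x$ tends to $\phi^\star(A^x_a\ot q)$ as $\lambda\to\infty$. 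In particular, whenever $\phi^\star(A^x_a\ot\Id)>0$, the expectation of $q$ conditioned on Alice's pair $(x,a)$ tends to $\phi^\star(A^x_a\ot q)/\phi^\star(A^x_a\ot\Id)$, a quantity independent of the prover, which is the asserted uniqueness.

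The main obstacle lies entirely in the first step: one must verify that the extraction underlying \cref{thm:compiled-game-bound} is robust enough to track arbitrary polynomials in the POVM elements, not just the single polynomial encoding the winning predicate. This should follow from the way that proof is structured --- the state on $\mcA\ot_{\max}\mcB$ is built first, and the value bound is obtained by evaluating the fixed winning functional on it --- but it requires checking that every intermediate approximation and limit in that argument is stated at the level of the full algebra rather than of the one functional. A secondary point is the interchange of the $\lambda\to\infty$ limit with the normalization by $\phi^\star(A^x_a\ot\Id)$ in the conditional formulation: this is harmless for pairs $(x,a)$ of non-negligible weight, and for the remaining pairs one simply states the conclusion in terms of the unnormalized joint expectations $\phi^\star(A^x_a\ot q)$, which are determined unconditionally. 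Finally, I would double-check that the sequential characterization correctly matches ``Bob measures the post-measurement state'' in $\mcGcomp$ with the second tensor factor $\mcB$ acting through a commuting representation, so that the max-tensor-product state is indeed the right receptacle for the limiting data.
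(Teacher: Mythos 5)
Your proposal is correct and follows essentially the same approach as the paper's proof of \cref{thm:com self test}: a subsequence argument together with the compactness/extraction machinery from \cref{thm:compiled-game-bound} yields a strongly non-signaling algebraic strategy attaining $\omega_{qc}(\mcG)$ along any subsequence, and the sequential characterization of $C_{qc}$ (via the GNS construction and Radon--Nikodym theorem, which is \cref{lem:translate algebra}) converts it into an optimal commuting operator strategy whose state on $\msAPOVM^{\mcI_A,\mcO_A}\ot_{\max}\msAPOVM^{\mcI_B,\mcO_B}$ is forced to equal the unique self-tested state~$\Psi$, so that all subsequential limits agree and the full sequence converges. The only minor imprecision is that the Banach--Alaoglu compactness is applied to positive functionals on Bob's POVM algebra $\msAPOVM^{\mcI_B,\mcO_B}$ alone (the $\sigma_{xa}^\lambda$ are states, not Alice POVM elements living in a fixed algebra), and the state on the full max-tensor product is only obtained \emph{after} the Radon--Nikodym step builds the commuting Alice operators; your discussion of ingredient~(i) indicates you understand this, so it is a matter of phrasing rather than a gap.
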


\noindent
In other words, we show that if $\lim_{\lambda\to\infty} \omega_\lambda(\mcGcomp,S) = \omega_{qc}(\mcG)$ then
\begin{align*}
    \lim_{\lambda\to\infty} \tr\mleft( \sigma_{xa}^{\lambda} \, P(\{B^{\lambda}_{yb}\}) \mright)
\end{align*}
is uniquely determined by the optimal commuting operator strategy, for every~$x\in\mcI_A,a\in\mcO_A$ and for every noncommutative polynomial~$P$.
Here, $\sigma_{xa}^\lambda$ denotes the state of the QPT strategy after the interactions corresponding to question~$x$ and answer~$a$, and~$B_{yb}$ are the POVM elements for question~$y$ and answer~$b$ applied subsequently (cf.\ \cref{fig}).
The proof of \cref{thm:main2} follows from the second statement in \cref{thm:com self test}.

While the preceding general discussion is necessarily somewhat abstract, its consequences are concrete.
For example, the CHSH game is a commuting operator self-test and it is known that in any optimal strategy the Bob observables anti-commute on the state.
Thus our theorem shows that in any asymptotical optimal strategy, it holds that $\lim_{\lambda\to\infty} \tr\mleft( \sigma_{xa}^{\lambda} \, \{B^\lambda_0,B^\lambda_1\}^2 \mright) = 0$, where~$B^\lambda_x := B^\lambda_{x0} - B^\lambda_{x1}$.
In this way we recover as a special case a version of the self-testing result established and used to great effect in~\cite{natarajan2023bounding}.

\subsection{Technical outline}


To set some context for our results, let us first recall the basics of the KLVY compiler \cite{klvy}.
Given a two-player nonlocal game $\mcG$ the KLVY transformation turns it into a single-player compiled game $\mcGcomp$ as follows:
\begin{itemize}
    \item The referee sends to the player an encryption of what would be Alice's question $\xi = \mathsf{Enc}(x)$, under a QHE scheme.
    \item The prover homomorphically computes an answer and returns the corresponding ciphertext $\alpha= \mathsf{Enc}(a)$ to the referee.  (In the honest case, this is Alice's encrypted answer)
    \item The referee sends Bob's question $y$ in the plain.
    \item The player returns an answer $b$. (In the honest case, this is Bob's answer)
\end{itemize}
It is clear that the quantum value of the compiled game is at least the quantum value of $\mcG$, since the player can run the optimal strategy for $\mcG$, by the homomorphic properties of QHE. The more challenging aspect is to prove an \emph{upper bound} on the quantum value of $\mcGcomp$, which is the focus of this work.


Let us start with a simple observation about the compiled game.
Since Bob's question is only revealed \emph{after} Alice's question was asked and answered, the marginal distribution of Alice's answer must be independent of Bob's question. In other words, any player's strategy must always be \emph{non-signaling} from Bob to Alice.

For the other direction, we would ideally like to say that the same holds, because of the security of the encryption scheme, which leaks no information about Alice's question to the player. Unfortunately, this is not quite true and we rather have a weaker form of \emph{computational non-signaling}, i.e.,  the state of the player is indistinguishable from a (possibly different) state that is truly independent of Alice's question. To make this more precise, let $\lambda$ be the security parameter, $\sigma_x^\lambda, \sigma_{x^\prime}^\lambda$ be the states after Alice's computation for question $x$ and $x^\prime$, respectively, and let $\{B_{yb}^\lambda\}$ be the POVM element corresponding to the player's behavior in the last two rounds (Bob's computation) for question $y$ and answer $b$.
Then computational
non-signaling says that:
\[
\tr(\sigma_x^\lambda B_{yb}^\lambda) \approx \tr(\sigma_{x^\prime}^\lambda B_{yb}^\lambda)
\]
up to negligible factors. In fact, we can push this observation one step further, by noting that the parallel/sequential combination of efficient algorithms is once again efficient, and thus the above guarantee should also hold for such algorithms. In more details, block encodings of efficient POVMs behave well under linear combination and multiplication allowing us to make a stronger statement about computational non-signaling.
We can show that for all (constant-degree) polynomials $P = P(\{B_{yb}\})$ of non-commuting variables $\{B_{yb}\}$, we have:
\[
\tr(\sigma_{x}^\lambda P(\{B_{yb}^\lambda\})) \approx \tr(\sigma_{x^\prime}^\lambda P(\{B_{yb}^\lambda\})).
\]
We refer to this property as \emph{computationally strong non-signaling}. By generalizing results from \cite{natarajan2023bounding,xor} from observables to polynomials of POVMs, we can prove that the security of the QHE scheme guarantees that the strategies in the compiled game are computationally strong non-signaling. The formal result is stated in \cref{prop:polysecurity}.

To make sense of why this generalized property is useful, let us first consider an idealized scenario without encryption (and hence no security parameter $\lambda$), where we have an exact equality $\tr(\sigma_{x} P(\{B_{yb}\})) = \tr(\sigma_{x^\prime} P(\{B_{yb}\}))$. In other words, we consider an (unrealistic and) idealized situation of a \emph{sequential game} (cf.~\cref{fig}) that is exactly \emph{strongly non-signaling}, i.e., where the player's state after the first two rounds is independent of $x$. Dealing with the additional negligible factor will be an important part of the challenge, but we will come back to it at a later point in this overview.

We claim that any strategy for the sequential game that is guaranteed to be (exact) strongly non-signaling, is in fact a commuting operator strategy in the corresponding nonlocal game. We show a sketch of this proof, for the simplified case of finite-dimensional strategies. This will not be sufficient for our main result (for reasons that will become clear later), but it is useful to gain some intuition on why this statement is actually true.

By definition of strong non-signaling, we have that the (mixed) states $\sigma_x$ after the first two rounds of interaction are independent of $x$, meaning that those states are all the same. Let $\sigma := \sigma_x$ for all $x$ denote this state. Let further $\sigma_{xa}$ be the (subnormalized) state acting on a finite-dimensional Hilbert space $\mcH$ after the first two rounds of interaction, where $x$ is Alice's question and $a$ is Alice's answer. Consider an arbitrary purification $\ket{\psi_{xa}}$ of this state, we consequently obtain that:
\[
\ket{\psi_x} = \sum_{a\in \mcO_A} \ket{a} \ot \ket{\psi_{xa}}
\]
where $\mcO_A$ is the set of possible Alice answers. Since these states are all purifications of the same state $\sigma$, by Uhlmann's theorem, they are all related by a unitary operation acting only on Alice's register. We can use this fact to construct a quantum correlation as follows: Let $\ket{\psi_{x_0}}$ as above, for some fixed $x_0 \in \mcO_A$, be the joint state shared between Alice and Bob. While the Bob operator stays the same as in the sequential game, Alice first applies a unitary $U_x$ to switch to the purification $\ket{\psi_x}$ and then applies the projection $P_a = \ket{a}\bra{a}$.
This gives us a quantum correlation:
\begin{align*}
    p(a,b|x,y) = \tr(\sigma_{xa}B_{yb}) = \bra{\psi_{x_0}} U_xP_aU_x^* \ot B_{yb}\ket{\psi_{x_0}}
\end{align*}
which is identical to the correlation in the sequential game. The formal result is stated in \cref{prop:fin dim seq,prop:quantum strong nonsig spatial}.

As hinted above, this proof is not sufficient for us, precisely because the correlations are not \emph{exactly} non-signaling, but rather \emph{computationally} non-signaling. One may wishfully hope that, taking the limit of the security parameter to infinity, one gets rid of the extra negligible function, allowing to execute the above strategy. Unfortunately, this does not appear to be the case, since Bob POVM's do not necessarily generate the whole algebra of bounded operators on a Hilbert space, and therefore the Hilbert space at its limiting point may not even be well-defined.
Instead, what we rather get is that $\sigma_x$ and $\sigma_{x^\prime}$ are indistinguishable for polynomials of Bob POVM elements, i.e., it holds that $\tr(\sigma_xP(\{B_{yb}\})) = \tr(\sigma_{x^\prime}P(\{B_{yb}\}))$ for all Alice questions $x,x^\prime$. What then \emph{does} hold, is that Bob's POVMs (and in particular non-commutative polynomials of Bob's POVMs) generate a \emph{subalgebra} of the algebra of bounded operators on some Hilbert space. In operator theory language, this is a so-called \emph{$C^*$-algebra}.

Among other things, the framework of $C^*$-algebras offers a model of states and observables in quantum mechanics.
In our context, this means that every strategy of the compiled game can be captured by $C^*$-algebras: The first two rounds by partial states $\phi_{xa}$ on a $C^*$-algebra $\msB$, and the last two rounds by POVM elements $B_{yb}$ in $\msB$.
Such an \emph{algebraic strategy} naturally gives rise to correlations $p(a,b|x,y) = \phi_{xa}(B_{yb})$.
The definition of strong non-signaling also extends to the algebraic setting by requiring that the states $\phi_x := \sum_{a \in \mcO_A} \phi_{xa}$ on the $C^*$-algebra $\msB$ are the same for every Alice question $x$, i.e. $\phi_x = \phi_{x^\prime}$. Note that, for the algebra of bounded operators on a Hilbert space $\msB = \mcB(\mcH)$ we recover the previous strongly non-signaling definition.

Guided by this intuition, we can infer that the right statement to prove for the sequential game is that strongly non-signaling \emph{algebraic} strategies characterize the commuting operator correlations.
We show this by relying on two fundamental concepts from the theory of operator algebras:
\begin{itemize}
\item[(i)] To realize an abstract $C^*$-algebra concretely on a Hilbert space we use the \emph{GNS construction}.
\item[(ii)] To put together the commuting operator correlation on the Hilbert space we get from applying the GNS construction, we use the \emph{Radon-Nikodym theorem for $C^*$-algebras} which acts like the purification in the finite-dimensional proof we saw in the beginning.
\end{itemize}




What is left to be done is to move from the idealized setting to the real setting, where the strong non-signaling property holds only up to a negligible (vanishing) summand in the security parameter $\lambda$. In other words, we can only rely on the fact that,  for all $x, x^\prime$, the states:
\[
\sigma^\lambda_x \approx \sigma^\lambda_{x^\prime}
\]
are computationally indistinguishable.
Note that these are operators on different Hilbert spaces for different~$\lambda$, and the same is true for the POVM elements~$\{B^\lambda_{yb}\}$ that correspond to the last two rounds.
To obtain our main result, we have to analyze what happens when one takes the limit as~$\lambda\to\infty$, which we achieve by incorporating mathematical tools from operator algebras.
%

For any QPT strategy, we have a sequence $\{\omega_\lambda\}_{\lambda}$ of winning probabilities for the compiled game $\mathcal{G}_{comp}$, indexed by the security parameter. Since the interval $[0,1]$ is (sequentially) compact and it holds $\omega_\lambda\in [0,1]$ for all $\lambda$, we obtain that this sequence of winning probabilities has a convergent subsequence. Our goal is to prove that no QPT strategy can win the compiled game with probability exceeding the quantum commuting operator value of the game by any constant. To show this result, we have to bound the limit of any convergent subsequence by the quantum commuting operator value. While we get a convergent subsequence from the compactness of the interval $[0,1]$, this is not enough to obtain our bound: The values $\omega_\lambda$ do not keep track of the states and POVM's that are present in the QPT strategy. Therefore, we have to use a more advanced compactness argument to achieve our goal.

Specifically, we show that the essential part of the prover's strategies can be captured by a sequence of states on a single \emph{universal~$C^*$-algebra}, and use a compactness argument to prove the existence of a limiting state. We want to work with the $C^*$-algebra $\msA_\text{POVM}^{\mcI_B, \mcO_B}$ where $\mcI_B$ and $\mcO_B$ are the corresponding finite question and answer sets of Bob. The algebra is generated by elements $\{B_{yb}\}_{y \in\mcI_B, b \in \mcO_B}$ satisfying the relations of a POVM for every $y$ (and in particular the set of non-commutative polynomials of Bob's POVMs is dense in this algebra).
Its most important feature is the following \emph{universal property}: For any collection of POVMs $\{B_{yb}^\lambda\}$ acting on some Hilbert space $\mcH^\lambda$, there is a unique $*$-homomorphism: \[\theta_\lambda: \msA_\text{POVM}^{\mcI_B, \mcO_B} \to \msB(\mcH^\lambda)\] that maps $B_{yb} \mapsto {B_{yb}^\lambda}$ for all $y \in \mcI_B$ and $b \in \mcO_B$.
Recall that the correlations of a QPT strategy are given by $p_{\lambda}(a,b|x,y)=\tr(\sigma_{xa}^\lambda B_{yb}^{\lambda})$ for some subnormalized states $\sigma_{xa}^\lambda$ and POVM's $B_{yb}^{\lambda}$ on some Hilbert space $\mcH^\lambda$. Now, since $\msA_\text{POVM}^{\mcI_B, \mcO_B}$ has the universal property described above, we can define subnormalized states on the algebra by: \[\phi_{xa}^\lambda: \msA_\text{POVM}^{\mcI_B, \mcO_B}\to \C
\quad\text{s.t.}\quad \phi_{xa}^\lambda(\cdot) = \tr(\sigma_{xa}^\lambda \theta_\lambda(\cdot)).\] Note that all maps $\phi_{xa}^\lambda$ have the same domain now, allowing us to use compactness arguments. Since the norm of each such $\phi_{xa}^\lambda$ is bounded by $1$, we can apply the Banach-Alaolgu theorem -- a central compactness theorem in functional analysis -- to show that the set of those $\phi_{xa}^\lambda$ is sequentially compact in the weak-$*$ topology. Therefore any sequence $\{\phi_{xa}^\lambda\}_{\lambda}$ has a convergent subsequence $\{\phi_{xa}^{\lambda_k}\}_{k}$ that converges pointwise to a functional~$\phi_{xa}$.
It can be verified easily that $\phi_x := \sum_{a \in \mcO_A} \phi_{xa}$ are indeed states of the $C^*$-algebra.
The limiting states can be then shown to \emph{precisely} satisfy strong non-signaling, i.e. $\phi_x = \phi_{x^\prime}$! Since the limiting correlation is strongly non-signaling, we deduce from the previous paragraph that the limiting correlation is a commuting operator correlation.
Therefore, the winning probability of the limiting correlation of any subsequence of $\{\omega_\lambda\}_\lambda$ can at most be the commuting operator value of the nonlocal game.
This concludes the proof of our main result.

The line of reasoning in our proof is reminiscent of proofs of completeness for noncommutative optimization hierarchies such as the NPA hierarchy~\cite{navascues2008convergent}.
It is interesting that it also applies to problems in computationally-secure cryptography, and we expect that it can find more uses in this setting.


We conclude this overview with some additional observations on self-testing.
Since $\omega_{q}(\mcG)=\omega_{qc}(\mcG)$ holds for many nonlocal games, our result enables self-testing theorems in the compiled setting. It turns out that our characterization of quantum commuting correlations (\cref{thm:quantum strong nonsig c*}) already implies such results, if we work with an abstract self-testing definition. In the Hilbert space formulation, a self-test is a nonlocal game for which any optimal quantum strategy is the same as a chosen ideal strategy, up to local isometries. The spirit of this definition can be translated to the abstract $C^*$-algebra world where a nonlocal game is called a commuting operator self-test if any commuting operator strategy that achieves the quantum commuting value determines the same state on the $C^*$-algebra $\mcA_\text{POVM}^{\mcI_A, \mcO_A} \ot_{\max} \mcA_\text{POVM}^{\mcI_B, \mcO_B}$ -- the (maximal) tensor product of the algebras generated by the Alice and Bob POVMs.

We prove an asymptotic self-testing statement for the compiled nonlocal game using the abstract self-test formulation. Let $\mcG$ be a nonlocal game that is a commuting operator self-test with corresponding unique state $\Psi$ on $\mcA_\text{POVM}^{\mcI_A, \mcO_A} \ot_{\max} \mcA_\text{POVM}^{\mcI_B, \mcO_B}$. Consider a strongly non-signaling algebraic strategy with subnormalized states $\phi_{xa}$ and POVMs $\{\tilde B_{yb}\}$ that achieves the quantum commuting value. It holds that for every noncommutative polynomial $P$, we have:
\[
    \phi_{xa}(P(\{\tilde B_{yb}\})) = \Psi(A_{xa} \ot P(\{B_{yb}\})).
\]
This is proven by combining the GNS construction with the fact that the nonlocal game is a commuting operator self-test. For every QPT strategy in the corresponding compiled game $\mcGcomp$ with probability tending to the quantum commuting value, we therefore get that:
\begin{align*}
    \lim_{\lambda\to\infty} \tr\mleft( \sigma_{xa}^{\lambda} \, P(\{B^{\lambda}_{yb}\}) \mright)
 = \Psi(A_{xa} \ot P(\{B_{yb}\})).
\end{align*}
This shows uniqueness of the corresponding expectation values, since the state $\Psi$ is unique.

\subsection{Open problems and outlook}
Our main theorem shows that no QPT strategy can win the compiled game $\mcGcomp$ with probability exceeding $\omega_{qc}(\mcG)$ by any constant.
Since we do not know how to compile general commuting operator strategies, there might be a tighter upper bound.
Is~$\omega_{qc}(\mcG)$ the true answer, or is it instead the quantum value~$\omega_q(\mcG)$?
Since compiled games are fundamentally finite-dimensional, the former might seem implausible.
But recall the result by Ozawa~\cite{ozawa2013tsirelson} that describes~$C_{qc}$ as the limit of \emph{approximately commuting} finite-dimensional quantum strategies as the commutation error approaches zero; cf.~\cite{coudron2015interactive}.
Since QPT strategies for compiled games give rise to finite-dimensional strategies that similarly relax an information-theoretic property (strong non-signaling instead of commutation), this makes it not completely implausible to consider the tantalizing possibility that $\omega_{qc}(\mcG)$ might be the correct answer. On the other hand, while the QHE scheme allows for compiling tensor product strategies, there is no straightforward way to compile approximately commuting strategies for the KLVY approach. If it turns out that there is no way in doing so, the quantum value~$\omega_q(\mcG)$ might be an upper bound for the winning probability of any QPT strategy for the compiled game.  

Our results characterize the limit~$\lambda\to\infty$, but can one establish a quantitative bound?
For example, does it converge faster than any polynomial in~$\lambda$?
This is known for compiled XOR games~\cite{xor} and would be useful for applications.
We speculate that proving this for general games would require substantially different techniques.
A particular obstruction is the conjectured undecidability of the gapped quantum commuting operator value problem~\cite{MNY21}.

Classical soundness was shown for nonlocal games with $k\geq 2$ players \cite{klvy}, quantum soundness was shown for CHSH, respectively XOR games, for $k=2$ players \cite{natarajan2023bounding,xor,baroni2024quantum} and
the soundness results in this paper are proven for all nonlocal games with $k=2$ players. We leave it for future work to generalize our soundness results for $k \geq 2$ players.
Furthermore, the KLVY compiler employs a QHE scheme and thus relies on strong cryptographic assumptions.
The security of the QHE scheme itself only plays a small part of the proof of quantum soundness of the KLVY compiler (\cref{lem:block ind}).
Thus, it seems plausible that our analysis will also work for other compilers with weaker cryptographic assumptions.

\subsection{Organization of the paper}
The remainder of the paper is organized as follows:
\Cref{sec:prelim} covers preliminary material.
\Cref{sec:nonlocal-games} provides an overview of nonlocal games, correlations, and the various values of those games in the spatially separated setting.
\Cref{sec:crypto-compiler} details the KLVY compiler and the description of a compiled nonlocal game, including a discussion of the value of a compiled nonlocal game, and proves a key technical result.
\Cref{sec:alternative-characterization} establishes our equivalent characterization of commuting operator correlations (as well as of classical and quantum correlations) in terms of strategies for a sequential game that satisfy a strong non-signaling property.
\Cref{sec:new-magic} contains the proof of our main result, which establishes the upper bound on the quantum value of a compiled nonlocal game, and the proof of our self-testing result that is obtained using the same techniques.

\section{Preliminaries}\label{sec:prelim}
In this section, we recap some preliminaries from mathematics and computer science as well as fix our notation and conventions.
\subsection{Vectors, operators, quantum mechanics}
Let $\mcH$ be a (possibly infinite-dimensional) Hilbert space.
Elements of $\mcH$ are denoted by $\ket{v}\in \mcH$.
The inner product $\braket\cdot\cdot$ is linear in the second argument and induces the norm~$\norm{v} = \sqrt{\braket v v}$.
We denote by~$\mcB(\mcH)$ the set of \emph{bounded (linear) operators} on~$\mcH$.
We let~$\Id$ denote the identity operator, and denote the \emph{adjoint} of an operator~$A\in \mcB(\mcH)$ by~$A^*$.
The norm on $\mcB(\mcH)$ is the \emph{operator norm}~$\norm A = \sup_{\norm v=1} \norm{A v}$.
For $A,B\in \mcB(\mcH)$ the \emph{commutator} is denoted~$[A,B]=AB-BA$.
The \emph{commutant} of a subset~$\mcS \subseteq \mcB(\mcH)$ is the set $\mcS' = \{ B\in \mcB(\mcH):[B,A]=0, \text{ for all }A\in \mcS\}$.

In quantum mechanics, physical systems are often identified with Hilbert spaces~$\mcH$, and the states of the system are identified with positive semidefinite operators~$\rho$ with unit trace, called \emph{density operators}.
A state is called \emph{pure} if the density operator has rank one, and otherwise it is called \emph{mixed}.
Any unit vector~$\ket v \in \mcH$ determines a pure state by the formula~$\rho = \ketbra v v$, and conversely any pure state can be written in this way, hence the two concepts are often identified.
The \emph{trace distance} is the statistical distance between the distributions associated with two density operators~$\rho$ and~$\sigma$ and is given by the formula $\frac12\norm{\rho-\sigma}_1=\frac12\tr(\abs{\rho-\sigma})$, where $\norm\cdot_1$ is the Schatten-$1$ norm and the absolute value of an operator is defined by $\abs A := \sqrt{A^*A}$.
A \emph{measurement} with a finite outcome set~$\mcO$ is described by a collection of bounded operators $\{A_a\}_{a\in \mcO}$ acting on~$\mcH$ such that~$\sum_{a\in \mcO} A_a^*A_a=\Id$.
If the system is in state~$\rho$, then the probability of obtaining outcome~$a$ is given by $p(a)=\tr(A_a^* A_a \rho)$, after which the state of the system is described by~$A_a \rho A_a^* / p(a)$.
The probabilities of measurement outcomes only depend on the operators~$M_a := A_a^* A_a$.
A collection of operators $\{M_a\}_{a\in\mcO}$ such as these which satisfy~$\sum_{a\in \mcO} M_a = \Id$ is called a \emph{POVM}, which is short for positive operator-valued measure, with outcomes in~$\mcO$.
Any POVM arises from a measurement.
\emph{Observables} are self-adjoint elements $B=B^*\in \mcB(H)$, and their \emph{quantum expectation value} with respect to the state~$\rho$ is given by $\tr(\rho B)$.
This can be related to the preceding if one takes~$\mcO$ to be the set of eigenvalues of~$B$ (assuming it is finite) and~$A_a$ as the corresponding spectral projections.
We will often discuss apparatuses with multiple measurement settings, labeled by some index set~$\mcI$, but the same set of outcomes~$\mcO$ for each setting.
This will be denoted by~$\{\{M_{xa}\}_{a\in \mcO} : x\in \mcI\}$, where~$\{M_{xa}\}_{a\in \mcO}$ is a POVM (or measurement) with outcomes in~$\mcO$ for each~$x\in\mcI$.
We often abbreviate and write this as~$\{M_{xa}\}_{a\in \mcO,x\in \mcI}$ when clear from context.

\subsection{Algebras and representations}\label{sec:operator algebra prelims}
An algebra~$\msA$ over the complex numbers is called a \emph{$*$-algebra} if it is equipped with an antilinear involution, which for an element~$A\in\msA$ will always be denoted by~$A^*$, such that~$(AB)^* = B^*A^*$ for all~$A, B \in \msA$.
In this work, every algebra we consider is unital, meaning it contains an identity element~$\Id$.
A \emph{$C^*$-algebra}~$\msA$ is a $*$-algebra that is complete with respect to a submultiplicative norm~$\norm{\cdot}$ that satisfies the $C^*$-identity $\norm{A^*A} = \norm A^2$ for all~$A\in\msA$.
Examples to keep in mind are $\mcB(\mcH)$ and any $*$-subalgebra of it that is closed with respect to the operator norm, with the adjoint and operator norm as defined above.
A more abstract example will be introduced in \cref{sec:new-magic} and serve as a key ingredient to the proof of our main result.
The commutant~$\mcS'$ of any subset~$\mcS = \mcS^* \subseteq \mcB(\mcH)$ is always a $C^*$-algebra (it is even a von Neumann algebra).
An element~$A \in \msA$ is called \emph{positive}, denoted $A \geq 0$, if it can be written in the form~$A = B^*B$ for some~$B\in\msA$.
It is called a \emph{contraction} if $\norm A \leq 1$; when $A$ is positive this can also be stated as~$A \leq \Id$.
A \emph{positive linear functional} on a $C^*$-algebra~$\msA$ is a linear functional $\phi\colon\msA\to\C$ such that~$\phi(A)\geq 0$ whenever~$A\geq 0$.
Positive linear functionals are always bounded: it holds that $\norm\phi=\phi(\Id)$.
Given positive linear functionals $\phi,\psi$, we write $\phi \leq \psi$ to denote that $\phi(A) \leq \psi(A)$ for all $A \geq0$.
A \emph{state} on a $C^*$-algebra $\msA$ is a positive linear functional that is also \emph{unital}, meaning that $\phi(\Id)=1$.

The formalism of $C^*$-algebras generalizes the usual formalism of quantum mechanics outlined above.
For example, any density operator~$\rho$ acting on a Hilbert space~$\mcH$ gives rise to a state $\phi(\cdot) = \tr(\cdot \rho)$ on the $C^*$-algebra $\msA = \mcB(\mcH)$.
The other concepts of quantum mechanics generalize verbatim.
For example, a measurement on~$\msA$ consists of elements~$\{A_a\}_{a \in \mcO} \subseteq \msA$ such that $\sum_a A_a^* A_a = \Id$, and so forth.
The \emph{Gelfand-Naimark-Segal (GNS) construction} shows that, conversely, the abstract world of $C^*$-algebras can always be realized concretely on a Hilbert space.
It asserts that for every $C^*$-algebra~$\msA$ and state~$\phi\colon\msA\to \C$, there exist
a Hilbert space~$\mcH_\phi$,
a $*$-homomorphism $\pi_\phi \colon \msA\to \mcB(\mcH_\phi)$, and
a unit vector $\ket{\nu_\phi} \in \mcH_\phi$
such that
\[ \phi(A) = \bra{\nu_\phi} \pi_\phi(A) \ket{\nu_\phi} \]
for all $A\in \msA$.
Moreover, $\ket{\nu_\phi}$ is cyclic (meaning $\overline{\pi_\phi(\msA) \ket{\nu_\phi}} = \mcH_\phi$) and thereby uniquely determined.
We call $(\mcH_\phi,\pi_\phi,\ket{\nu_\phi})$ a \emph{GNS triple} associated with~$\phi$.
For more information on $C^*$-algebras, we refer the reader to \cite{blackadar2006operator}.


Finally, we recall a result that applies to any normed vector space, but which we will only use for $C^*$-algebras~$\msA$.
The \emph{Banach--Alaoglu theorem} asserts that the unit ball in the dual space, $\{ \phi \colon \msA \to \C : \norm\phi \leq 1 \}$, is compact in the weak-$^*$ topology.
When~$\msA$ is separable, this unit ball is even sequentially compact in this topology, which concretely means the following:
if $\{\phi_n\}_{n\in\N}$ is a sequence of functionals such that $\norm{\phi_n} \leq 1$ for all~$n\in\N$, then there exists a subsequence $\{\phi_{n_k}\}_{k\in\N}$ and a functional~$\phi$ such that $\lim_{k\to\infty} \phi_{n_k}(A) = \phi(A)$ for all $A\in\msA$.


\subsection{Classical and quantum computing}
A function $f \colon \N \to \R$ is called \emph{negligible} if for every $k \in \N$ there exists a $n_0 \in \N$ such that for every $n \geq n_0$ it holds that $f(n) \leq n^{-k}$.
The sum of two negligible functions is negligible.
Unless stated otherwise, numbers are encoded as bitstrings using their binary representation.
To encode a number in unary representation, we use the notation~$1^n$ which refers to the bitstring of length~$n$ that only consists of ones.
We use the notation $x \leftarrow \mu$ to denote that~$x$ is drawn from a probability distribution~$\mu$, and~$x \leftarrow \mcA(y)$ to indicate that~$x$ is obtained by running an algorithm~$\mcA$ with input~$y$.

A \emph{probabilistic polynomial-time (PPT) algorithm} can be described by a probabilistic Turing machine with a polynomial time bound, meaning that there exists a polynomial~$p$ such that for every input $x\in\{0,1\}^*$ the machine halts after at most~$p(\abs x)$ steps.

For quantum computations, we will use the quantum circuit model.
Here, computations correspond to the application of \emph{quantum circuits}, which are unitary operators that operate on the Hilbert space $\mcH = (\C^2)^{\ot k}$ of some number~$k$ of qubits and are given by the composition of unitary gates that each act nontrivially only on (for definiteness) one or two qubits (taken from some fixed universal gate set).
The size of a quantum circuit is the number of gates used in the computation (we assume all qubits are acted upon by at least one gate).
The qubits are typically split into input qubits and auxiliary qubits, which are assumed to be initialized in the $\ket0$ state, unless stated otherwise.
If a classical outcome is desired, a subset of the qubits is measured after the unitary circuit has been applied.
A \emph{quantum polynomial-time (QPT) algorithm} consists of a family of quantum circuits~$\{C_\lambda\}_{\lambda\in\N}$ and a deterministic polynomial-time Turing machine that on input~$1^\lambda$ outputs a description of~$C_\lambda$.
We can often interpret~$\lambda$ as a problem size or as a security parameter.

Any PPT algorithm can be converted into a QPT algorithm (with~$C_\lambda$ a quantum circuit with~$\lambda$ input qubits that when given as input~$\ket x$ and if a suitable number of qubits is measured, implements the same behavior as the PPT algorithm on any bitstring~$x$ of length~$\abs x = \lambda$).

\section{Nonlocal games and strategies}\label{sec:nonlocal-games}
In this section, we briefly review nonlocal games along with the definitions of classical, quantum, and (quantum) commuting operator strategies, correlations, and values for these games.
We also review the definition of non-signaling correlations.
Readers familiar with these concepts may proceed directly to \cref{sec:crypto-compiler}.

\subsection{Nonlocal games}
In the following, let $\mcI_A,\mcI_B,\mcO_A$, and $\mcO_B$ be finite sets, $\mu\colon\mcI_A\times \mcI_B\to \R_{\geq 0}$ be a probability distribution, and $V\colon \mcO_A\times \mcO_B \times \mcI_A\times \mcI_B \to \{0,1\}$ be a function.

\begin{definition}\label{def:nonlocal_game}
A \emph{(two-player) nonlocal game} is a tuple~$\mcG=(\mcI_A,\mcI_B,\mcO_A,\mcO_B,\mu,V)$ describing a scenario consisting of non-communicating players, Alice and Bob, interacting with a referee.
In the game, the referee samples a pair of questions $(x,y)\in \mcI_A\times\mcI_B$ according to~$\mu$, sending question~$x$ to Alice and~$y$ to Bob.
Then, Alice (resp.\ Bob) returns answer~$a$ (resp.\ $b$) to the referee, who computes the rule function~$V$ on the question-answer 
pairs $(a,b,x,y)$ to determine if $V(a,b|x,y)=1$ they win, or $V(a,b|x,y)=0$ they lose.%
\footnote{We use the notation~$V(a,b|x,y)$ instead of~$V(a,b,x,y)$ to emphasize that this represents the value of answers~$a, b$ given questions~$x, y$.}
\end{definition}

All the information about the game $\mcG$ is available to the players before the game.
This allows them to decide on a strategy beforehand.
However, once the game begins the players are not allowed to communicate.
To the referee, the behavior of the players can be modeled by the probabilities~$p(a,b|x,y)$ of answers~$a,b$ given questions~$x,y$ as determined by the strategy.
The collection of numbers~$\{p(a,b|x,y)\}_{a\in\mcO_A,b\in\mcO_B,x\in \mcI_A,y\in \mcI_B} \in \R^{\mcO_A \times \mcO_B \times \mcI_A \times \mcI_B}$ is called a \emph{(bipartite) correlation}.
Thus, the probability of winning the game $\mcG$ under a strategy~$S$, with correlations~$p$, is given by
\begin{equation}\label{eq:omega G S and p}
    \omega(\mcG,S)
=   \omega(\mcG,p)
=   \sum_{x \in \mcI_A, y \in \mcI_B}\sum_{a \in \mcO_A, b \in\mcO_B}\mu(x,y)V(a,b|x,y)p(a,b|x,y).
\end{equation}
Observe that the winning probability of a strategy is simply a linear function of the corresponding correlation that it realizes.

\begin{remark}
Nonlocal games can also be viewed in the context of multiprover interactive proofs.
Here one thinks of the players as provers and the referee as a verifier in an interactive protocol for a language.
The winning probability of the game is the acceptance probability of the verifier.
\end{remark}

\subsection{Strategies and correlations}
One of the main purposes of nonlocal games was to explore the effect of entangled non-communicating players in contrast to classical players (players with no entanglement).
We start with the definition of the latter.

\begin{definition}\label{def:c strat}
A \emph{classical strategy} for a nonlocal game~$\mcG$ consists of
\begin{enumerate}[(i)]
    \item a probability distribution~$\gamma\colon\Omega\to \R_{\geq 0}$ on a (without loss of generality) finite probability space~$\Omega$, along with
    \item probability distributions $\{p_\omega(a|x) : x\in \mcI_A,\omega\in \Omega\}$ with outcomes in~$\mcO_A$ and $\{q_\omega(b|y):y\in \mcI_B,\omega\in \Omega\}$ with outcomes in~$\mcO_B$.
\end{enumerate}
A correlation $\{p(a,b|x,y)\}$ for which there is a classical strategy such that
\begin{equation*}
    p(a,b|x,y)=\sum_{\omega\in \Omega}\gamma(\omega) \, p_\omega(a|x) \, q_\omega(b|y),
\end{equation*}
for all $a\in \mcO_A,b\in \mcO_B,x\in \mcI_A,y\in \mcI_B$ is called a \emph{classical correlation}.
The set of classical correlations is denoted by $C_c(\mcI_A,\mcI_B,\mcO_A,\mcO_B)$ or simply as $C_c$ when the sets $\mcI_A,\mcI_B,\mcO_A,\mcO_B$ are clear from context. It is easy to see that $C_c \subseteq \R^{\mcO_A \times \mcO_B \times \mcI_A \times \mcI_B}$ is a closed convex subset.
\end{definition}

In quantum mechanics, spatially separated subsystems are often represented by the tensor product of Hilbert spaces~$\mcH_A$ and~$\mcH_B$.
The pure states of the joint system are the unit vectors~$\ket\psi\in \mcH_A\otimes \mcH_B$.
Furthermore, if $\{X_a\}_{a\in \mcO_A}$ and~$\{Y_b\}_{b\in \mcO_B}$ are POVMs on~$\mcH_A$ and~$\mcH_B$ respectively, then $\{X_a\otimes Y_b\}_{(a,b)\in \mcO_A\times \mcO_B}$ describes the joint measurement, with outcomes in~$\mcO_A\times \mcO_B$.
With this in mind, we can imagine the players in a nonlocal game to be quantum players described in this way.
The players start out sharing a joint quantum state and, as they are spatially separated and non-communicating once the game begins, any process by which they use the quantum resource in the game can be modelled by POVMs (which can depend on their given question) that the players employ to obtain their answers.
If we assume that the players have finite-dimensional Hilbert spaces at their avail, we arrive at the following definition of a quantum strategy for a nonlocal game.

\begin{definition}\label{def:q strat}
A \emph{quantum strategy} for a nonlocal game $\mcG$ consists of
\begin{enumerate}[(i)]
    \item finite-dimensional Hilbert spaces $\mcH_A$ and $\mcH_B$,
    \item a (without loss of generality) pure quantum state $\ket{\psi}\in \mcH_A\otimes \mcH_B$, along~with
    \item POVMs $\{\{M_{xa}:a\in \mcO_A\}: x\in \mcI_A\}$ acting on $\mcH_A$ and POVMs $\{\{N_{yb}:b\in \mcO_B\}: y\in \mcI_B\}$ acting on $\mcH_B$.
\end{enumerate}
A correlation $\{p(a,b|x,y)\}$ for which there exists a quantum strategy such that
\begin{equation*}
    p(a,b|x,y)=\bra\psi M_{xa} \otimes N_{yb} \ket\psi
\end{equation*}
for all $a\in \mcO_A,b\in \mcO_B,x\in \mcI_A,y\in \mcI_B$ is called a \emph{quantum correlation}.
The set of quantum correlations is denoted by~$C_q(\mcI_A,\mcI_B,\mcO_A,\mcO_B)$ or simply~$C_q$.
\end{definition}

Quantum strategies of particular interest are the \emph{entangled strategies}.
This is because if the state in the quantum strategy is \emph{unentangled}, then the resulting correlation is always classical.
It is easy to see that $C_c \subseteq C_q$, and the inclusion is in general strict, as follows from the existence of nontrivial Bell inequalities~\cite{chsh,Mermin,Peres}.

The restriction to finite-dimensional quantum systems is not the only model.
If one allows the Hilbert spaces~$\mcH_A$ and~$\mcH_B$ to be infinite-dimensional, one gets the set of \emph{spatial} quantum correlations~$C_{qs}$.
Clearly, $C_q \subseteq C_{qs}$, and both sets are convex subsets~$\R^{\mcO_A \times \mcO_B \times \mcI_A \times \mcI_B}$.
It turns out that in general, the inclusion is strict \cite{coladangelo2020inherently} and neither set is closed~\cite{Slofstra17,dykema2019non,coladangelo2020two,musat2020non,beigi2021separation}.
Moreover, both sets have the same closure, denoted by~$C_{qa}$ and named the set of \emph{quantum approximable correlations}.

Another assumption that is not always warranted is the tensor product structure $\mcH = \mcH_A \ot \mcH_B$ of the joint Hilbert space.
For instance, spatially separated quantum systems in quantum field theory need not correspond to a tensor product factorization, but are rather modelled mathematically by commuting subalgebras $\mathscr A, \mathscr B \subseteq \mathcal B(\mcH)$ of observables on a single joint Hilbert space~$\mcH$.
This perspective gives rise to the following class of strategies.

\begin{definition}\label{def:qc_strat}
A \emph{(quantum) commuting operator strategy}%
\footnote{These are also called quantum commuting strategies in parts of the literature.}
 for a nonlocal game $\mcG$ consists of
\begin{enumerate}[(i)]
    \item a Hilbert space $\mcH$,
    \item a (without loss of generality) pure quantum state $\ket{\psi}\in \mcH$, along with
    \item POVMs $\{\{M_{xa}:a\in \mcO_A\}: x\in \mcI_A\}$ and $\{\{N_{yb}:b\in \mcO_B\}: y\in \mcI_B\}$ acting on $\mcH$, such that $[M_{xa},N_{yb}]=0$ for all $a\in \mcO_A$, $b\in \mcO_B$, $x\in \mcI_A$, $y\in \mcI_B$.
\end{enumerate}
A correlation $\{p(a,b|x,y)\}$ for which there exists a commuting operator strategy such that
\begin{equation*}
    p(a,b|x,y) = \bra\psi M_{xa} N_{yb} \ket\psi
\end{equation*}
for all $a\in \mcO_A,b\in \mcO_B,x\in \mcI_A,y\in \mcI_B$ is called a \emph{commuting operator correlation}.
The set of \emph{commuting operator correlations} is denoted by~$C_{qc}(\mcI_A,\mcI_B,\mcO_A,\mcO_B)$~or~$C_{qc}$.
\end{definition}
The set of commuting operator correlations~$C_{qc} \subseteq \R^{\mcO_A \times \mcO_B \times \mcI_A \times \mcI_B}$ is always a closed convex subset~\cite{navascues2008convergent}.
Since every quantum strategy is a commuting operator strategy by properties of the tensor product, it follows that $C_{qa} \subseteq C_{qc}$.
When the Hilbert space $\mcH$ is finite-dimensional then any commuting operator strategy can also be seen as an ordinary quantum strategy, but in general, this is not so.
In fact, there exist commuting operator correlations which have no realization as a quantum strategy, and hence~$C_q\subsetneq C_{qc}$~\cite{Slofstra16}.
Whether the correlation sets~$C_{qa}$ and~$C_{qc}$ were the same became known as Tsirelson's Problem.
This was recently resolved in the celebrated work $\MIP^*=\RE$~\cite{mipre} by the construction of a correlation in~$C_{qc}$ with no realization in~$C_{qa}$.
In turn, this implied a negative resolution to Connes' Embedding Problem, following~\cite{Ozawa2013,fritz2012tsirelson,junge2011connes}.

Given that there are different models of physical correlations, it is interesting to ask for conditions that any correlation should satisfy so that it can reasonably be interpreted as a strategy of non-communicating players.
One such condition is known as the \emph{non-signaling} property: it asserts that the marginal distribution of either player's answers must be independent of the other player's question.
Non-signaling is easily verified to hold for all correlations defined so far.

\begin{definition}
A correlation $p(a,b|x,y)$ is \emph{non-signaling} if for all $x,x' \in \mcI_A$, $y \in \mcI_B$, and $b \in \mcO_B$, it holds that
\begin{align*}
    \sum_{a\in \mcO_A}p(a,b|x,y)&=\sum_{a\in \mcO_A}p(a,b|x',y)
\end{align*}
and moreover for all $x \in \mcI_A$, $y,y'\in \mcI_B$, and $a \in \mcO_A$, it holds that
\begin{align*}
    \sum_{b\in \mcO_B}p(a,b|x,y)&=\sum_{b\in \mcO_B}p(a,b|x,y').
\end{align*}
The set of \emph{non-signaling correlations} is denoted by~$C_{ns}(\mcI_A,\mcI_B,\mcO_A,\mcO_B)$~or~$C_{ns}$.
\end{definition}

To summarize, we have the following inclusion of convex sets, all of which are known to be strict in general:
\begin{align*}
    C_c \subsetneq C_q \subsetneq C_{qs} \subsetneq C_{qa} \subsetneq C_{qc} \subsetneq C_{ns}
\end{align*}
Moreover, $C_c, C_{qa}, C_{qc}, C_{ns}$ are all closed, while $C_q$ and $C_{qs}$ are in general not.

\subsection{Values of games}
One of the original motivations for studying nonlocal games was to understand when different types of strategies achieve different maximal winning probabilities, as this can provide separations between the various correlation sets.
To this end, one defines the value of a game as the highest winning probability for a given class of strategies or, equivalently, correlations.

\begin{definition}
Given a nonlocal game $\mcG$ and $\star \in \{c,q,{qc},{ns}\}$, we define
\begin{align*}
    \omega_\star(\mcG) = \sup_{p \in C_\star} \omega(\mcG, p).
\end{align*}
The quantity $\omega_c(\mcG)$ is called the \emph{classical value} of the game, $\omega_q(\mcG)$ its \emph{quantum value}, $\omega_{qc}(\mcG)$ its \emph{commuting operator value}, and~$\omega_{ns}(\mcG)$ its non-signaling value.
\end{definition}

Clearly, the value only depends on the closure of the corresponding set of correlations.
In particular, $\omega_q$ can equivalently be defined in terms of~$C_{qs}$ or~$C_{qa}$.
Given a nonlocal game~$\mcG$ it is immediate that
\[ \omega_c(\mcG) \leq \omega_q(\mcG) \leq \omega_{qc}(\mcG) \leq \omega_{ns}(\mcG). \]
However, there also exist games~$\mcG$ for which each of these inequalities is strict.
Indeed, the latter is equivalent to the statement that the closed convex sets $C_c \subsetneq C_{qa} \subsetneq C_{qc} \subsetneq C_{ns}$ are in general distinct, as discussed above.%
\footnote{Veritably, the existence of certain nonlocal games $\mcG$ is how several of the strict inclusions $C_c \subsetneq C_{qa} \subsetneq C_{qc} \subsetneq C_{ns}$ were established.}

\section{Compiled nonlocal games}\label{sec:crypto-compiler}
In this section we will review the construction from~\cite{klvy} that allows compiling any multi-player nonlocal game into a single-prover interactive protocol, along with the required cryptography.
We will then prove a technical result that will be key to our later analysis.
It states that as the security parameter tends to infinity, the average state of the prover after the first round of protocol becomes independent of the first round's challenge (question) in a precise sense (\cref{prop:polysecurity}).

\subsection{Quantum homomorphic encryption}
We now define the notion of a quantum homomorphic encryption scheme, which is the central component of the KLVY compiler.
For the purposes of their construction, only classical messages need to be encrypted, and all ciphertexts should be classical.
Moreover, one requires the capability to apply quantum circuits to homomorphically compute on such ciphertexts, and in addition to the classical input and output, these quantum circuits may also act on auxiliary input qubits (which are not encrypted).

Because in this work we do not discuss families of games and their interplay with the security parameter, we can assume that the set of allowed classical messages (which will later correspond to Alice's questions and answers) is a fixed finite set, independent of the security parameter.
We will denote this message set by~$\M$ and assume without loss of generality that it consists of bitstrings of some fixed length~$\ell\in\N$.
Similarly, we may assume that the set of allowed quantum circuits, which we denote by~$\mcC$, is a fixed (but possibly infinite) set independent of the security parameter.
Each circuit~$C \in \mcC$ takes as input some number~$\ell+a_C$ of input qubits, with the first~$\ell$ qubits corresponding to the classical message (encoded in the computational basis) and the remaining~$a_C$ qubits serving as the auxiliary input mentioned above.
In the following definition, we also denote by $\SK$ the set of classical secret keys and by $\CT$ the set of classical ciphertexts; both sets consist of bitstrings.

\begin{definition}\label{def:qhe}
Given sets of classical messages~$\M$ and of quantum circuits~$\mcC$ as above, a \emph{quantum homomorphic encryption scheme} is a tuple
\[ \QHE = \parens{ \Gen, \Enc, \braces{\Eval^C}_{C \in \mcC},\Dec } \]
consisting of algorithms with the following description:
\begin{itemize}
\item
\emph{Key generation:} $\Gen \colon \{1^\lambda\}_{\lambda\in\N} \to \SK$ is a QPT algorithm that takes as input the security parameter~$\lambda$ in unary, and returns a secret key.

\item
\emph{Encryption:}
$\Enc\colon \SK \times \M \to \CT$ is a QPT algorithm that takes as input a secret key and a message, and returns a ciphertext.

\item
\emph{Homomorphic evaluation:}
For every quantum circuit $C\in\mcC$, there is a QPT algorithm $\Eval^C \colon \CT \times (\C^2)^{\ot a_C} \to \CT$ that takes as input a ciphertext and a quantum register on $a_C$ qubits, and returns a ciphertext.

\item
\emph{Decryption:}
$\Dec: \SK \times \CT \to \M$ is a QPT algorithm that takes as input a secret key and a ciphertext, and returns a message.
\end{itemize}
We require that the following two properties hold:
\begin{itemize}
\item \emph{Correctness with auxiliary input:}
Recall that each circuit $C\in\mcC$ acts on a Hilbert space of the form $\mcH_M \ot \mcH_A$, where $\mcH_M = (\C^2)^{\ot\ell}$ and $\mcH_A = (\C^2)^{\ot a_C}$.
For every quantum circuit $C\in\mcC$, for every message~$m\in\M$, for every Hilbert space~$\mcH_B$, and for every quantum state~$\ket\psi_{AB} \in \mcH_A \ot \mcH_B$, there should be a negligible function~$\eta$ of the security parameter such that the states returned by the following two games have trace distance at most~$\eta(\lambda)$, for~all~$\lambda$:
\begin{itemize}
\item Game~1:
Apply $C_{MA} \ot \Id_B$ to $\ket m_M \ot \ket\psi_{AB}$.
Measure register~$M$ to obtain a bitstring~$m'$.
Return $m'$ and register~$B$.
\item Game~2:
Sample a key $\sk \leftarrow \Gen(1^\lambda)$ and encrypt using~$\ct \leftarrow \Enc(\sk, m)$.
Apply $\Eval^C(\ct,\cdot) \ot \Id_B$ to $\ket\psi_{AB}$ to obtain a ciphertext~$\ct'$.
Decrypt using $m' \leftarrow \Dec(\sk,\ct')$.
Return $m'$ and register~$B$.
\end{itemize}
\item \emph{Security against quantum distinguishers:}
For any QPT algorithm~$\mcA=\{\mcA_\lambda\}$ and any two messages~$m,m' \in \M$, there is a negligible function~$\eta$ such that
\begin{align*}
    \Biggl|
    &\Pr\mleft[ 1 \leftarrow \mcA_\lambda(\ct)^{\Enc(\sk,\cdot)} \ \,\middle\vert \begin{array}{l}\sk \leftarrow \Gen(1^\lambda) \\ \ct \leftarrow \Enc(\sk, m)\end{array} \mright] \\
    &\qquad-
    \Pr\mleft[ 1 \leftarrow \mcA_\lambda(\ct)^{\Enc(\sk,\cdot)} \ \,\middle\vert \begin{array}{l}\sk \leftarrow \Gen(1^\lambda) \\ \ct \leftarrow \Enc(\sk, m')\end{array} \mright]
    \Biggr|
    \leq \eta(\lambda)
\end{align*}
for all $\lambda$.
\end{itemize}
\end{definition}

It follows from~\cite{klvy,natarajan2023bounding} that the quantum fully homomorphic encryption schemes of~\cite{qfhe,brakerskiqfhe} can be used to define QHE schemes in the sense of the above definition (note that we require correctness only for a single~$C\in\mcC$ at a time, as the security parameter tends to infinity).

We allow all subroutines to be QPT even if they only have classical input and output.
This is not important for our result and only makes it stronger, since we prove a bound that applies to any such scheme.
We also note that while the $\Eval^C$ algorithms and the correctness with auxiliary input property are required to describe the KLVY compiler and prove its correctness, they have no relevance to our result.

The security property demands that no adversary described by a QPT algorithm%
\footnote{QPT algorithms as defined in \cref{sec:prelim} are a uniform notion.
A stronger requirement is security against \emph{non-uniform} QPT quantum adversaries, in which case one can also hope to get stronger conclusions.
This is indeed the case and we return to this point in \cref{rem:non-uniform} below.}
can distinguish between the encryption of any two fixed messages, with non-negligible probability, even when given access to an encryption oracle.
This in fact implies a (seemingly stronger) security property, called \emph{parallel repeated IND-CPA security}, where the adversary can choose the two messages and also receives a polynomial number of ciphertexts~\cite{natarajan2023bounding}.

\subsection{The KLVY compiler}
We now describe the compiler of \cite{klvy}.
It takes as its input a nonlocal game (\cref{def:nonlocal_game}) and a QHE scheme (\cref{def:qhe}).
We assume from here onwards that the question and answer sets of the game are encoded as bitstrings of some fixed length.

\begin{definition}[\cite{klvy}]\label{def:comp_game}
Consider a nonlocal game~$\mcG=(\mcI_A,\mcI_B,\mcO_A,\mcO_B,\mu,V)$ and a quantum homomorphic encryption scheme~$\QHE=(\Gen,\Enc,\{\Eval^C\}_{C\in\mcC},\Dec)$ with message set~$\M \supseteq \mcI_A \cup \mcO_A$.
The corresponding \emph{compiled nonlocal game}~$\mcGcomp$ describes an interactive protocol between a verifier and a prover exchanging classical messages.
They get as input the security parameter, encoded in unary, and proceed as follows:
\begin{enumerate}[1.]
\item The verifier samples a question pair $(x,y) \leftarrow \mu$ and a secret key $\sk \leftarrow \Gen(1^\lambda)$.
They encrypt Alice's question by~$\xi \leftarrow \Enc(\sk,x)$ and send the classical ciphertext~$\xi$ to the prover.
\item The prover replies with some classical message~$\alpha$.
\item The verifier sends~$y$ unencrypted to prover.
\item The prover replies with another classical message~$b$.
\item The verifier interprets~$\alpha$ as a ciphertext and decrypts it as~$a \leftarrow \Dec(\sk,\alpha)$.
They accept if~$a\in\mcO_A$, $b \in \mcO_B$, and $V(a,b|x,y)=1$.
\end{enumerate}
\end{definition}

We only described the compiled version of a two-player nonlocal game, which is the focus of the present work, but the compiler generalizes straightforwardly to any game with $k$~players (in which case $2k$ rounds of communication are required)~\cite{klvy}.

In the compiled game, the verifier plays the role of the referee and the prover plays the role of both Alice and Bob.
In analogy to the nonlocal game, we will denote by~$\omega_\lambda(\mcGcomp,S)$ the probability that the verifier accepts for a given value of the security parameter~$\lambda\in\N$ when interacting with a prover described by a strategy~$S=\{S_\lambda\}$, where~$S_\lambda$ denotes the strategy for fixed~$\lambda$.
Using~\eqref{eq:omega G S and p}, this can also be written as
\begin{equation}\label{eq:omega lambda}
    \omega_\lambda(\mcGcomp,S)
=   \omega(\mcG,p_\lambda)
=   \hspace{-.4cm}\sum_{x \in \mcI_A, y \in \mcI_B}\sum_{a \in \mcO_A, b \in\mcO_B}\hspace{-.4cm} \mu(x,y)V(a,b|x,y)p_\lambda(a,b|x,y),
\end{equation}
where~$p_\lambda(a,b|x,y)$ denotes the probability that the prover's first reply under~$S_\lambda$ decrypts to~$a$ and that their second reply is~$b$, conditional on question pair~$(x,y)$.

Since a single prover plays the role of both Alice and Bob, this appears to be in stark contraction to the no-communication requirement of nonlocal games.
The intuition that the compiled game can still be meaningful is as follows: because we use encryption for Alice's part but not for Bob's, the prover should not be able to usefully ``correlate'' the two messages, and might therefore be forced to act like a pair of non-communicating players.
Because the security of the cryptographic scheme only applies to efficient adversaries, this can only be true if we similarly constrain the prover's computational power.

Just like in the case of the nonlocal games, there are different scenarios, depending on whether we consider classical or quantum provers.
Here we focus on the quantum scenario, since \cite{klvy} already proved that no classical efficient prover can exceed the classical value of the nonlocal game.
The following definition describes the behavior of an efficient quantum prover in the compiled setting, analogously to \cref{def:q strat} in the nonlocal setting.

\begin{definition}\label{def:qpt prover}
A \emph{QPT strategy}~$S=\{(V_\lambda,W_\lambda)\}_{\lambda \in \N}$ for a compiled game~$\mcGcomp$ consists of two QPT algorithms~$\{V_\lambda\}_{\lambda\in\N}$ and~$\{W_\lambda\}_{\lambda\in\N}$.
It describes a quantum prover that behaves as follows:
\begin{enumerate}[1.]
\item When receiving the ciphertext~$\xi\in\CT$, the prover applies~$V_\lambda$ to~$\ket\xi$ along with a suitable number of $\ket0$ states.
They then measure a suitable number of qubits and respond with the measurement outcome~$\alpha$.
\item When receiving the question~$y\in\mcI_B$, the prover applies~$W_\lambda$ to~$\ket y$ along with the post-measurement state of the preceding step.
They again measure a suitable number of qubits and respond with the measurement outcome~$b$.
\end{enumerate}
\end{definition}

This definition is perhaps more precise, but also more cumbersome to work with than the notation used in prior works, which instead described QPT strategies by families~$\{(\mcH_\lambda,\ket{\psi_\lambda},\{A^\lambda_{\xi\alpha}\},\{B^\lambda_{yb}\})\}_{\lambda\in\N}$, consisting of
\begin{itemize}
    \item Hilbert spaces~$\mcH_\lambda$,
    \item states~$\ket{\psi_\lambda} \in \mcH_\lambda$,
    \item measurement operators of the form~$A^\lambda_{\xi\alpha} = U^\lambda_{\xi\alpha} P^\lambda_{\xi\alpha}$, where all~$U^\lambda_{\xi\alpha}$ are unitaries on~$\mcH_\lambda$ and $\{P^\lambda_{\xi\alpha}\}_{\alpha\in\CT}$ is a projective measurement for any~$\xi\in\CT$,
    \item POVMs or projective measurements~$\{B^\lambda_{yb}\}_{b\in\mcO_B}$ for each~$y\in\mcI_B$,
\end{itemize}
subject to QPT assumptions that are less straightforward to state than above.
The relation is immediate: we take $\ket{\psi_\lambda}$ to be the all-zeros state on a suitable multi-qubit Hilbert space~$\mcH_\lambda$ (but it can be any state that can be prepared by a QPT algorithm), the projective measurements~$P^\lambda_{\xi\alpha}$ correspond to the first part of \cref{def:qpt prover}, the unitaries~$U^\lambda_{\xi\alpha}$ can be taken as the identity,%
\footnote{This is without loss of generality: the unitaries~$U^\lambda_{\xi\alpha}$ can always be absorbed into the second POVM, as~$\ket\alpha$ is part of the post-measurement state and we can also keep a copy of~$\ket\xi$ in it.}
and the operators~$B^\lambda_{yb}$ correspond to the second part of \cref{def:qpt prover}, that is,
\begin{align}\label{eq:def B lambda}
    B^\lambda_{yb} = (\bra y \ot \Id) W_\lambda^* (\ketbra b b \ot \Id) W_\lambda (\ket y \ot \Id).
\end{align}
We emphasize that unlike in the nonlocal case, there are \emph{no} commutation conditions imposed on the operators~$A_{\xi\alpha}^\lambda$ and~$B_{yb}^\lambda$, nor is there any tensor product structure of the Hilbert spaces~$\mcH_\lambda$.
Using this notation, the probabilities $p_\lambda$ in~\eqref{eq:omega lambda} take the following form for a QPT strategy:
\begin{align*}
    p_\lambda(a,b|x,y)
&= \hspace{-0.3cm}\expect_{\sk\leftarrow\Gen(1^\lambda)} \expect_{\xi\leftarrow\Enc(\sk,x)} \sum_{\alpha\in\CT} \Pr(a\leftarrow\Dec(\sk,\alpha))
\, \bra{\psi^\lambda} \, (A_{\xi\alpha}^\lambda)^*B_{yb}^\lambda A_{\xi\alpha}^\lambda\ket{\psi^\lambda}
\end{align*}
Introducing positive (semidefinite) operators
\begin{align}\label{eq:sigma_xa}
  \sigma_{xa}^\lambda =
    \hspace{-0.3cm}\expect_{\sk\leftarrow\Gen(1^\lambda)} \expect_{\xi\leftarrow\Enc(\sk,x)} \sum_{\alpha\in\CT} \Pr(a\leftarrow\Dec(\sk,\alpha))
  \, A_{\xi\alpha}^\lambda\ketbra{\psi^\lambda}{\psi^\lambda} (A_{\xi\alpha}^\lambda)^*,
\end{align}
the correlations can also be written as
\begin{align}\label{eq:p lambda qpt prover}
    p_\lambda(a,b|x,y)
&= \tr(\sigma^\lambda_{xa} B_{yb}^\lambda).
\end{align}
We note that $\tr(\sigma_{xa}^\lambda)$ is the probability that the first part of the prover replies with a ciphertext that decrypts to~$a\in\mcO_A$ when given an encryption of~$x\in\mcI_A$, and~$\sigma_{xa}^\lambda / \tr(\sigma_{xa}^\lambda)$ is its post-measurement state in this case.

As in the nonlocal case, we can also define the value of a compiled game by taking the maximum (supremum) over all possible QPT strategies.
Any quantum strategy~$S = \{S_\lambda\}$ gives rise to a \emph{sequence} of acceptance probabilities~$\{\omega_\lambda(\mcGcomp,S)\}$ as in \cref{eq:omega lambda}.
If we would like to define a single value then there are at least two natural definitions.

\begin{definition}\label{def:q values compiled game}
Then we associate to a compiled game~$\mcGcomp$ the following \emph{minimal and maximal quantum values}:
\begin{align*}
    \omega_{q,\min}(\mcGcomp) &= \sup \braces*{ \liminf_{\lambda\to\infty} \omega_\lambda(\mcGcomp,S) \ \middle|\ S=\{S_\lambda\} \text{ a QPT strategy for } \mcGcomp }, \\
    \omega_{q,\max}(\mcGcomp) &= \sup \braces*{ \limsup_{\lambda\to\infty} \omega_\lambda(\mcGcomp,S) \ \middle|\ S=\{S_\lambda\} \text{ a QPT strategy for } \mcGcomp }.
\end{align*}
\end{definition}

Both quantities are meaningful.
A bound of the form~$\omega_{q,\min}(\mcGcomp) \geq \theta$ shows that efficient quantum provers are able to \emph{achieve} an acceptance probability arbitrarily close to~$\theta$, while $\omega_{q,\max}(\mcGcomp,S) \leq \theta$ means that no quantum prover can \emph{exceed} this acceptance probability by any constant, for large enough security parameter.
Clearly, $\omega_{q,\min}(\mcGcomp) \leq \omega_{q,\max}(\mcGcomp)$.

Any strategy for the nonlocal game can be converted into a prover for the compiled game by using the homomorphic evaluation functionality of the encryption (assuming it supports evaluating the necessary quantum circuits).
Thus, for every quantum strategy~$S$ for the nonlocal game~$\mcG$, there exists a QPT strategy~$\Scomp=\{\Scomp^\lambda\}$ and a negligible function~$\eta$ such that~$\omega_\lambda(\mcGcomp,\Scomp) \geq \omega(\mcG,S) - \eta(\lambda)$ for all~$\lambda\in\N$.
This is one half of the main result~\cite[Thm.~3.2]{klvy}, and it implies that in particular~$\omega_{q,\min}(\mcGcomp) \geq \omega_q(\mcG)$.
The other half of their theorem states that efficient classical provers cannot exceed the classical value~$\omega_c(\mcG)$ by a non-negligible amount, as already mentioned earlier.

\begin{remark}\label{rem:non-uniform}
In this section, we consider quantum strategies that are defined in terms of (uniform) QPT algorithms, in line with the prior works~\cite{natarajan2023bounding,xor} and to emphasize that all reductions that will be discussed in the following are uniform as well.
In cryptography, one can also model adversaries by \emph{non-uniform} QPT algorithms, as mentioned earlier, and one can similarly define non-uniform QPT strategies.
It is easy to see that all our results hold verbatim for such strategies, provided the QHE scheme is secure against non-uniform QPT adversaries.

We note that while in this setting the appropriate definition of~$\omega_{q,\max}$ is by optimizing over non-uniform QPT strategies,
$\omega_{q,\min}$ is still most naturally defined in terms of uniform QPT strategies since this is the appropriate notion for an honest prover to achieve a desired functionality.
\end{remark}

\subsection{Asymptotic security for any noncommutative polynomial}
In the compiled game the prover gets Bob's question after giving Alice's answer.
This implies that the correlations~$p_\lambda(a,b|x,y)$ are necessarily non-signaling \emph{from Bob to Alice}, i.e.\ $\sum_{b \in \mcO_B} p_\lambda(a,b|x,y) = \sum_{b \in \mcO_B} p_\lambda(a,b|x,y')$ for all $x\in\mcI_A$ and~$y,y'\in\mcI_B$.
Unlike in the nonlocal case, however, it is not a priori clear to which extent these correlations are non-signaling \emph{from Alice to Bob}.
However, the security property of the QHE scheme readily implies that the post-measurement states~\eqref{eq:sigma_xa} are computationally indistinguishable when averaged over the possible measurement outcomes~$\alpha$.
That is, if we define the quantum states
\begin{equation}\label{eq:sigma_x}
    \sigma^\lambda_x
= \sum_{a\in\mcO_A} \sigma^\lambda_{xa}
= \hspace{-0.3cm}\expect_{\sk\leftarrow\Gen(1^\lambda)} \expect_{\xi\leftarrow\Enc(\sk,x)} \sum_{\alpha\in\CT}
  A_{\xi\alpha}^\lambda\ketbra{\psi^\lambda}{\psi^\lambda} (A_{\xi\alpha}^\lambda)^*.
\end{equation}
then~$\{\sigma^\lambda_x\}$ and $\{\sigma^\lambda_{x'}\}$ are computationally indistinguishable for any~$x,x'\in\mcI_A$, meaning that no QPT algorithm can distinguish them with non-negligible probability.
This, in particular, implies that the correlations~$p_\lambda(a,b|x,y)$ become non-signaling from Alice to Bob in the limit~$\lambda\to\infty$.
That is, for any~$x,x'\in\mcI_A$ and~$y\in\mcI_B$, there exists a negligible function~$\eta$ such that, for all~$\lambda$,
\begin{align*}
    \abs[\Big]{ \sum_{a\in\mcO_A} p_\lambda(a,b|x,y) - \sum_{a\in\mcO_A} p_\lambda(a,b|x',y) }
=  \abs[\Big]{ \tr(\sigma^\lambda_x B^\lambda_{yb}) - \tr(\sigma^\lambda_{x'} B^\lambda_{yb}) }
\leq \eta(\lambda),
\end{align*}
because the POVM~$\{B^\lambda_{yb}\}$ is implemented by a QPT algorithm for every~$y \in \mcI_B$.

However, the security property of the encryption scheme implies a much stronger notion of computational non-signaling from Alice to Bob, as it makes a statement about any efficient algorithm.
In particular, we can prove the following result.

\begin{proposition}\label{prop:polysecurity}
Consider any compiled game and QPT strategy.
Let $x,x'\in\mcI_A$, and let~$P=P(\{B_{yb}\})$ be a polynomial in noncommuting variables~$\{B_{yb}\}_{y\in\mcI_B,b\in\mcO_B}$.
Then there exists a negligible function~$\eta$ such that, for all $\lambda\in\N$,
\begin{align*}
    \abs*{ \tr\mleft( \sigma^\lambda_x \, P(\{B^\lambda_{yb}\}) \mright) - \tr\mleft( \sigma^\lambda_{x'} \, P(\{B^\lambda_{yb}\}) \mright) }
\leq \eta(\lambda),
\end{align*}
where~$\sigma^\lambda_x$ is the prover's average state after its first reply when given an encryption of~$x\in\mcI_A$, see~\eqref{eq:sigma_x}, and where~$\{B^\lambda_{yb}\}_{b \in \mcO_B}$ are POVMs for~$y\in\mcI_B$, corresponding to the measurements that lead to the prover's second reply, as defined in~\eqref{eq:def B lambda}.
\end{proposition}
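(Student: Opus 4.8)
The plan is to reduce the claim to the case of a single monomial and then, for each monomial $M$, to construct an explicit QPT distinguisher whose acceptance probability is an affine function of $\Re\tr(\sigma^\lambda_x M)$ (and a second one that captures $\Im\tr(\sigma^\lambda_x M)$), so that the security of $\QHE$ forces the two sides to agree up to a negligible error.

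\emph{Reduction to monomials.}
Write $P = \sum_j c_j M_j$ as a finite $\C$-linear combination of monomials $M_j = B_{y_1 b_1}\cdots B_{y_{k_j} b_{k_j}}$ in the noncommuting variables $\{B_{yb}\}$, with the empty product equal to $\Id$. Since $P$ is fixed (independent of $\lambda$), all degrees $k_j$ are bounded by the constant $d = \deg P$. Each $B^\lambda_{yb}$ is a POVM element, hence a contraction, so each $M_j(\{B^\lambda_{yb}\})$ is again a contraction. By the triangle inequality, and since a finite sum of negligible functions is negligible, it suffices to produce for every monomial $M$ a negligible function $\eta_M$ with $\abs{\tr(\sigma^\lambda_x M(\{B^\lambda_{yb}\})) - \tr(\sigma^\lambda_{x'} M(\{B^\lambda_{yb}\}))} \le \eta_M(\lambda)$; then $\eta := \sum_j \abs{c_j}\,\eta_{M_j}$ works.

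\emph{Block-encoding the monomials.}
By \eqref{eq:def B lambda} we may write $B^\lambda_{yb} = C_{yb}^* C_{yb}$ with $C_{yb} = (\bra b\otimes\Id)\,W_\lambda\,(\ket y\otimes\Id)$, a compression of the unitary $W_\lambda$ (together with its fixed ancillas from \cref{def:qpt prover}) -- i.e.\ a block of $W_\lambda$. Thus $W_\lambda$ is a block encoding of $C_{yb}$ and $W_\lambda^*$ one of $C_{yb}^*$, and since block encodings compose under products by concatenating their block-ancilla registers, stringing together at most $2d$ such circuits yields a block encoding $U_M$ of $M(\{B^\lambda_{yb}\})$ on $\mcH_\lambda \otimes (\C^2)^{\otimes e}$: a circuit of size $\poly(\lambda)$ whose description is computable in polynomial time (using that $S$ is a QPT strategy, so descriptions of $V_\lambda$ and $W_\lambda$ are). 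Here $\mcH_\lambda$ carries the prover's post-measurement state after the first round, and $M(\{B^\lambda_{yb}\}) = (\Id_{\mcH_\lambda}\otimes\bra{0}_E)\,U_M\,(\Id_{\mcH_\lambda}\otimes\ket{0}_E)$.

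\emph{A block-encoded Hadamard test and the security reduction.}
Fix a monomial $M$. Consider the QPT algorithm $\mcA_\lambda$ that, on input a ciphertext $\xi$: (i)~runs the first part of the prover, i.e.\ applies $V_\lambda$ to $\ket\xi$ together with $\ket0$ ancillas and measures the answer register, discarding the outcome -- by \eqref{eq:sigma_x}, averaging over $\xi\leftarrow\Enc(\sk,x)$ and $\sk\leftarrow\Gen(1^\lambda)$ the remaining state on $\mcH_\lambda$ is exactly $\sigma^\lambda_x$; (ii)~adjoins a control qubit in state $\ket+$ and an $e$-qubit register $E$ in state $\ket0$; (iii)~applies $U_M$ to $\mcH_\lambda\otimes E$ controlled on the control qubit; (iv)~applies a Hadamard to the control qubit and measures it and $E$; (v)~outputs $1$ iff both outcomes are $0$. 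Since the output probability is linear in the state on $\mcH_\lambda$, a direct computation gives
\[
  \Pr[\mcA_\lambda=1\mid \xi\leftarrow\Enc(\sk,x)] = \tfrac14\,\tr\mleft(\sigma^\lambda_x\,(\Id + M^*M + M + M^*)\mright),
\]
with $M := M(\{B^\lambda_{yb}\})$, a number in $[0,1]$ because $\tfrac14\norm{(\Id + M)\ket\psi}^2\le 1$ for contractions $M$. Let $\mcA'_\lambda$ be identical except that it outputs $1$ iff the control outcome is $1$ and the $E$-outcome is $0$; then $\Pr[\mcA'_\lambda=1\mid x] = \tfrac14\tr(\sigma^\lambda_x(\Id + M^*M - M - M^*))$, so subtracting,
\[
  \Pr[\mcA_\lambda=1\mid x] - \Pr[\mcA'_\lambda=1\mid x] = \tfrac12\,\tr\mleft(\sigma^\lambda_x\,(M + M^*)\mright) = \Re\tr\mleft(\sigma^\lambda_x M\mright).
\]
Initializing the control qubit in $\tfrac1{\sqrt2}(\ket0 - i\ket1)$ instead yields $\Im\tr(\sigma^\lambda_x M)$ the same way. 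Applying the security of $\QHE$ (\cref{def:qhe}) with messages $x,x'\in\M$ -- note $\mcA_\lambda,\mcA'_\lambda$ make no oracle queries -- gives $\abs{\Pr[\mcA_\lambda=1\mid x] - \Pr[\mcA_\lambda=1\mid x']}\le\negl(\lambda)$, and likewise for $\mcA'_\lambda$ and for the imaginary-part variants. Combining these four bounds with the triangle inequality yields $\abs{\tr(\sigma^\lambda_x M) - \tr(\sigma^\lambda_{x'} M)}\le\negl(\lambda)$, which proves the monomial case and hence, by the first paragraph, the proposition.

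\emph{Expected main difficulty.}
I expect the only real subtlety to be the bookkeeping of the block encodings: matching the input and output block-ancilla registers of the compressions $C_{yb}$ (which may require inserting fixed state-preparation and qubit-relabeling gates around $W_\lambda$ so that the block state is literally $\ket0_E$ on both sides), composing them into a monomial, and checking that the composite is a genuine QPT circuit with a polynomial-time-computable description. Everything else is a routine Hadamard-test calculation together with a single invocation of IND-CPA-style security -- in particular, no property of $\Enc$ beyond its security is used (not even correctness of the homomorphic evaluation), consistent with the remark that the security of the QHE scheme enters the argument only in this one place.
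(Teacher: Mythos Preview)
Your proof is correct and follows essentially the same approach as the paper: reduce to monomials, build QPT block encodings of them from the circuit~$W_\lambda$, and invoke the security of the QHE scheme. The paper packages the security reduction via \cref{lem:block ind} (applied to separately block-encoded Hermitian parts $\Re M_\lambda$ and $\Im M_\lambda$, using the linear-combination lemma from \cite{gilyen2019quantum}), whereas you unpack this into an explicit Hadamard-test distinguisher on the non-Hermitian block encoding and read off the real and imaginary parts directly; these are minor variations of the same idea, and your flagged ``main difficulty'' about ancilla bookkeeping is exactly what the paper outsources to \cite[Lem.~26,~30]{gilyen2019quantum}.
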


\Cref{prop:polysecurity} is a generalization of \cite[Lem.~21]{xor} and is proved in a similar fashion, by using block encodings.

\begin{definition}
    A \textit{block encoding} of an operator~$M$ on $(\C^2)^{\ot n}$ is a unitary~$U$ on~$(\C^2)^{\ot (m+n)}$, for some additional number of qubits~$m\in \N$, such that
    \[
        tM = \left( \bra{0}^{\ot m} \ot \Id\right) U \left(\ket{0}^{\ot m} \ot \Id\right), \quad \text{i.e.} \quad U = \begin{pmatrix} tM & * \\ * & * \end{pmatrix}
    \]
    for some $t > 0$ called the \textit{scale factor} of the block encoding.

    A \emph{QPT block encoding} of a family of operators~$\{M_\lambda\}$ is a QPT algorithm~$\{U_\lambda\}$ such that each~$U_\lambda$ is a block encoding of~$M_\lambda$, with~$t$ and~$m$ independent of~$\lambda$.
\end{definition}

The significance of this definition is as follows.
On the one hand, the quantum expectation value of an observable that admits a QPT block encoding can be measured to any inverse polynomial precision, by a QPT quantum algorithm that takes polynomially many copies of the state.
Together with the security property of the QHE scheme this implies the following.

\begin{lemma}[{\cite[Lem.~2.21]{xor}, cf.\ \cite[Lem.~15-17]{natarajan2023bounding}}]\label{lem:block ind}
Consider any compiled game and QPT strategy.
Let $x,x'\in\mcI_A$, and let~$\{M_\lambda\}_{\lambda \in \N}$ be a family of observables that admit a QPT block encoding and such that~$\sup_\lambda \,\norm{M_\lambda} < \infty$.
Then there exists a negligible function~$\eta$ such that, for all $\lambda\in\N$,
\begin{align*}
    \abs*{ \tr\mleft( \sigma^\lambda_x \, M_\lambda \mright) - \tr\mleft( \sigma^\lambda_{x'} \, M_\lambda \mright) }
\leq \eta(\lambda),
\end{align*}
where the states~$\sigma^\lambda_x$ are defined as in~\eqref{eq:sigma_x}.
\end{lemma}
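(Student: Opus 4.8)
The plan is to argue by contradiction, turning a failure of the conclusion into a break of the security property of the QHE scheme in \cref{def:qhe}. So suppose, for the given compiled game and QPT strategy $S=\{(V_\lambda,W_\lambda)\}$ and the given QPT block encoding $\{U_\lambda\}$ of $\{M_\lambda\}$, that there is \emph{no} such negligible function: then there is a polynomial $q$ and an infinite set of security parameters $\lambda$ for which $\abs{\tr(\sigma^\lambda_x M_\lambda)-\tr(\sigma^\lambda_{x'} M_\lambda)}\ge 1/q(\lambda)$. Since $x,x'\in\mcI_A\subseteq\M$ are two fixed messages, it suffices to construct from these ingredients a (uniform) QPT distinguisher $\mcA=\{\mcA_\lambda\}$ that, on input a ciphertext $\ct\leftarrow\Enc(\sk,m)$ (and an unused encryption oracle), outputs a bit whose acceptance probability differs by a non-negligible amount between $m=x$ and $m=x'$, infinitely often; this contradicts the security property and proves the lemma.

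The distinguisher proceeds in two stages. In the first stage, $\mcA_\lambda$ executes the first round of the prover's strategy on input $\ket\ct$: it applies $V_\lambda$ together with the prescribed $\ket 0$ ancillas, measures the bits carrying the first reply, and retains the post-measurement register. By the definition of $\sigma^\lambda_x$ in \eqref{eq:sigma_x}, after averaging over the sampling $\sk\leftarrow\Gen(1^\lambda)$ and $\ct\leftarrow\Enc(\sk,m)$ that happens inside the security game, this register carries exactly the (trace-one) state $\sigma^\lambda_m$. In the second stage, $\mcA_\lambda$ estimates $\tr(\sigma^\lambda_m M_\lambda)$ and converts the estimate into an output bit. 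One clean way is a single-copy measurement built from the block encoding $U_\lambda$: adjoin a control qubit in $\ket+$ and $m$ ancillas in $\ket 0^{\ot m}$, apply $U_\lambda$ controlled on the control qubit to (ancillas, $\sigma^\lambda_m$-register), apply a Hadamard to the control qubit, and measure the control qubit together with the $m$ ancillas; if the ancillas read $0^m$ output the control outcome, and otherwise output a fair coin. Using $tM_\lambda=(\bra0^{\ot m}\ot\Id)U_\lambda(\ket0^{\ot m}\ot\Id)$ and $M_\lambda=M_\lambda^*$, a direct computation shows the $t^2$-order terms cancel between the two branches, so this procedure outputs $1$ with probability exactly $\tfrac12+\tfrac t2\tr(\sigma^\lambda_m M_\lambda)$, which lies in $[0,1]$ because $\norm{tM_\lambda}\le\norm{U_\lambda}=1$. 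Alternatively, one runs the standard estimate-then-output procedure on $\poly(\lambda,q(\lambda))$ i.i.d.\ copies of $\sigma^\lambda_m$ obtained by rerunning $V_\lambda$ on copies of the classical string $\ct$, invoking the fact (recalled just before the lemma, and as in \cite[Lem.~2.21]{xor}) that a QPT-block-encoded observable has a QPT, inverse-polynomially accurate expectation estimator; here the hypothesis $\sup_\lambda\norm{M_\lambda}<\infty$, together with $t$ and $m$ being independent of $\lambda$, bounds the estimator's variance and hence the number of copies. Either way $\mcA$ is uniform QPT (built from the uniform QPT families $\{V_\lambda\}$ and $\{U_\lambda\}$ plus constant-size post-processing), and its distinguishing advantage equals $\tfrac t2\abs{\tr(\sigma^\lambda_x M_\lambda)-\tr(\sigma^\lambda_{x'} M_\lambda)}\ge \tfrac{t}{2q(\lambda)}$ on the infinite set of bad $\lambda$; since $t>0$ is a constant and $q$ is a polynomial, this is non-negligible, the desired contradiction.

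The step I expect to require the most care is ensuring that the distinguisher's acceptance probability is an \emph{affine} function of $\tr(\sigma^\lambda_m M_\lambda)$ with a slope (here $t/2$) that is bounded away from zero \emph{uniformly in $\lambda$}. A naive Hadamard test applied to a block encoding postselects on the $m$ ancillas reading $0^m$, which contaminates the output with a $\tr(\sigma^\lambda_m M_\lambda^2)$ term and spoils affineness; the "fair coin on ancilla failure" trick above (or, equivalently, the amplitude-estimation bookkeeping underlying the cited estimation procedure) is precisely what removes it. The other point to keep honest is that the scale factor $t$ and ancilla count $m$ genuinely do not depend on $\lambda$ — guaranteed by the definition of a QPT block encoding — so that the advantage $t/(2q(\lambda))$ is truly non-negligible; the uniform norm bound $\sup_\lambda\norm{M_\lambda}<\infty$ enters only in the many-copies variant, to control the estimator. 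Everything else — rerunning $V_\lambda$, identifying the residual register with $\sigma^\lambda_m$, uniformity of $\mcA$ — is routine, and the overall argument is a direct security reduction of the same flavour as \cite[Lem.~2.21]{xor} and \cite[Lems.~15--17]{natarajan2023bounding}.
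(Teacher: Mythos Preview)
The paper does not actually prove this lemma: it is imported verbatim from \cite[Lem.~2.21]{xor} (cf.\ \cite[Lems.~15--17]{natarajan2023bounding}), with only the one-sentence gloss preceding the statement (``the quantum expectation value of an observable that admits a QPT block encoding can be measured to any inverse polynomial precision, by a QPT algorithm that takes polynomially many copies of the state; together with the security property of the QHE scheme this implies the following''). Your proposal is a correct fleshing-out of exactly that reduction, so it is in line with the paper's (implicit) argument.

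A couple of minor remarks. First, with the conventions you wrote (control-on-$\ket1$ application of $U_\lambda$, then Hadamard, then output the control bit on ancilla success and a fair coin otherwise), the acceptance probability comes out to $\tfrac12-\tfrac t2\tr(\sigma^\lambda_m M_\lambda)$ rather than $\tfrac12+\tfrac t2\tr(\sigma^\lambda_m M_\lambda)$; this sign is of course irrelevant to the distinguishing advantage. Second, your single-copy ``fair coin on ancilla failure'' variant is in fact cleaner than the many-copies estimator the paper alludes to: it gives an exactly affine acceptance probability and, as you note, does not use the hypothesis $\sup_\lambda\norm{M_\lambda}<\infty$ at all (since $\norm{tM_\lambda}\le1$ already follows from $tM_\lambda$ being a corner of a unitary). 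The many-copies variant is what the paper's sentence and the cited references have in mind, and there the uniform norm bound is genuinely needed to control the estimator; your discussion of where that hypothesis enters is accurate.
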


On the other hand, one can show that the families of POVM elements~$\{B^\lambda_{yb}\}_{\lambda\in\N}$ have natural QPT block encodings, and moreover that the existence of QPT block encoding is preserved by multiplication and taking linear combinations.
This implies that the family of operators defined by $M_\lambda = P(\{B^\lambda_{yb}\})$ has a QPT block encoding, which in view of the preceding lemma essentially establishes the proposition.
The following proof makes this reasoning precise.

\begin{proof}[Proof of \cref{prop:polysecurity}]
It suffices to prove the claim for monomials since any noncommutative polynomial is a finite linear combination of monomials.

We first note that for any fixed~$y\in\mcI_B$ and~$b\in\mcO_B$, the POVM elements~$\{B^\lambda_{yb}\}$ have a natural QPT block encoding.
This follows from equation \eqref{eq:def B lambda}, by the same reasoning as in~\cite[Lem.~26]{gilyen2019quantum}, which also shows that the resulting block encodings have parameter~$t=1$ and~$m=N$, where~$N$ denotes the total number of bits in the binary representation of~$y$ and~$b$.

Now suppose that~$P$ is a monomial of degree~$D$ and let~$M_\lambda = P(\{B^\lambda_{yb}\})$.
It follows from~\cite[Lem.~30]{gilyen2019quantum} that~$\{M_\lambda\}$ admits a QPT block encoding with~$t=1$ and~$m=DN$, simply by concatenating the individual block encodings in a suitable way (cf.\ \cite[Lem~2.18]{xor}).
Even though each POVM element is an observable, the operators~$M_\lambda$ need not be Hermitian, so we cannot apply \cref{lem:block ind} directly.
Instead, we observe that it suffices to prove the claim for the observables~$\Re(M_\lambda) = (M_\lambda + M_\lambda^*)/2$ and $\Im(M_\lambda) = (M_\lambda - M_\lambda^*)/(2 \imath)$.
To this end, we first note that~$\{M_\lambda^*\}$ also admits a QPT block encoding with the same parameters as~$\{M_\lambda\}$, as the former family corresponds to the monomial obtained by reversing~$P$.
Then it follows from~\cite[Lem.~29]{gilyen2019quantum}, by taking the controlled unitaries corresponding to these QPT block encodings, along with fixed state-preparation pairs for the two desired linear combinations, that the two families~$\{\Re(M_\lambda)\}$ and~$\{\Im(M_\lambda)\}$ admit QPT block encodings (cf.\ \cite[Lem~2.17]{xor}).
Now the claim follows from \cref{lem:block ind}, the triangle inequality, and the fact that nonnegative linear combinations of negligible functions are negligible.
\end{proof}

\section{Sequential characterizations of nonlocal correlations}\label{sec:alternative-characterization}
Motivated by the two-round structure of a compiled game, we consider sequential games and strategies.
Without further constraints, the resulting correlations can even be signaling, but we find that a natural information-theoretic property motivated by \cref{prop:polysecurity} ensures that the resulting correlations are nonlocal ones.
We describe the sequential setting in \cref{subsec:seq games}.
In \cref{subsec:seq classical,subsec:seq quantum} we discuss how to characterize classical as well as (finite-dimensional and spatial) quantum correlations in this setting.
In \cref{subsec:seq com} we prove the main result of this section: a ``sequential'' characterization of commuting operator correlations (\cref{thm:quantum strong nonsig c*}), which to the best of our knowledge has not appeared in the literature before.

Only the latter is required for the main results of this article, and we invite readers interested only in those to proceed directly to the self-contained \cref{subsec:seq com}.
The pedagogical \cref{subsec:seq classical,subsec:seq quantum} discuss more concrete settings that allow to gain intuition for the more abstract algebraic results of \cref{subsec:seq com}, and the results therein can be obtained from prior work, as we explain below.

\subsection{Sequential games}\label{subsec:seq games}
We consider sequential games that are parameterized by nonlocal games (\cref{def:nonlocal_game}).
Unlike in the nonlocal game, there is a single player that plays the roles of both Alice and Bob.

\begin{definition}
Consider a nonlocal game~$\mcG=(\mcI_A,\mcI_B,\mcO_A,\mcO_B,\mu,V)$.
The corresponding \emph{sequential game}~$\mcGseq$ describes a scenario of a single player interacting with a referee.
In the game, the referee samples a pair of questions~$(x,y) \in \mcI_A \times \mcI_B$ according to $\mu$ and sends question~$x$ to the player.
The player returns an answer~$a\in\mcO_A$.
Then the referee sends~$y$ to the player, who replies with an answer~$b\in\mcO_B$.
Finally the referee computes~$V(a,b|x,y)$ to determine if the player wins or loses.
\end{definition}

\begin{remark}
A sequential game can also be interpreted as a \emph{two-player} game where the first player can pass some information (depending on their question) to the second player before the latter has to respond with their answer.
Sequential games have been investigated from different perspectives; see, e.g., \cite{catani2024connecting} and the references therein.

We note that this setting can also be translated into the language of \emph{prepare-and-measure scenarios} studied in the contextuality literature.
In particular, \cite{wright2023invertible} proposes a general map between nonlocal scenarios and strategies on the one hand and certain prepare-and-measure contextuality scenarios and preparations on the other hand (their introduction also gives an account of the motivations of this line of research and prior works).
\Cref{cor:seq classical,cor:seq quantum} can be obtained from their work (which in turn builds on \cite{NCPV12}), while our \cref{thm:quantum strong nonsig c*,cor:eqcharacterization} resolve an open question left in their work.
\end{remark}

As in the nonlocal case, we can describe the player's behavior by strategies that determine the probabilities~$p(a,b|x,y)$ of answers~$a,b$ given questions~$x,y$.
Thus, the probability of winning the game~$\mcGseq$ under a sequential strategy~$S$, with correlations~$p=\{p(a,b|x,y)\}$, will be denoted by
\begin{equation*}
    \omega(\mcGseq,S)
=   \omega(\mcG,p)
=   \sum_{x \in \mcI_A, y \in \mcI_B}\sum_{a \in \mcO_A, b \in\mcO_B}\mu(x,y)V(a,b|x,y)p(a,b|x,y).
\end{equation*}
Because of the temporal order in the sequential game, these correlations should be non-signaling from Bob to Alice, meaning that for all $x \in \mcI_A$, $y,y'\in \mcI_B$, and $a \in \mcO_A$, it should hold that
\begin{align*}
    \sum_{b\in \mcO_B}p(a,b|x,y)&=\sum_{b\in \mcO_B}p(a,b|x,y').
\end{align*}
On the other hand, there is nothing imposed that prevents Alice from signaling Bob.

\subsection{Classical strategies and correlations}\label{subsec:seq classical}
While our main interest is quantum strategies, we first discuss the classical case to build some intuition.

\begin{definition}\label{def:c strat seq}
A \emph{classical strategy} for the sequential game~$\mcGseq$ consists~of
\begin{enumerate}[(i)]
    \item probability distributions $\{p(a,\omega|x) : x\in \mcI_A \}$ with outcomes in~$\mcO_A \times \Omega$, where~$\Omega$ is a (without loss of generality) finite set,
    \item probability distributions $\{q_\omega(b|y) : y\in \mcI_B,\omega\in \Omega\}$ with outcomes in~$\mcO_B$.
\end{enumerate}
Such a classical strategy gives rise to a correlation
\begin{equation*}
    p(a,b|x,y)
=   \sum_{\omega\in \Omega} p(a,\omega|x) \, q_\omega(b|y),
\end{equation*}
where $a\in \mcO_A,b\in \mcO_B,x\in \mcI_A,y\in \mcI_B$.
\end{definition}

We note that~$\omega\in\Omega$ models the information that is preserved between the two rounds of the game.
While classical strategies for sequential games are always non-signaling from Bob to Alice, they may even be signaling from Alice to Bob.
However, we can identify a natural property that ensures that the resulting correlations are not only non-signaling, but in fact nonlocal classical correlations in the sense of \cref{def:c strat}:
the distribution of~$\omega$ should be independent of~$x$.

\begin{proposition}\label{prop:classical strong nonsig}
Consider a classical strategy for~$\mcGseq$ and suppose that the distributions~$p(\omega|x) = \sum_{a\in\mcO_A} p(a,\omega|x)$ are the same for all~$x\in\mcI_A$.
Then the resulting correlation is a (nonlocal) classical correlation, that is, in~$C_c$.
\end{proposition}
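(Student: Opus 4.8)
The plan is to exhibit a classical strategy in the sense of \cref{def:c strat} directly, using the set~$\Omega$ itself as the shared probability space. The hypothesis that $p(\omega|x)$ is independent of~$x$ is exactly what makes this work: it says that the information~$\omega$ passed between the two rounds can equivalently be sampled \emph{before} Alice receives her question, and hence genuinely plays the role of preshared randomness.

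First I would set $\gamma(\omega) := p(\omega|x)$, which is well-defined by assumption and is a probability distribution on~$\Omega$. For each $\omega$ with $\gamma(\omega) > 0$, define Alice's local response distribution by Bayes' rule, $p_\omega(a|x) := p(a,\omega|x)/p(\omega|x)$; for the finitely many $\omega$ with $\gamma(\omega) = 0$, define $p_\omega(a|x)$ arbitrarily, say uniformly on~$\mcO_A$. One checks that $\{p_\omega(a|x)\}_{a\in\mcO_A}$ is a probability distribution for every $x\in\mcI_A$ and $\omega\in\Omega$. Bob's local response distributions $\{q_\omega(b|y)\}$ are taken to be the ones already given by the sequential strategy.

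It then remains to verify the factorization. Expanding,
\begin{align*}
  \sum_{\omega\in\Omega} \gamma(\omega)\, p_\omega(a|x)\, q_\omega(b|y)
  &= \sum_{\omega:\,\gamma(\omega)>0} p(\omega|x)\cdot \frac{p(a,\omega|x)}{p(\omega|x)}\cdot q_\omega(b|y)
   = \sum_{\omega:\,\gamma(\omega)>0} p(a,\omega|x)\, q_\omega(b|y).
\end{align*}
Since $0 \leq p(a,\omega|x) \leq p(\omega|x) = \gamma(\omega)$, the omitted terms with $\gamma(\omega) = 0$ all vanish, so this equals $\sum_{\omega\in\Omega} p(a,\omega|x)\, q_\omega(b|y) = p(a,b|x,y)$. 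Hence $p\in C_c$, which is the claim.

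There is no substantive obstacle here; the only point requiring a moment's care is the degenerate case $\gamma(\omega) = 0$, where the conditional distribution $p_\omega(\cdot|x)$ is undefined and must be chosen by hand, which is harmless precisely because such~$\omega$ contribute nothing to the correlation. This Bayes-style disintegration of the ``hidden'' register~$\omega$ into a convex mixture of product strategies is the template for the quantum analogues established later in this section, where the role of conditioning is played by Uhlmann's theorem and then by the Radon--Nikodym theorem for~$C^*$-algebras.
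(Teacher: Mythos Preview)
Your proof is correct and follows essentially the same approach as the paper: define $\gamma(\omega) := p(\omega|x)$, set $p_\omega(a|x) := p(a,\omega|x)/\gamma(\omega)$ when $\gamma(\omega)>0$ (arbitrarily otherwise), reuse the given $q_\omega(b|y)$, and verify the factorization. Your treatment of the degenerate case $\gamma(\omega)=0$ is in fact slightly more explicit than the paper's.
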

\begin{proof}
Define a probability distribution $\gamma(\omega) := p(\omega|x)$, which by assumption does not depend on~$x\in\mcI_A$, as well as probability distribution~$p_\omega(a|x) := p(a,\omega|x) / \gamma(\omega)$ for~$x\in\mcI_A$ and~$\omega\in\Omega$ (for $\gamma(\omega)=0$ the corresponding distributions~$p_\omega$ can be defined arbitrarily).
Then it holds that
\begin{align*}
    p(a,b|x,y)
=   \sum_{\omega\in \Omega} p(a,\omega|x) \, q_\omega(b|y)
=   \sum_{\omega\in \Omega} \gamma(\omega) \, p_\omega(a|x) \, q_\omega(b|y),
\end{align*}
which is precisely the form of a classical correlation.
\end{proof}

Conversely, any classical strategy for the nonlocal game (\cref{def:c strat}) gives rise to one for the sequential game that satisfies the hypotheses of \cref{prop:classical strong nonsig}.
Simply set~$p(a,\omega|x) := \gamma(\omega) p_\omega(a|x)$ and use the same~$q_\omega(b|y)$ as in the~nonlocal~strategy.
We thus obtain the following characterization, which can also be obtained by translating~\cite[App.~D]{wright2023invertible} into the language of sequential games:

\begin{corollary}\label{cor:seq classical}
The classical correlation set~$C_c$ consists precisely of the correlations produced by classical sequential strategies satisfying the condition in \cref{prop:classical strong nonsig}.
\end{corollary}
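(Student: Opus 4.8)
The plan is to establish the two inclusions separately; both amount to translating between the latent-variable description of a classical nonlocal strategy and the ``information passed between rounds'' description of a classical sequential strategy.

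The inclusion ``$\supseteq$'' --- every correlation produced by a classical sequential strategy satisfying the hypothesis of \cref{prop:classical strong nonsig} lies in $C_c$ --- is exactly the content of \cref{prop:classical strong nonsig}, so I would simply invoke it.

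For the inclusion ``$\subseteq$'', I would take an arbitrary $p\in C_c$ and unpack \cref{def:c strat}: there is a finite probability space $\Omega$ with distribution $\gamma$, together with distributions $\{p_\omega(a|x)\}$ and $\{q_\omega(b|y)\}$, such that $p(a,b|x,y)=\sum_{\omega\in\Omega}\gamma(\omega)\,p_\omega(a|x)\,q_\omega(b|y)$. I would then build a classical sequential strategy in the sense of \cref{def:c strat seq} by setting $p(a,\omega|x):=\gamma(\omega)\,p_\omega(a|x)$ and reusing the same $\{q_\omega(b|y)\}$. Two routine checks remain: first, that $\{p(a,\omega|x)\}_{a,\omega}$ is a probability distribution on $\mcO_A\times\Omega$ for each $x$, which is immediate since $\sum_{a,\omega}\gamma(\omega)\,p_\omega(a|x)=\sum_{\omega}\gamma(\omega)=1$; second, that the marginal $p(\omega|x)=\sum_{a\in\mcO_A}\gamma(\omega)\,p_\omega(a|x)=\gamma(\omega)$ does not depend on $x$, so the hypothesis of \cref{prop:classical strong nonsig} is met. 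The correlation generated by this sequential strategy is $\sum_{\omega}p(a,\omega|x)\,q_\omega(b|y)=\sum_{\omega}\gamma(\omega)\,p_\omega(a|x)\,q_\omega(b|y)=p(a,b|x,y)$, so $p$ is realized as claimed, completing the equality of the two sets.

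I do not expect a genuine obstacle here: the whole point is that the shared randomness $\omega$ of a nonlocal classical strategy is precisely the information a sequential player may carry from the first round to the second, and the requirement that $\gamma$ be independent of $x$ is exactly the strong non-signaling condition. (As noted in the surrounding discussion, the same statement can alternatively be read off by translating \cite{wright2023invertible} into the language of sequential games.)
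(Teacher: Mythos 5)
Your proposal is correct and matches the paper's argument exactly: one direction is \cref{prop:classical strong nonsig}, and the other direction is the same construction the paper gives (set $p(a,\omega|x):=\gamma(\omega)\,p_\omega(a|x)$ and reuse $q_\omega(b|y)$), with the two verification steps you carry out being the routine checks the paper leaves implicit.
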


\begin{remark}\label{rem:sigma algebra}
The condition identified in \cref{prop:classical strong nonsig} does not refer to the Bob part of the strategy.
If one takes Bob's strategy into account then one can give a sharper criterion -- the distributions~$p(\omega|x)$ should coincide when restricted to the $\sigma$-algebra generated by the functions~$\{\omega\mapsto q_\omega(b|y)\}_{b\in\mcO_B,y\in\mcI_B}$.
In the classical case, the simpler condition is without loss of generality, but not so in the (infinite-dimensional) quantum case.
\end{remark}

\subsection{Quantum strategies and correlations}\label{subsec:seq quantum}
Next, we move on to the quantum case.
We give a definition that applies in finite as well as infinite dimensions.

\begin{definition}\label{def:q strat seq}
A \emph{quantum strategy} for the sequential game~$\mcGseq$ consists of
\begin{enumerate}[(i)]
\item a Hilbert space~$\mcH$,
\item\label{it:sigma data} positive (semidefinite) operators~$\{\sigma_{xa}\}_{x\in\mcI_A,a\in\mcO_A}$ such that~$\sigma_x := \sum_{a\in\mcO_A} \sigma_{xa}$ is a density operator (i.e., has unit trace) for every~$x\in\mcI_A$, along with
\item POVMs $\{\{B_{yb}:b\in \mcO_B\}: y\in \mcI_B\}$ acting on $\mcH$.
\end{enumerate}
(We say that the strategy is \emph{finite-dimensional} if the Hilbert space~$\mcH$ is finite-dimensional.)
Such a quantum strategy gives rise to a correlation
\begin{equation*}
    p(a,b|x,y)
=   \tr\mleft( \sigma_{xa} B_{yb} \mright)
\end{equation*}
where $a\in \mcO_A,b\in \mcO_B,x\in \mcI_A,y\in \mcI_B$.
\end{definition}
Note that this formula is precisely the same expression as in~\eqref{eq:p lambda qpt prover} for the correlation determined by a QPT strategy.

Operators as in~\ref{it:sigma data} naturally arise as unnormalized post-measurement states for quantum measurements.
For example, given a state~$\rho$ and a collection of measurements~$\{\{A_{xa}\}_{a\in \mcO_A}\}_{x\in \mcI}$, the operators~$\sigma_{xa} = A_{xa} \rho A_{xa}^*$ satisfy the assumption, as do the operators~$\{\sigma_{xa}^\lambda\}$ defined in~\eqref{eq:sigma_xa} for the compiled game (for any fixed~$\lambda$).
This is immediate, but can also be verified by the following lemma.

\begin{lemma}\label{lem:state vs instrument}
Let $\mcH$ be a Hilbert space, $\rho$ be a state on~$\mcH$, and~$\{\Phi_{xa}\}_{x\in\mcI_A,a\in\mcO_A}$ be a collection of completely positive maps such that~$\sum_{a\in\mcO_A} \Phi_{xa}$ is trace-preserving for every~$x\in\mcI_A$.%
\footnote{A collection of completely positive maps~$\{\Psi_a\}_{a\in\mcO}$ such that $\sum_{a\in\mcO_A} \Psi_{xa}$ is trace-preserving is called a \emph{quantum instrument}. It describes the most general quantum evolution that has a classical outcome (measurement result) as well as a quantum one (post-measurement state)~\cite[\S{}4.6.8]{wilde2017quantum}.}
Then the operators~$\sigma_{xa} = \Phi_{xa}(\rho)$ satisfy the assumptions in~\ref{it:sigma data} of \cref{def:q strat seq}.
Conversely, any collection of operators as in~\ref{it:sigma data} arises in this way.
\end{lemma}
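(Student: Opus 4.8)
The plan is to prove the two directions by directly unwinding the definitions of completely positive and trace-preserving maps. For the forward direction, I would first note that each $\Phi_{xa}$, being completely positive, is in particular positive, so $\sigma_{xa}=\Phi_{xa}(\rho)\geq 0$ because $\rho\geq 0$. Then, by linearity, $\sigma_x=\sum_{a\in\mcO_A}\Phi_{xa}(\rho)=\bigl(\sum_{a\in\mcO_A}\Phi_{xa}\bigr)(\rho)$, and since $\sum_{a\in\mcO_A}\Phi_{xa}$ is trace-preserving by hypothesis we obtain $\tr(\sigma_x)=\tr(\rho)=1$, so that $\sigma_x$ is a density operator. This shows that $\{\sigma_{xa}\}$ satisfies the conditions in~\ref{it:sigma data} of \cref{def:q strat seq}.

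For the converse, given operators $\{\sigma_{xa}\}$ as in~\ref{it:sigma data}, I would pick an arbitrary state $\rho$ on $\mcH$ (for instance a pure state) and define the ``trace-and-replace'' maps $\Phi_{xa}(\,\cdot\,):=\tr(\,\cdot\,)\,\sigma_{xa}$. Each $\Phi_{xa}$ is completely positive, being the composition of the (completely) positive trace functional with multiplication by the positive operator $\sigma_{xa}$ (an explicit Kraus decomposition $\Phi_{xa}(\omega)=\sum_{i}K_i\,\omega\,K_i^*$ with $K_i$ built from an orthonormal basis and a square root of $\sigma_{xa}$ can be written down if one wants manifest complete positivity), and $\Phi_{xa}(\rho)=\tr(\rho)\,\sigma_{xa}=\sigma_{xa}$ since $\rho$ is normalized. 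Moreover $\sum_{a\in\mcO_A}\Phi_{xa}(\,\cdot\,)=\tr(\,\cdot\,)\,\sigma_x$, which is trace-preserving precisely because $\tr(\sigma_x)=1$; hence $\{\Phi_{xa}\}_{a\in\mcO_A}$ is a quantum instrument for every $x\in\mcI_A$ and it reproduces the given operators. Note that the same $\rho$ works for all $x$, as required.

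Since both directions amount to bookkeeping with the defining properties, I do not expect a genuine obstacle. The only points that warrant a sentence of care are verifying that the trace-and-replace maps are \emph{completely} positive and not merely positive, and -- in the infinite-dimensional case -- observing that every operator in play is trace-class (which holds since $0\leq\sigma_{xa}\leq\sigma_x$ with $\sigma_x$ a density operator), so that the trace functional and the maps $\Phi_{xa}$ are well defined on the relevant domain.
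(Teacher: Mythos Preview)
Your proposal is correct and follows essentially the same approach as the paper: the forward direction is the same positivity-plus-trace-preservation check, and for the converse the paper also picks an arbitrary state~$\rho$ and defines the trace-and-replace maps $\Phi_{xa}(\cdot)=\tr(\cdot)\,\sigma_{xa}$. Your additional remarks on complete positivity and trace-class issues are fine elaborations but not needed beyond what the paper records.
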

\begin{proof}
The first claim follows directly from the trace-preserving assumption.
For the converse, take~$\rho$ to be an arbitrary state and define $\Phi_{xa}(\cdot) = \tr(\cdot) \, \sigma_{xa}$.
\end{proof}

Correlations produced by quantum strategies for sequential games are always non-signaling from Bob to Alice, but not necessarily from Alice to Bob.
We now state the key property that allows us to ensure that the correlations are in fact quantum respective commuting operator correlations for the nonlocal game, in the sense of \cref{def:q strat,def:qc_strat}.
As the states~$\sigma_x$ are analogous to the marginal distributions~$p(\omega|x)$ in the classical case, it is natural to demand that they are identical in a suitable sense.

\begin{definition}\label{def:q str nonsig}
We say that a quantum strategy is \emph{strongly non-signaling} if there exists a $C^*$-algebra $\msB \subseteq \mcB(\mcH)$ containing the operators~$\{B_{yb}\}_{y\in\mcI_B,b\in\mcO_B}$ such the following condition holds:
for all $x,x'\in\mcI_A$ and for all~$B \in \msB$, we~have
\begin{align}\label{eq:q str nonsig}
    \tr( \sigma_x B ) = \tr( \sigma_{x'} B ).
\end{align}
We say that~$S$ is \emph{strongly non-signaling with respect to~$\msB$} to indicate~$\msB$ explicitly.
\end{definition}

Note that a quantum strategy is strongly non-signaling if, and only if, it is so with respect to the $C^*$-algebra generated by the operators~$\{B_{yb}\}$ because it is contained in any other $C^*$-algebra that contains these elements.
By continuity it suffices to verify~\eqref{eq:q str nonsig} on the dense set of noncommutative polynomials in these operators.%
\footnote{However, it does \emph{not} suffice to only require that~\eqref{eq:q str nonsig} holds for the generators~$B \in \{ B_{yb} \}_{y\in\mcI_B,b\in\mcO_B}$.}
We record this useful observation.

\begin{lemma}\label{lem:q str nonsig via polys}
A quantum sequential strategy is strongly non-signaling if, and only if, for every~$x,x'\in\mcI_A$ and for every noncommutative polynomial~$P(\{B_{yb}\})$ in the operators~$\{B_{yb}\}_{y\in\mcI_B,b\in\mcO_B}$, it holds that
\begin{align}\label{eq:poly cond}
    \tr\bigl( \sigma_x \, P(\{B_{yb}\}) \bigr) = \tr\bigl( \sigma_{x'} \, P(\{B_{yb}\}) \bigr).
\end{align}
\end{lemma}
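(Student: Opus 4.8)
The plan is to take the algebra $\msB$ appearing in \cref{def:q str nonsig} to be the $C^*$-subalgebra $\msB_0\subseteq\mcB(\mcH)$ generated by the family $\{B_{yb}\}_{y\in\mcI_B,b\in\mcO_B}$, and then to reduce the strong non-signaling condition \emph{on} $\msB_0$ to the condition \emph{on polynomials} by a routine density argument. First I would record two elementary facts. Since $\{B_{yb}\}_{b\in\mcO_B}$ is a POVM, $\Id=\sum_{b\in\mcO_B}B_{yb}\in\msB_0$ for any fixed $y$, so $\msB_0$ is unital; and since each $B_{yb}$ is positive, hence self-adjoint, the set $\msP$ of noncommutative polynomials in the $\{B_{yb}\}$ -- that is, finite complex linear combinations of words in the $B_{yb}$, with $\Id$ the empty word -- is a unital $*$-subalgebra of $\msB_0$. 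By the very definition of the generated $C^*$-algebra, $\msP$ is norm-dense in $\msB_0$. The same reasoning shows that $\msP$ is contained in \emph{any} $C^*$-algebra that contains all the $B_{yb}$.

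For the ``only if'' direction I would argue as follows. If the strategy is strongly non-signaling, witnessed by some $C^*$-algebra $\msB$ containing the $\{B_{yb}\}$, then $\msP\subseteq\msB$ by the last remark above, so \eqref{eq:q str nonsig} applies in particular to every $B=P(\{B_{yb}\})\in\msP$; this is exactly \eqref{eq:poly cond}. This direction is immediate and requires no new input.

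The ``if'' direction is where the (modest) work sits. Assuming \eqref{eq:poly cond} for all $x,x'\in\mcI_A$ and all noncommutative polynomials $P$, I would fix $x,x'$ and observe that, because $\sigma_x$ and $\sigma_{x'}$ are density operators by \cref{def:q strat seq} and in particular trace class, the linear functionals $B\mapsto\tr(\sigma_x B)$ and $B\mapsto\tr(\sigma_{x'}B)$ on $\mcB(\mcH)$ are bounded with respect to the operator norm (each has norm at most $\tr(\sigma_x)=\tr(\sigma_{x'})=1$), hence continuous on $\msB_0$ in its norm topology. By hypothesis they agree on the dense subset $\msP\subseteq\msB_0$, so they agree on all of $\msB_0$; thus \eqref{eq:q str nonsig} holds for every $B\in\msB_0$ and the strategy is strongly non-signaling with respect to $\msB_0$. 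I do not expect a genuine obstacle here: the one point to get right is that the density of $\msP$ in $\msB_0$ is in the operator norm, which is precisely the topology in which $\tr(\sigma_x\,\cdot\,)$ is continuous, so the standard ``two continuous functionals agreeing on a dense set agree everywhere'' principle applies verbatim.
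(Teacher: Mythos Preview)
Your proposal is correct and follows exactly the approach the paper itself sketches in the paragraph preceding the lemma: reduce to the $C^*$-algebra generated by $\{B_{yb}\}$ (minimality), then use norm-density of polynomials together with operator-norm continuity of $B\mapsto\tr(\sigma_x B)$. You simply spell out the details the paper leaves implicit.
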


Note the similarity between the characterization in \cref{lem:q str nonsig via polys} and the statement of \cref{prop:polysecurity}.
In \cref{sec:new-magic} we will show how to connect the two in a precise~way.
Because we will need to take the limit where the security parameter tends to infinity, this will require us to consider infinite-dimensional strategies.

To build intuition we first consider the special case of quantum sequential strategies that are strongly non-signaling with respect to $\mcB(\mcH)$ for some Hilbert space $\mcH$.
In other words, we consider the situation that $\sigma_x = \sigma_{x'}$ for all~$x,x'\in\mcI_A$.
This situation was also studied in~\cite{NCPV12} in a different context.
There, the corresponding correlations were called ``quansal'' and it was proved that every \emph{quansal} correlation is in~$C_{qs}$.
We give a self-contained proof of this result:


\begin{proposition}[{\cite[Lem.~4]{NCPV12}}]\label{prop:quantum strong nonsig spatial}
Consider a quantum strategy for~$\mcGseq$ that is strongly non-signaling with respect to $\mcB(\mcH)$.
Then the resulting correlation is a (nonlocal) spatial quantum correlation, that is, in~$C_{qs}$.
\end{proposition}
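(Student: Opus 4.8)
The plan is to realize the strongly non-signaling condition $\sigma_x = \sigma_{x'} =: \sigma$ as an operator-algebraic purification statement, then build a spatial quantum strategy whose Alice-side measurements implement the transformation from $x$ to $x$. First I would fix an arbitrary element $x_0 \in \mcI_A$ and a purification of $\sigma$. Concretely, write $\sigma = \sum_{a} \sigma_{x_0 a}$ and pick a purifying vector; a convenient choice is to work on $\mcH_A' \ot \mcH$ where $\mcH_A' = \C^{\mcO_A} \ot \mcK$ for an auxiliary space $\mcK$ large enough (or a separable infinite-dimensional $\mcK$) to purify each subnormalized operator $\sigma_{xa}$. Choose vectors $\ket{\psi_{xa}} \in \mcK \ot \mcH$ with $\tr_{\mcK}(\ketbra{\psi_{xa}}{\psi_{xa}}) = \sigma_{xa}$, and set $\ket{\psi_x} = \sum_{a \in \mcO_A} \ket{a} \ot \ket{\psi_{xa}} \in \C^{\mcO_A} \ot \mcK \ot \mcH$. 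Since $\tr_{\C^{\mcO_A} \ot \mcK}(\ketbra{\psi_x}{\psi_x}) = \sum_a \sigma_{xa} = \sigma$ is independent of $x$, all the $\ket{\psi_x}$ are purifications of the same state $\sigma$ on $\mcH$.

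The key step is Uhlmann's theorem: for each $x$ there is a unitary $U_x$ on $\C^{\mcO_A} \ot \mcK$ (the purifying system) with $(U_x \ot \Id_{\mcH}) \ket{\psi_{x_0}} = \ket{\psi_x}$. Now I would define the spatial strategy on $\mcH_A \ot \mcH_B$ with $\mcH_A = \C^{\mcO_A} \ot \mcK$, $\mcH_B = \mcH$, shared state $\ket{\psi_{x_0}}$, Alice's POVM $\{M_{xa}\} = \{ U_x^* (\ketbra{a}{a} \ot \Id_{\mcK}) U_x \}_{a \in \mcO_A}$ for question $x$, and Bob's POVM $\{N_{yb}\} = \{B_{yb}\}_{b \in \mcO_B}$ for question $y$ acting on $\mcH_B = \mcH$. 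One checks $\{M_{xa}\}_a$ is a genuine POVM (it is projective, in fact) since $U_x$ is unitary and $\{\ketbra{a}{a} \ot \Id_{\mcK}\}_a$ resolves the identity. Then
\begin{align*}
  \bra{\psi_{x_0}} M_{xa} \ot N_{yb} \ket{\psi_{x_0}}
  &= \bra{\psi_{x_0}} U_x^* (\ketbra{a}{a} \ot \Id_{\mcK}) U_x \ot B_{yb} \ket{\psi_{x_0}} \\
  &= \bra{\psi_x} (\ketbra{a}{a} \ot \Id_{\mcK}) \ot B_{yb} \ket{\psi_x}
   = \bra{\psi_{xa}} \Id_{\mcK} \ot B_{yb} \ket{\psi_{xa}}
   = \tr(\sigma_{xa} B_{yb}),
\end{align*}
using $(\ketbra{a}{a} \ot \Id) \ket{\psi_x} = \ket{a} \ot \ket{\psi_{xa}}$ in the third equality and the partial-trace identity $\tr_{\mcK}\ketbra{\psi_{xa}}{\psi_{xa}} = \sigma_{xa}$ in the last. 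This is exactly $p(a,b|x,y)$, so the correlation lies in $C_{qs}$.

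The one subtlety I expect to need care with is the purification step when $\mcH$ is infinite-dimensional: Uhlmann's theorem for infinite-dimensional Hilbert spaces requires the purifying ancilla $\mcK$ to be at least as large as needed — taking $\mcK$ separable infinite-dimensional suffices since $\sigma$ has a separable support (the strategy's Hilbert space $\mcH$ may be taken separable, or one restricts to the separable subspace generated by the relevant vectors) — and the two purifications $\ket{\psi_x}, \ket{\psi_{x_0}}$ of $\sigma$ with the same (infinite-dimensional) ancilla are then unitarily equivalent on the ancilla. Apart from this bookkeeping, the argument is a direct transcription of the finite-dimensional sketch given in the technical overview, with $P_a = \ketbra{a}{a}$ replaced by $\ketbra{a}{a} \ot \Id_{\mcK}$ to make room for the purification. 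The role of the strong non-signaling hypothesis with respect to all of $\mcB(\mcH)$ (rather than a proper subalgebra) is precisely that it gives $\sigma_x = \sigma_{x'}$ as operators, which is what makes the $\ket{\psi_x}$ purifications of a single state and hence lets Uhlmann's theorem apply; this is exactly the point that fails for a proper subalgebra and forces the $C^*$-algebraic treatment in \cref{subsec:seq com}.
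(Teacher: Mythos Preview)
Your proposal is correct and follows essentially the same approach as the paper: purify each $\sigma_{xa}$, assemble $\ket{\psi_x} = \sum_a \ket a \ot \ket{\psi_{xa}}$, observe these are all purifications of the common $\sigma$, invoke Uhlmann to obtain unitaries on the purifying system, and read off a spatial quantum strategy with $A_{xa} = U_x^*(\ketbra a a \ot \Id)U_x$. The paper's proof is the same argument with the specific choice $\mcK = \mcH$ for the purifying ancilla; your added discussion of the infinite-dimensional Uhlmann subtlety (taking $\mcK$ separable infinite-dimensional to guarantee unitaries rather than mere partial isometries) is a point the paper leaves implicit.
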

\begin{proof}
Since the quantum strategy is strong non-signaling condition with respect to $\mcB(\mcH)$, it holds $\sigma_x = \sigma_{x'}$ for all~$x,x'\in\mcI_A$, and therefore we let~$\sigma=\sigma_x$. Let $\mcH_A := \C^{\mcO_A} \ot \mcH_{A'}$ and~$\mcH_{A'} := \mcH_B := \mcH$.
We think of the operators~$\sigma_{xa}$ as acting on~$\mcH_B$ and choose purifications~$\ket{\psi_{xa}} \in \mcH_{A'} \ot \mcH_B$ for each~$x\in\mcI_A$ and~$a\in\mcO_A$.
Then the states
\begin{align*}
    \ket{\psi_x} := \sum_{a\in\mcO_A} \ket a \ot \ket{\psi_{xa}} \in \mcH_A \ot \mcH_B
\end{align*}
are purifications of the same operator~$\sigma$, which implies that there exist unitaries~$U_{xx'}$ on $\mcH_A$ such that
$\ket{\psi_x} = (U_{xx'} \ot \Id_B) \ket{\psi_{x'}}$.
Defining $P_a := \ketbra a a \ot \Id_{A'}$ gives a projective measurement~$\{P_a\}_{a\in\mcO_A}$ on~$\mcH_A$.
Fixing some~$x_0\in\mcI_A$, we observe that
\begin{align*}
    p(a,b|x,y)
&=   \tr\mleft( \sigma_{xa} B_{yb} \mright)
=   \bra{\psi_x} P_a \ot B_{yb} \ket{\psi_x}
=   \bra{\psi_{x_0}} U_{xx_0}^* P_a U_{xx_0} \ot B_{yb} \ket{\psi_{x_0}},
\end{align*}
which shows that~$p(a,b|x,y)$ is a quantum spatial correlation, with Hilbert spaces~$\mcH_A$, $\mcH_B$, initial state~$\ket{\psi_{x_0}}$, and POVM elements~$A_{xa} := U_{xx_0}^* P_a U_{xx_0}$ and~$B_{yb}$ for all $a\in \mcO_A, b\in \mcO_B, x\in \mcI_A$, and $y\in \mcI_B$.
\end{proof}

It is clear by inspection of the preceding proof that if~$\mcH$ is finite-dimensional then the resulting correlation is a quantum correlation, that is, in~$C_q$.
In fact, in the finite-dimensional case, we do not have to assume that~$\msB = \mcB(\mcH)$:

\begin{proposition}\label{prop:fin dim seq}
Consider a finite-dimensional quantum strategy for~$\mcGseq$ that is strongly non-signaling.
Then the resulting correlation is a (nonlocal) quantum correlation, that is, in~$C_q$.
\end{proposition}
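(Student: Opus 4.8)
The plan is to reduce the statement to \cref{prop:quantum strong nonsig spatial}. The obstruction to applying that proposition directly is that strong non-signaling only guarantees $\tr(\sigma_x B) = \tr(\sigma_{x'} B)$ for $B$ in the $C^*$-algebra $\msB$ generated by $\{B_{yb}\}$ (by the remark following \cref{def:q str nonsig} we may take $\msB$ to be exactly this algebra, which is finite-dimensional here), rather than the stronger $\sigma_x = \sigma_{x'}$. So the idea is to ``compress'' the operators $\sigma_{xa}$ into $\msB$ in a way that leaves the correlation untouched but makes the reduced operators $\sigma_x$ literally coincide, after which \cref{prop:quantum strong nonsig spatial} applies verbatim.

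Concretely, first I would introduce the orthogonal projection $E\colon \mcB(\mcH) \to \msB$ with respect to the Hilbert--Schmidt inner product $\langle A, B\rangle = \tr(A^*B)$; since $\mcH$ is finite-dimensional this is well defined, it is self-adjoint as a linear map, and since $\Id \in \msB$ it fixes $\Id$ and is trace preserving (as $\tr(E(X)) = \langle \Id, E(X)\rangle = \langle E(\Id), X\rangle = \tr(X)$). The one fact I would quote rather than reprove is that the Hilbert--Schmidt projection onto a unital finite-dimensional $C^*$-subalgebra is a (unital, trace-preserving, completely) positive conditional expectation; this is standard and can be read off the Wedderburn decomposition $\msB \cong \bigoplus_i M_{n_i}$, where $E$ acts as a pinching followed by normalized partial traces. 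I would then set $\tilde\sigma_{xa} := E(\sigma_{xa})$. Positivity of $E$ gives $\tilde\sigma_{xa} \geq 0$, and trace preservation gives that $\tilde\sigma_x := \sum_{a\in\mcO_A} \tilde\sigma_{xa} = E(\sigma_x)$ has unit trace, so $(\mcH, \{\tilde\sigma_{xa}\}, \{B_{yb}\})$ is again a quantum sequential strategy in the sense of \cref{def:q strat seq}. Because $B_{yb} \in \msB$ and $E$ is the self-adjoint Hilbert--Schmidt projection onto $\msB$, we get $\tr(\tilde\sigma_{xa} B_{yb}) = \langle E(\sigma_{xa}), B_{yb}\rangle = \langle \sigma_{xa}, E(B_{yb})\rangle = \tr(\sigma_{xa}B_{yb})$, so the new strategy realizes the same correlation $p(a,b|x,y)$.

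Finally I would check that the new strategy is strongly non-signaling with respect to all of $\mcB(\mcH)$, i.e.\ that $\tilde\sigma_x$ does not depend on $x$. By the same self-adjointness argument, $\tr(E(\sigma_x) B) = \tr(\sigma_x B) = \tr(\sigma_{x'} B) = \tr(E(\sigma_{x'}) B)$ for every $B \in \msB$; choosing $B = E(\sigma_x) - E(\sigma_{x'})$, which is a self-adjoint element of $\msB$, and using faithfulness of the trace forces $E(\sigma_x) = E(\sigma_{x'})$. Hence $\tilde\sigma_x = \tilde\sigma_{x'}$, and since $\mcH$ is finite-dimensional, \cref{prop:quantum strong nonsig spatial} together with the finite-dimensional observation following its proof shows that the common correlation lies in $C_q$, which completes the argument. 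The only genuinely non-routine ingredient is the existence of the trace-preserving conditional expectation onto the finite-dimensional $C^*$-subalgebra $\msB$; an alternative route that sidesteps it is to unpack the Wedderburn structure of $\msB$ and assemble the spatial strategy block by block, but I expect the conditional-expectation argument to be the cleanest.
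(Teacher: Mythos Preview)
Your proof is correct and takes essentially the same approach as the paper: both replace $\sigma_{xa}$ by its image under the trace-preserving conditional expectation onto $\msB$ and then apply \cref{prop:quantum strong nonsig spatial}. The paper merely writes this conditional expectation out explicitly via the Wedderburn decomposition as $\tilde\sigma_{xa} = \bigoplus_i \tr_{M_{m_i}}(P_i \sigma_{xa} P_i) \otimes \tfrac{\Id_{m_i}}{m_i}$, which is exactly your Hilbert--Schmidt projection.
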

\begin{proof}
By the preceding discussion, it suffices to show that for every such strategy one can find some other strategy that produces the same correlations but is strongly non-signaling for~$\mcB(\mcH)$.
Then the result follows from \cref{prop:quantum strong nonsig spatial}.
By the classification of finite-dimensional $*$-algebras, we may assume that
\begin{align*}
    \mcH = \bigoplus_i (\C^{n_i} \otimes \C^{m_i}), \;\;
    \msB = \bigoplus_i M_{n_i}(\C)\otimes \Id_{m_i}, \;\;
    \text{and} \;\; \msB'=\bigoplus_i \Id_{n_i}\otimes M_{m_i}(\C),
\end{align*}
where $\msB'$ denotes the commutant of $\msB$.
Denote by $P_i$ the projection onto the $i$th direct summand and define positive operators~$\tilde{\sigma}_{xa}=\bigoplus_i \tr_{M_{m_i}(\C)}(P_i \sigma_{xa} P_i)\otimes \frac{\Id_{m_i}}{m_i}$ for~$x\in\mcI_A$ and~$a\in\mcO_A$, where $\tr_{M_{m_i}(\C)}$ is the partial trace over the second tensor factor of the $i$th summand.
Let $B=\bigoplus_i B^{(i)} \otimes \Id_{m_i}$ be an arbitrary element in~$\msB$.
We compute: 
\begin{align*}
   \tr(\sigma_{xa} B)
&= \sum_i \tr\bigl( \sigma_{xa} P_i (B^{(i)} \otimes \Id_{m_i}) P_i \bigr)\\
&= \sum_i \tr\bigl( P_i \sigma_{xa} P_i (B^{(i)} \otimes \Id_{m_i}) \bigr)\\
&= \sum_i \tr\bigl(\tr_{M_{m_i}(\C)}(P_i\sigma_{xa}P_i)B^{(i)}\bigr)\\
&= \sum_i \tr\bigl( (\tr_{M_{m_i}(\C)}(P_i \sigma_{xa} P_i)\otimes \tfrac{\Id_{m_i}}{m_i}) (B^{(i)} \otimes \Id_{m_i}) \bigr)\\
&= \tr(\tilde{\sigma}_{xa}B).
\end{align*}
On the one hand, this shows that the operators~$\tilde\sigma_{xa}$ produce the same correlations as the operators~$\sigma_{xa}$.
On the other hand, it follows that $\tr(\tilde{\sigma}_{x} B)=\tr(\tilde{\sigma}_{x'} B)$ for every $B\in \msB$, because we know that $\tr(\sigma_{x} B)=\tr(\sigma_{x'} B)$ by assumption.
Because the operators~$\tilde{\sigma}_x,\tilde{\sigma}_{x'}$ are themselves elements in~$\msB$, it follows that~$\tilde{\sigma}_{x} = \tilde{\sigma}_{x'}$ for all~$x,x'\in\mcI_A$.
Thus, $\tilde{\sigma}_{xa}$ is strongly non-signaling with respect to~$\mcB(\mcH)$ and produces the same correlations as the original strategy.
\end{proof}

Conversely, any quantum (spatial) strategy for the nonlocal game gives rise to one for the sequential game that satisfies the hypotheses of \cref{prop:quantum strong nonsig spatial}.
Simply let~$\mcH = \mcH_B$, $\sigma_{xa} = \tr_A( (M_{xa} \ot \Id_B) \ketbra\psi \psi )$, and use the operators~$B_{yb} = N_{yb}$.
Then strong non-signaling is satisfied with~$\msB = \mcB(\mcH)$.
We summarize:

\begin{corollary}\label{cor:seq quantum}~
\begin{itemize}
    \item[(i)] The quantum spatial correlation set~$C_{qs}$ consists precisely of the correlations produced by quantum sequential strategies that are strongly non-signaling \emph{with respect to $\mcB(\mcH)$}.
    \item[(ii)] The quantum correlation set~$C_q$ consists precisely of the correlations produced by \emph{finite-dimensional} quantum sequential strategies that are strongly non-signaling (we can but need not assume that~$\msB=\mcB(\mcH$)).
\end{itemize}
\end{corollary}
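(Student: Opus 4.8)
The plan is to obtain both equalities by combining the two propositions just established with the explicit converse constructions sketched immediately before the corollary; each of (i) and (ii) amounts to proving two inclusions.

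For part~(i), the statement that every correlation arising from a quantum sequential strategy that is strongly non-signaling with respect to $\mcB(\mcH)$ belongs to $C_{qs}$ is exactly \cref{prop:quantum strong nonsig spatial}, so no new argument is needed there. For the reverse inclusion I would take an arbitrary quantum spatial strategy $(\mcH_A,\mcH_B,\ket\psi,\{M_{xa}\},\{N_{yb}\})$ for $\mcG$ and pass to the sequential strategy with $\mcH:=\mcH_B$, operators $\sigma_{xa}:=\tr_A\mleft((M_{xa}\ot\Id_B)\ketbra\psi\psi\mright)$, and POVMs $B_{yb}:=N_{yb}$. The routine verifications are: each $\sigma_{xa}$ is positive semidefinite, being a partial trace of a positive operator; $\sigma_x:=\sum_{a\in\mcO_A}\sigma_{xa}=\tr_A(\ketbra\psi\psi)$ has unit trace and, since $\sum_{a\in\mcO_A} M_{xa}=\Id$, does not depend on $x$, so the strategy is strongly non-signaling with $\msB=\mcB(\mcH)$; and $\tr(\sigma_{xa}B_{yb})=\bra\psi M_{xa}\ot N_{yb}\ket\psi=p(a,b|x,y)$, so the induced correlation coincides with the original.

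For part~(ii), the inclusion of such correlations into $C_q$ is \cref{prop:fin dim seq}, together with the remark following \cref{prop:quantum strong nonsig spatial} that the correlation produced there lands in $C_q$ when $\mcH$ is finite-dimensional. For the reverse inclusion I would run the same construction starting from a finite-dimensional quantum strategy for $\mcG$: then $\mcH=\mcH_B$ is finite-dimensional and the strategy is strongly non-signaling (in fact with respect to all of $\mcB(\mcH)$, which justifies the parenthetical remark that assuming $\msB=\mcB(\mcH)$ is possible but unnecessary). I do not expect a genuine obstacle: the corollary merely repackages \cref{prop:quantum strong nonsig spatial,prop:fin dim seq} with their converses. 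The one point that warrants care is checking that the converse construction produces a bona fide quantum sequential strategy in the sense of \cref{def:q strat seq} (positivity of the $\sigma_{xa}$, unit trace of $\sigma_x$) and that the strong non-signaling condition holds exactly, which is immediate here since $\sigma_x$ is literally independent of $x$.
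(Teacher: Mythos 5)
Your proposal is correct and matches the paper's argument exactly: the forward inclusions are \cref{prop:quantum strong nonsig spatial} and \cref{prop:fin dim seq}, and the reverse inclusions follow from the same explicit converse construction ($\mcH=\mcH_B$, $\sigma_{xa}=\tr_A((M_{xa}\ot\Id_B)\ketbra\psi\psi)$, $B_{yb}=N_{yb}$) that the paper sketches in the paragraph preceding the corollary. Your routine verifications of positivity, unit trace, $x$-independence of $\sigma_x$, and matching of the correlation are all accurate.
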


The preceding \cref{cor:seq quantum} only characterizes the correlations produced by strongly non-signaling quantum correlations in special cases.
Indeed, \cref{def:q str nonsig} only requires equality on the subalgebra generated by Bob's measurement operators.
As such, the preceding results are instructive for building intuition but are insufficient for our main application, which requires the general infinite-dimensional case.
Here, unlike in finite dimensions, it is not possible to reduce to the case that~$\msB=\mcB(\mcH)$, and indeed we will find that the strongly non-signaling condition corresponds to general commuting operator correlations, as we prove in the next section.

\subsection{A characterization of quantum commuting correlations}\label{subsec:seq com}
In this subsection, we find that the strong non-signaling condition precisely characterizes the \emph{commuting operator correlations} in the sense of \cref{def:qc_strat}.
To establish this result, it is useful to define the following equivalent $C^*$-algebraic model.

\begin{definition}\label{def:nonsig a strat seq}
A \emph{strongly non-signaling algebraic strategy} consists of
\begin{enumerate}[(i)]
\item a $C^*$-algebra $\msB$,
\item positive linear functionals $\phi_{xa}\colon\msB\to\C$ for~$x\in\mcI_A$ and~$a\in\mcO_A$, along with
\item POVMs $\{B_{yb}\}_{b\in\mcO_B}$ in~$\msB$ with outcomes in~$\mcO_B$ for every~$y\in\mcI_B$,
\end{enumerate}
such that there exists a state~$\phi\colon\msB\to\C$ such that $\sum_{a\in\mcO_A} \phi_{xa} = \phi$ for every~$x\in\mcI_A$.
Such a strategy gives rise to a correlation
\begin{align*}
    p(a,b|x,y) = \phi_{xa}(B_{yb})
\end{align*}
where $a\in \mcO_A,b\in \mcO_B,x\in \mcI_A,y\in \mcI_B$.
\end{definition}

Any quantum strategy for~$\mcG$ can be converted into such an algebraic strategy
provided it satisfies the strong non-signaling property with respect to any~$C^*$-algebra~$\msB$.
Simply define the positive linear functionals~$\phi_{xa}$ by~$\phi_{xa}(B) := \tr\mleft( \sigma_{xa} B \mright)$ for all~$B\in\msB$.
Thus the algebraic model is at least as general.


We now state the key result of this section.
We will use it as an important component in the proof of our main result in \cref{sec:new-magic}.

\begin{theorem}\label{thm:quantum strong nonsig c*}
For any strongly non-signaling algebraic strategy, the resulting correlation is a (nonlocal) commuting operator correlation, that is, in~$C_{qc}$.
\end{theorem}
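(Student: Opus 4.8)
The plan is to realize the abstract strategy concretely by applying the GNS construction to the ambient state~$\phi$, and then to manufacture Alice's (commuting) measurement operators from the functionals~$\phi_{xa}$ using the Radon--Nikodym theorem for $C^*$-algebras. This is the infinite-dimensional, algebraic analogue of the purification argument in \cref{prop:quantum strong nonsig spatial}, with Radon--Nikodym playing the role that Uhlmann's theorem played there.

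\medskip\noindent\textbf{Setup.}
First I would record the elementary fact that $\phi_{xa}\leq\phi$ for all $x\in\mcI_A$, $a\in\mcO_A$: by hypothesis $\phi-\phi_{xa}=\sum_{a'\neq a}\phi_{xa'}$ is a finite sum of positive linear functionals, hence positive. Let $(\mcH_\phi,\pi_\phi,\ket{\nu_\phi})$ be a GNS triple for~$\phi$, and put $\mcH:=\mcH_\phi$, $\ket\psi:=\ket{\nu_\phi}$, and $N_{yb}:=\pi_\phi(B_{yb})$. Since $\pi_\phi$ is a unital $*$-homomorphism and $\{B_{yb}\}_{b\in\mcO_B}$ is a POVM in~$\msB$ for each~$y$, the $\{N_{yb}\}_{b\in\mcO_B}$ are POVMs on~$\mcH$, and these will be Bob's operators. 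For Alice, I would invoke the Radon--Nikodym theorem for $C^*$-algebras applied to each domination $\phi_{xa}\leq\phi$: it produces operators $M_{xa}\in\pi_\phi(\msB)'$ with $0\leq M_{xa}\leq\Id$ such that $\phi_{xa}(B)=\bra{\nu_\phi}M_{xa}\pi_\phi(B)\ket{\nu_\phi}$ for all $B\in\msB$. Because each $M_{xa}$ lies in the commutant of $\pi_\phi(\msB)$, it commutes with every $N_{yb}=\pi_\phi(B_{yb})$, which is exactly the commuting-operator constraint.

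\medskip\noindent\textbf{Normalization and assembly.}
The one thing left to verify is that $\{M_{xa}\}_{a\in\mcO_A}$ is a POVM for each fixed~$x$, i.e.\ $\sum_a M_{xa}=\Id$; this is where the strong non-signaling hypothesis $\sum_a\phi_{xa}=\phi$ enters. For every $B\in\msB$,
\[
 \bra{\nu_\phi}\Bigl(\littlesum_a M_{xa}\Bigr)\pi_\phi(B^*B)\ket{\nu_\phi}
 = \littlesum_a \phi_{xa}(B^*B) = \phi(B^*B) = \norm{\pi_\phi(B)\ket{\nu_\phi}}^2 .
\]
Since $\pi_\phi(B^*B)=\pi_\phi(B)^*\pi_\phi(B)$, the left-hand side equals $\langle v,(\sum_a M_{xa})v\rangle$ with $v=\pi_\phi(B)\ket{\nu_\phi}$; as the vectors $\pi_\phi(B)\ket{\nu_\phi}$ are dense in~$\mcH$ by cyclicity of~$\ket{\nu_\phi}$, continuity of the quadratic form and self-adjointness force $\sum_a M_{xa}=\Id$. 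Assembling everything, $\mcH$ together with the unit vector $\ket\psi=\ket{\nu_\phi}$, the POVMs $\{M_{xa}\}$ and $\{N_{yb}\}$, and the relations $[M_{xa},N_{yb}]=0$ form a commuting operator strategy, and
\[
 \bra\psi M_{xa}N_{yb}\ket\psi = \bra{\nu_\phi}M_{xa}\pi_\phi(B_{yb})\ket{\nu_\phi} = \phi_{xa}(B_{yb}) = p(a,b|x,y),
\]
so the correlation lies in $C_{qc}$ in the sense of \cref{def:qc_strat}.

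\medskip\noindent\textbf{Main obstacle.}
There is no single hard estimate here; the content is structural, and the delicate point is making two requirements hold simultaneously -- that Alice's operators land inside $\pi_\phi(\msB)'$ and that they sum to the identity. The first is delivered by the Radon--Nikodym theorem, which is the conceptual engine of the argument; the second is delivered precisely by the strong non-signaling condition $\sum_a\phi_{xa}=\phi$, via the density computation above (rather than by any constraint on a single~$\phi_{xa}$, which is exactly the subtlety flagged in \cref{rem:sigma algebra}). The only other thing to double-check is the immediate observation that each $\phi_{xa}$ is dominated by~$\phi$, which is what licenses applying Radon--Nikodym in the first place.
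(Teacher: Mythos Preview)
Your proof is correct and follows essentially the same approach as the paper: GNS on $\phi$, Radon--Nikodym for each $\phi_{xa}\leq\phi$ to produce $M_{xa}\in\pi_\phi(\msB)'$, and a density argument via cyclicity to get $\sum_a M_{xa}=\Id$. The only cosmetic difference is that the paper verifies $\sum_a M_{xa}=\Id$ by computing the full sesquilinear form $\bra{\nu_\phi}\pi_\phi(E^*)(\sum_a M_{xa})\pi_\phi(F)\ket{\nu_\phi}$ for arbitrary $E,F\in\msB$, whereas you check only the quadratic form (taking $E=F=B$) and then appeal to self-adjointness; both are valid since $\pi_\phi(\msB)\ket{\nu_\phi}$ is a linear subspace.
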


In the proof of the theorem we will use the following version of the Radon-Nikodym theorem for $C^*$-algebras, which is well-known to experts in operator algebras.
We refer the reader to \cref{sec:operator algebra prelims} for the concepts used in its statement.

\begin{proposition}[Radon-Nikodym theorem for $C^*$-algebras]\label{prop:Blackadar}
Let~$\phi$ and~$\psi$ be positive linear functionals on a unital $C^*$-algebra~$\msB$ with~$\psi\leq \phi$.
Then there exists a unique operator $T\in \pi_{\phi}(\msB)'\in \mcB(\mcH_\phi)$, with $0\leq T\leq \Id$, such that
\begin{equation*}
    \psi(B)
=   \bra{\nu_\phi} T \pi_{\phi}(B) \ket{\nu_\phi},
\end{equation*}
for all $B\in \msB$, where $(\mcH_\phi,\pi_\phi,\ket{\nu_\phi})$ is any GNS triple associated with~$\phi$.
\end{proposition}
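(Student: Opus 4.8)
The plan is to run the standard Gelfand--Naimark--Segal argument. Fix a GNS triple $(\mcH_\phi,\pi_\phi,\ket{\nu_\phi})$ for $\phi$, so that $\phi(B)=\bra{\nu_\phi}\pi_\phi(B)\ket{\nu_\phi}$ for all $B\in\msB$ and $D:=\pi_\phi(\msB)\ket{\nu_\phi}$ is dense in $\mcH_\phi$; since $\msB$ is unital, $\pi_\phi$ is unital, i.e.\ $\pi_\phi(\Id)=\Id_{\mcH_\phi}$, and in particular $\ket{\nu_\phi}=\pi_\phi(\Id)\ket{\nu_\phi}\in D$. On $D$ I would define a sesquilinear form by declaring
\[
  \langle \pi_\phi(A)\nu_\phi,\ \pi_\phi(B)\nu_\phi\rangle_\psi := \psi(A^*B), \qquad A,B\in\msB .
\]

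First I would check that this form is well defined, which is the only genuinely delicate point. If $\pi_\phi(B)\ket{\nu_\phi}=0$, then $\phi(B^*B)=\norm{\pi_\phi(B)\nu_\phi}^2=0$, hence $\psi(B^*B)=0$ since $0\leq\psi\leq\phi$, and the Cauchy--Schwarz inequality for the positive functional $\psi$ gives $\abs{\psi(A^*B)}^2\leq\psi(A^*A)\,\psi(B^*B)=0$; thus the value $\psi(A^*B)$ depends only on the vectors $\pi_\phi(A)\nu_\phi$ and $\pi_\phi(B)\nu_\phi$, not on the chosen representatives $A,B$. Applying Cauchy--Schwarz for $\psi$ once more, now combined with $\psi\leq\phi$, gives
\[
  \abs{\psi(A^*B)}\leq\psi(A^*A)^{1/2}\,\psi(B^*B)^{1/2}\leq\phi(A^*A)^{1/2}\,\phi(B^*B)^{1/2}=\norm{\pi_\phi(A)\nu_\phi}\,\norm{\pi_\phi(B)\nu_\phi} ,
\]
so the form is bounded on $D$ and extends uniquely to a bounded sesquilinear form on all of $\mcH_\phi$. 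By the Riesz representation theorem for bounded sesquilinear forms there is a unique $T\in\mcB(\mcH_\phi)$ with $\langle\xi,\eta\rangle_\psi=\bra{\xi}T\ket{\eta}$ for all $\xi,\eta\in\mcH_\phi$; unwound on $D$ and using $\pi_\phi(A)^*=\pi_\phi(A^*)$ this reads $\psi(A^*B)=\bra{\nu_\phi}\pi_\phi(A^*)\,T\,\pi_\phi(B)\ket{\nu_\phi}$ for all $A,B\in\msB$.

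Next I would read off the required properties from this identity. Taking $B=A$ gives $\bra{\pi_\phi(A)\nu_\phi}T\ket{\pi_\phi(A)\nu_\phi}=\psi(A^*A)\geq 0$, so $T\geq 0$ on the dense set $D$ and hence, by boundedness, on all of $\mcH_\phi$; the same quantity also equals $\psi(A^*A)\leq\phi(A^*A)=\norm{\pi_\phi(A)\nu_\phi}^2$, whence $T\leq\Id$. For the commutation relation, fix $C\in\msB$: using $\pi_\phi(C)\pi_\phi(B)=\pi_\phi(CB)$ and $\bra{\pi_\phi(A)\nu_\phi}\pi_\phi(C)=\bra{\pi_\phi(C^*A)\nu_\phi}$, one finds that for all $A,B\in\msB$
\[
  \bra{\pi_\phi(A)\nu_\phi}\,T\,\pi_\phi(C)\ket{\pi_\phi(B)\nu_\phi}=\psi(A^*CB)=\bra{\pi_\phi(A)\nu_\phi}\,\pi_\phi(C)\,T\ket{\pi_\phi(B)\nu_\phi} ,
\]
so by density of $D$ we obtain $T\pi_\phi(C)=\pi_\phi(C)T$, i.e.\ $T\in\pi_\phi(\msB)'$. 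Finally, substituting $A=\Id$ into $\psi(A^*B)=\bra{\nu_\phi}\pi_\phi(A^*)T\pi_\phi(B)\ket{\nu_\phi}$ and using $\pi_\phi(\Id)=\Id_{\mcH_\phi}$ yields precisely $\psi(B)=\bra{\nu_\phi}T\pi_\phi(B)\ket{\nu_\phi}$.

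For uniqueness, suppose $T'\in\pi_\phi(\msB)'$ with $0\leq T'\leq\Id$ also satisfies $\psi(B)=\bra{\nu_\phi}T'\pi_\phi(B)\ket{\nu_\phi}$ for every $B\in\msB$. Then for all $A,B\in\msB$,
\begin{align*}
  \bra{\pi_\phi(A)\nu_\phi}T'\ket{\pi_\phi(B)\nu_\phi}
  &=\bra{\nu_\phi}\pi_\phi(A^*)\,T'\,\pi_\phi(B)\ket{\nu_\phi}
   =\bra{\nu_\phi}T'\,\pi_\phi(A^*)\pi_\phi(B)\ket{\nu_\phi}\\
  &=\bra{\nu_\phi}T'\,\pi_\phi(A^*B)\ket{\nu_\phi}=\psi(A^*B) ,
\end{align*}
where the second equality uses $T'\in\pi_\phi(\msB)'$; since also $\bra{\pi_\phi(A)\nu_\phi}T\ket{\pi_\phi(B)\nu_\phi}=\psi(A^*B)$ by the definition of the form, the operators $T$ and $T'$ have identical matrix elements on the dense subspace $D$ and therefore coincide. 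This result is classical (cf.\ \cite{blackadar2006operator}); I expect the only step requiring real care to be the well-definedness and boundedness of the sesquilinear form --- concretely, combining Cauchy--Schwarz for $\psi$ with the domination $\psi\leq\phi$ to annihilate the $\phi$-null vectors --- with everything else being routine bookkeeping with the cyclic vector.
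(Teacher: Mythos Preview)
Your argument is correct and is precisely the classical GNS proof of the $C^*$-algebraic Radon--Nikodym theorem: define the sesquilinear form $\psi(A^*B)$ on the dense cyclic subspace, use Cauchy--Schwarz for $\psi$ together with $\psi\leq\phi$ for well-definedness and boundedness, represent it by a bounded operator via Riesz, and read off positivity, the bound $T\leq\Id$, and the commutation with $\pi_\phi(\msB)$ from the form identity; uniqueness follows by the same cyclic-vector bookkeeping. Each step checks out.

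Regarding comparison with the paper: the paper does not actually give a proof of this proposition---it only states the result and refers to \cite[Prop.~II.6.4.6]{blackadar2006operator} for a concise proof. Your writeup is exactly the argument one finds there (and in most operator-algebra textbooks), so you have supplied what the paper chose to omit. There is no alternative approach to contrast.
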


The version stated here is \cite[Prop.~II.6.4.6]{blackadar2006operator} and we refer to this reference for a concise~proof.

\begin{proof}[Proof of \cref{thm:quantum strong nonsig c*}]
Observe that~$\phi_{xa} \leq \phi$ for all~$x\in\mcI_A$ and~$a\in\mcO_A$.
Let $(\mcH_\phi, \pi_\phi, \ket{\nu_\phi})$ be a GNS triple associated with~$\phi$.
Then, by \cref{prop:Blackadar}, for each pair $(x,a)$ there exists an operator~$M_{xa} \in \pi_\phi(\msB)'$ (in the commutant) such that $0 \leq M_{xa} \leq \Id$ and we have, for all $B\in\msB$,
\begin{align}\label{eq:state to GNS by RN}
   \phi_{xa}(B) = \bra{\nu_\phi} M_{xa} \pi_{\phi}(B) \ket{\nu_\phi}.
\end{align}
Because~$B_{yb} \in \msB$ for all~$y,b$, it follows that, for all~$x,y,a,b$,
\begin{equation*}
    p(a,b|x,y)
=   \phi_{xa}(B_{yb})
=   \bra{\nu_\phi} M_{xa} \pi_{\phi}(B_{yb}) \ket{\nu_\phi}
=   \bra{\nu_\phi} M_{xa} N_{yb} \ket{\nu_\phi}
\end{equation*}
where $N_{yb} := \pi_{\phi}(B_{yb}) \in \pi_{\phi}(\msB)$.
Moreover, $[M_{xa},N_{yb}] = 0$ because $M_{xa} \in \pi_\phi(\msB)'$.

To conclude that~$p(a,b|x,y)$ is a commuting operator correlation, it remains to argue that $\{M_{xa}\}_{a\in \mcO_A}$ is a POVMs for each~$x\in \mcI_A$ and~$\{N_{yb}\}_{b\in\mcO_B}$ is a POVM for each~$y\in\mcI_B$.
The latter follows from the fact that each~$\{B_{yb}\}_{b\in\mcO_B}$ is a POVM and~$\pi_{\phi}$ is a $*$-homomorphism.
For the former, it suffices to prove that~$\sum_{a\in\mcO_A} M_{xa} = \Id$ for every~$x\in\mcI_A$ because we already know that the operators~$M_{xa}$ are positive.
To this end, we observe that for any two elements $E,F \in \msB$, it holds that
\begin{align*}
    \bra{\nu_\phi} \pi_\phi(E^*) \sum_{a\in\mcO_A} M_{xa} \pi_\phi(F) \ket{\nu_\phi}
&=  \sum_{a\in\mcO_A} \bra{\nu_\phi} M_{xa} \pi_\phi(E^*F) \ket{\nu_\phi} \\
&=  \sum_{a\in\mcO_A} \phi_{xa}(E^*F) \\
&=  \phi(E^*F)
=   \bra{\nu_\phi} \pi_\phi(E^*) \pi_\phi(F) \ket{\nu_\phi}
\end{align*}
where we used that $M_{xa} \in \pi_\phi(\msB)'$ and that~$\pi_\phi$ is a $*$-homomorphism.
Thus we have
\[\bra{\nu_\phi} \pi_\phi(E^*) (\sum_{a\in\mcO_A} M_{xa} - \Id) \pi_\phi(F) \ket{\nu_\phi} = 0\]
for all $E,F \in \msB$.
Since $\pi_\phi(\msB)\ket{\nu_\phi}$ is dense in $\mcH_\phi$, we deduce that $\sum_{a\in \mcO_A} M_{xa}=\Id$, as desired, concluding the proof.
\end{proof}

Finally, any commuting operator strategy for the nonlocal game (\cref{def:qc_strat}) gives rise to a strongly non-signaling quantum strategy.
Simply use the same Hilbert space~$\mcH$, $\sigma_{xa} = \sqrt{M_{xa}} \ketbra\psi \psi \sqrt{M_{xa}}$, the operators~$B_{yb} = N_{yb}$, and let~$\msB$ denote the~$C^*$-algebra generated by the operators~$\{B_{yb}\}_{b\in\mcO_B,y\in \mcI_B}$.
Then it is easily verified that strong non-signaling holds for~$\msB$, noting that $M_{xa} \in \msB'$ (as we started from a commuting operator strategy) and hence the same is true for its positive square roots~$\sqrt{M_{xa}}$.
Altogether we obtain the following corollary which may be of independent interest.

\begin{corollary}\label{cor:eqcharacterization}
The commuting operator correlation set~$C_{qc}$ is equal to the correlations produced by strongly non-signaling algebraic strategies (\cref{def:nonsig a strat seq}), as well as to the correlations produced by (possibly infinite-dimensional) strongly non-signaling quantum sequential strategies (\cref{def:q str nonsig}).
\end{corollary}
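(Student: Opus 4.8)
The plan is to establish a cycle of inclusions among the three sets appearing in the statement: the commuting operator correlations $C_{qc}$, the correlations arising from strongly non-signaling algebraic strategies (\cref{def:nonsig a strat seq}), and the correlations arising from (possibly infinite-dimensional) strongly non-signaling quantum sequential strategies (\cref{def:q str nonsig}). Writing these as $C_{qc}$, $\mcP_{\mathrm{alg}}$, and $\mcP_{\mathrm{seq}}$ respectively, I would prove $C_{qc} \subseteq \mcP_{\mathrm{seq}} \subseteq \mcP_{\mathrm{alg}} \subseteq C_{qc}$. The last inclusion is precisely \cref{thm:quantum strong nonsig c*}, which is already established. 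The middle inclusion is the observation recorded just before the corollary: a quantum sequential strategy that is strongly non-signaling with respect to some $C^*$-algebra $\msB \subseteq \mcB(\mcH)$ yields an algebraic strategy by setting $\phi_{xa}(B) := \tr(\sigma_{xa} B)$ for $B \in \msB$, since positivity is immediate from $\sigma_{xa} \geq 0$, the common state is $\phi(B) := \tr(\sigma_x B)$ (well-defined and unital because $\sigma_x$ is independent of $x$ on $\msB$ and has unit trace), and the correlation $\phi_{xa}(B_{yb}) = \tr(\sigma_{xa} B_{yb})$ is unchanged. Hence the only genuinely new work is the first inclusion.

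For $C_{qc} \subseteq \mcP_{\mathrm{seq}}$, I would start from a commuting operator strategy $(\mcH,\ket{\psi},\{M_{xa}\},\{N_{yb}\})$ for $\mcG$ and define a sequential strategy on the same Hilbert space by $\sigma_{xa} := \sqrt{M_{xa}} \, \ketbra{\psi}{\psi} \, \sqrt{M_{xa}}$ and $B_{yb} := N_{yb}$, taking $\msB$ to be the $C^*$-algebra generated by $\{B_{yb}\}_{y,b}$. Three checks remain. First, $\sigma_x := \sum_{a\in\mcO_A} \sigma_{xa}$ has unit trace, which follows from $\tr(\sqrt{M}\,\rho\,\sqrt{M}) = \tr(M\rho)$ together with $\sum_{a\in\mcO_A} M_{xa} = \Id$. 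Second, the strategy is strongly non-signaling with respect to $\msB$: since $M_{xa}$ commutes with every $N_{yb}$, it commutes with the $C^*$-algebra they generate, i.e.\ $M_{xa} \in \msB'$, and hence so does its positive square root $\sqrt{M_{xa}}$ (by continuous functional calculus); therefore $\tr(\sigma_x B) = \tr\bigl( (\sum_{a\in\mcO_A} M_{xa}) \, \ketbra{\psi}{\psi} \, B \bigr) = \bra{\psi} B \ket{\psi}$ for all $B \in \msB$, which does not depend on $x$. Third, the induced correlation reproduces the original one, because $\tr(\sigma_{xa} B_{yb}) = \bra{\psi} \sqrt{M_{xa}} \, N_{yb} \, \sqrt{M_{xa}} \ket{\psi} = \bra{\psi} M_{xa} N_{yb} \ket{\psi}$, again using commutativity.

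Combining the three inclusions gives $C_{qc} = \mcP_{\mathrm{seq}} = \mcP_{\mathrm{alg}}$, which is exactly the assertion of the corollary. I do not expect a real obstacle: every step is an elementary manipulation, and all of the substantive content — the GNS construction and the Radon--Nikodym theorem for $C^*$-algebras — has already been used in the proof of \cref{thm:quantum strong nonsig c*}. The one point that deserves care is that strong non-signaling must be invoked on the whole algebra $\msB$ rather than only on the generators $\{B_{yb}\}$, which is precisely why \cref{def:q str nonsig} is phrased through a containing $C^*$-algebra and why the relevant fact in the second check is $\sqrt{M_{xa}} \in \msB'$ and not merely $M_{xa} \in \{N_{yb}\}'$.
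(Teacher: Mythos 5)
Your proposal is correct and follows essentially the same route as the paper: the inclusion $\mcP_{\mathrm{alg}} \subseteq C_{qc}$ is \cref{thm:quantum strong nonsig c*}, the inclusion $\mcP_{\mathrm{seq}} \subseteq \mcP_{\mathrm{alg}}$ is the observation recorded right after \cref{def:nonsig a strat seq}, and the inclusion $C_{qc} \subseteq \mcP_{\mathrm{seq}}$ is the construction with $\sigma_{xa} = \sqrt{M_{xa}}\ketbra\psi\psi\sqrt{M_{xa}}$ and the remark that $\sqrt{M_{xa}} \in \msB'$, all exactly as in the paragraph preceding the corollary. Your identification of the subtlety (verifying non-signaling on the full $C^*$-algebra rather than only on the generators) matches the paper's care on the same point.
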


As mentioned in \cref{sub:main_results}, this characterization of the quantum commuting operator correlations was independently established in the context of steering~\cite[Cor.~5.3]{banacki2023steering}.

\section{Upper bound on the quantum value of compiled nonlocal games}\label{sec:new-magic}
In this section we prove that the quantum value of a compiled game never exceeds the commuting operator value of the corresponding nonlocal game (\cref{subsec:main}).
Using the same techniques, we also deduce a self-testing result (\cref{subsec:selftest})

\subsection{Upper bound on the quantum value}\label{subsec:main}
The basic idea is as follows.
In \cref{sec:crypto-compiler}, we showed that any QPT strategy of a nonlocal game satisfies an analogue of the strong non-signaling condition discussed in \cref{sec:alternative-characterization}.
More precisely, \cref{prop:polysecurity} states that~\eqref{eq:poly cond} holds to arbitrary precision when the security parameter tends to infinity, for any fixed polynomial in the Bob POVMs.
We would like to take a limit, but as the Hilbert spaces will depend on the security parameter, we instead work with a single \emph{universal} $C^*$-algebra.
We can then define a sequence of states on this algebra, which captures precisely all information that can be accessed using the Bob POVMs, for every value of the security parameter, and use compactness of the state space of a $C^*$-algebra to define a limit where the strong non-signaling condition holds exactly.
The result then follows from \cref{thm:quantum strong nonsig c*}. As an application of \cref{thm:compiled-game-bound} we state a result concerning commuting operator self-testing for compiled games.

We now describe the required $C^*$-algebra, denoted~$\msAPOVM^{\mcI_B,\mcO_B}$ for finite sets~$\mcI_B$ and~$\mcO_B$, which is often called the \emph{POVM algebra} \cite{paddock2023operator}.
It has elements~$\{ B_{yb} \}_{y\in\mcI_B,b\in\mcO_B}$ which satisfy the relations~$0 \leq B_{yb} \leq \Id$ and $\sum_{b\in \mcO_B} B_{yb}=1$ for each $y\in \mcI_B$.
Importantly, it satisfies the following \emph{universal property}:
for any Hilbert space~$\tilde\mcH$ and any collection of POVMs~$\{\tilde B_{yb}\}$ on~$\tilde\mcH$, there eyists a unique $*$-homomorphism $\theta \colon \msAPOVM^{\mcI_B,\mcO_B} \to \mcB(\tilde\mcH)$ sending $B_{yb} \mapsto \tilde B_{yb}$ for all $y\in\mcI_B$ and~$b\in\mcO_B$.
The POVM $C^*$-algebras are separable as they are finitely generated.

\begin{theorem}\label{thm:compiled-game-bound}
Let $\mcG$ be any two-player nonlocal game and let~$S$ be any QPT strategy for the compiled game~$\mcGcomp$.
Then it holds that
\begin{align*}
    \limsup_{\lambda\to\infty} \omega_\lambda(\mcGcomp,S) \leq \omega_{qc}(\mcG)
\end{align*}
\end{theorem}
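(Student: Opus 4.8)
The plan is to combine the computational-security input (\cref{prop:polysecurity}) with a compactness argument on the state space of the universal POVM algebra $\msAPOVM^{\mcI_B,\mcO_B}$, and then feed the limiting object into the purely information-theoretic characterization of commuting operator correlations (\cref{thm:quantum strong nonsig c*}). Concretely: fix a QPT strategy $S$. By passing to a subsequence, I may assume $\omega_\lambda(\mcGcomp,S)$ converges to $\limsup_\lambda \omega_\lambda(\mcGcomp,S)$; it then suffices to realize this limit as the winning probability of some commuting operator correlation. Recall from \cref{eq:p lambda qpt prover} that $p_\lambda(a,b|x,y) = \tr(\sigma_{xa}^\lambda B_{yb}^\lambda)$ for subnormalized states $\sigma_{xa}^\lambda$ on $\mcH_\lambda$ and Bob POVMs $\{B_{yb}^\lambda\}$. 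Using the universal property, let $\theta_\lambda \colon \msAPOVM^{\mcI_B,\mcO_B}\to\mcB(\mcH_\lambda)$ be the $*$-homomorphism with $\theta_\lambda(B_{yb}) = B_{yb}^\lambda$, and define positive linear functionals $\phi_{xa}^\lambda(\cdot) := \tr(\sigma_{xa}^\lambda\,\theta_\lambda(\cdot))$ on the single, $\lambda$-independent, separable algebra $\msAPOVM^{\mcI_B,\mcO_B}$. Each $\phi_{xa}^\lambda$ has norm $\phi_{xa}^\lambda(\Id) = \tr(\sigma_{xa}^\lambda)\le 1$, so by sequential weak-$*$ compactness (Banach--Alaoglu, applicable since the algebra is separable) I extract a further subsequence along which, simultaneously for all of the finitely many pairs $(x,a)\in\mcI_A\times\mcO_A$, the functionals $\phi_{xa}^\lambda$ converge pointwise to positive linear functionals $\phi_{xa}$ on $\msAPOVM^{\mcI_B,\mcO_B}$.

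\textbf{Next I would verify the strong non-signaling property of the limit.} Set $\phi_x := \sum_{a\in\mcO_A}\phi_{xa}$; this is a pointwise limit of the functionals $\sum_a \phi_{xa}^\lambda(\cdot) = \tr(\sigma_x^\lambda\,\theta_\lambda(\cdot))$, where $\sigma_x^\lambda = \sum_a\sigma_{xa}^\lambda$ is a genuine density operator, so $\phi_x(\Id)=1$ and $\phi_x$ is a state on $\msAPOVM^{\mcI_B,\mcO_B}$. Now I claim $\phi_x = \phi_{x'}$ for all $x,x'\in\mcI_A$. It suffices to check equality on the dense subalgebra of noncommutative polynomials $P(\{B_{yb}\})$; there, $\theta_\lambda(P(\{B_{yb}\})) = P(\{B_{yb}^\lambda\})$, so $\phi_x(P) - \phi_{x'}(P) = \lim_\lambda\bigl(\tr(\sigma_x^\lambda P(\{B_{yb}^\lambda\})) - \tr(\sigma_{x'}^\lambda P(\{B_{yb}^\lambda\}))\bigr) = 0$ by \cref{prop:polysecurity}, since each such difference is bounded by a negligible function of $\lambda$. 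Hence the tuple $(\msAPOVM^{\mcI_B,\mcO_B},\{\phi_{xa}\},\{B_{yb}\})$ is a strongly non-signaling algebraic strategy in the sense of \cref{def:nonsig a strat seq}, with the witnessing state $\phi := \phi_x$.

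\textbf{Then I would close the argument.} Let $p(a,b|x,y) := \phi_{xa}(B_{yb})$ be the correlation of this algebraic strategy. By \cref{thm:quantum strong nonsig c*}, $p\in C_{qc}$. On the other hand, $p(a,b|x,y) = \lim_\lambda \phi_{xa}^\lambda(B_{yb}) = \lim_\lambda \tr(\sigma_{xa}^\lambda B_{yb}^\lambda) = \lim_\lambda p_\lambda(a,b|x,y)$ for every $(a,b,x,y)$ (a finite set of indices). Since the winning probability $\omega(\mcG,\cdot)$ is a fixed linear functional of the correlation (\cref{eq:omega G S and p}), we get
\begin{align*}
    \limsup_{\lambda\to\infty}\omega_\lambda(\mcGcomp,S) = \lim_{\lambda\to\infty}\omega(\mcG,p_\lambda) = \omega(\mcG,p) \le \sup_{p'\in C_{qc}}\omega(\mcG,p') = \omega_{qc}(\mcG),
\end{align*}
which is the claimed bound. (The first equality is by our choice of subsequence, and the last inequality is the definition of $\omega_{qc}(\mcG)$.)

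\textbf{The main obstacle} I anticipate is purely bookkeeping rather than conceptual: one must be careful that a single subsequence of $\lambda$ works simultaneously for the limsup of the winning probability \emph{and} for the weak-$*$ convergence of all $\lvert\mcI_A\rvert\cdot\lvert\mcO_A\rvert$ functionals $\phi_{xa}^\lambda$ — this is handled by a finite iteration of the diagonal/subsequence extraction, using separability of $\msAPOVM^{\mcI_B,\mcO_B}$. A secondary subtlety worth spelling out is that the $\phi_{xa}^\lambda$ live on the abstract algebra and not on any fixed $\mcB(\mcH)$, which is exactly what makes the compactness argument go through where a naive "take the limit of the Hilbert spaces" would fail; the universal property of the POVM algebra is what makes this legitimate, and one should confirm that $\theta_\lambda$ indeed sends polynomials to the corresponding polynomials in the $B_{yb}^\lambda$ so that \cref{prop:polysecurity} applies verbatim.
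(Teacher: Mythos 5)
Your proposal is correct and follows essentially the same route as the paper: pass to a subsequence realizing the $\limsup$, lift the Bob POVMs via the universal property of $\msAPOVM^{\mcI_B,\mcO_B}$ to define functionals $\phi_{xa}^\lambda$ on a fixed separable $C^*$-algebra, extract a weak-$*$ convergent further subsequence by Banach--Alaoglu, verify strong non-signaling of the limit via \cref{prop:polysecurity} and density of noncommutative polynomials, and conclude with \cref{thm:quantum strong nonsig c*}. This matches the paper's argument step by step.
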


As a direct consequence, we obtain the following upper bound on the (maximal) quantum value of any compiled game (\cref{def:q values compiled game}).

\begin{corollary}\label{cor:comp_val_bound}
For any two-player nonlocal game~$\mcG$, we have~$\omega_{q,\max}(\mcGcomp)\nolinebreak\leq\nolinebreak\omega_{qc}(\mcG)$.
\end{corollary}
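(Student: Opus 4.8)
The plan is to transport the data of an arbitrary QPT strategy onto the single separable $C^*$-algebra $\msAPOVM^{\mcI_B,\mcO_B}$, extract a weak-$*$ limit by Banach--Alaoglu, verify that the limit is a strongly non-signaling algebraic strategy, and then conclude via \cref{thm:quantum strong nonsig c*}.

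Concretely, fix a QPT strategy $S$ described (in the notation introduced after \cref{def:qpt prover}) by Hilbert spaces $\mcH^\lambda$, states $\ket{\psi^\lambda}$, first-round operators $A^\lambda_{\xi\alpha}$, and POVMs $\{B^\lambda_{yb}\}_{b\in\mcO_B}$, with correlations $p_\lambda(a,b|x,y)=\tr(\sigma^\lambda_{xa}B^\lambda_{yb})$ as in \eqref{eq:p lambda qpt prover} and $\sigma^\lambda_{xa}$ as in \eqref{eq:sigma_xa}. For each $\lambda$ the universal property of $\msAPOVM^{\mcI_B,\mcO_B}$ yields a $*$-homomorphism $\theta_\lambda\colon\msAPOVM^{\mcI_B,\mcO_B}\to\mcB(\mcH^\lambda)$ with $\theta_\lambda(B_{yb})=B^\lambda_{yb}$. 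Define positive linear functionals $\phi^\lambda_{xa}(\cdot)=\tr(\sigma^\lambda_{xa}\,\theta_\lambda(\cdot))$; positivity holds because $\theta_\lambda$ is a $*$-homomorphism and $\sigma^\lambda_{xa}\ge 0$. Setting $\phi^\lambda_x:=\sum_{a\in\mcO_A}\phi^\lambda_{xa}$, one has $\norm{\phi^\lambda_x}=\phi^\lambda_x(\Id)=\tr(\sigma^\lambda_x)=1$ by \eqref{eq:sigma_x}, hence $\norm{\phi^\lambda_{xa}}\le 1$ for every $(x,a)$, and $\phi^\lambda_x(P(\{B_{yb}\}))=\tr(\sigma^\lambda_x\,P(\{B^\lambda_{yb}\}))$ for every noncommutative polynomial $P$ since $\theta_\lambda$ is a $*$-homomorphism.

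Next, first pass to a subsequence along which $\omega_\lambda(\mcGcomp,S)$ converges to $\limsup_{\lambda\to\infty}\omega_\lambda(\mcGcomp,S)$; within it, since $\msAPOVM^{\mcI_B,\mcO_B}$ is separable and $\mcI_A\times\mcO_A$ is finite, the sequential Banach--Alaoglu theorem together with finitely many diagonalizations produces a further subsequence $\{\lambda_k\}$ along which $\phi^{\lambda_k}_{xa}\to\phi_{xa}$ pointwise for all $(x,a)$ simultaneously. The limits $\phi_{xa}$ are positive and $\phi_x:=\sum_a\phi_{xa}$ is a state, as $\phi_x(\Id)=\lim_k\tr(\sigma^{\lambda_k}_x)=1$. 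Moreover \cref{prop:polysecurity} gives, for every noncommutative polynomial $P$ and all $x,x'\in\mcI_A$, that $\lvert\phi^\lambda_x(P)-\phi^\lambda_{x'}(P)\rvert=\lvert\tr(\sigma^\lambda_x\,P(\{B^\lambda_{yb}\}))-\tr(\sigma^\lambda_{x'}\,P(\{B^\lambda_{yb}\}))\rvert\le\eta(\lambda)\to 0$, so passing to the limit along $\{\lambda_k\}$ yields $\phi_x(P)=\phi_{x'}(P)$ on the dense subset of polynomials, hence $\phi_x=\phi_{x'}$ on all of $\msAPOVM^{\mcI_B,\mcO_B}$ by boundedness. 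Thus $(\msAPOVM^{\mcI_B,\mcO_B},\{\phi_{xa}\},\{B_{yb}\})$ is a strongly non-signaling algebraic strategy (\cref{def:nonsig a strat seq}), and by \cref{thm:quantum strong nonsig c*} the correlation $p(a,b|x,y):=\phi_{xa}(B_{yb})$ lies in $C_{qc}$.

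Finally, $p(a,b|x,y)=\lim_k\phi^{\lambda_k}_{xa}(B_{yb})=\lim_k\tr(\sigma^{\lambda_k}_{xa}B^{\lambda_k}_{yb})=\lim_k p_{\lambda_k}(a,b|x,y)$, and since the winning probability is a fixed linear (hence continuous) functional of the correlation by \eqref{eq:omega G S and p}, we get $\omega(\mcG,p)=\lim_k\omega_{\lambda_k}(\mcGcomp,S)=\limsup_{\lambda\to\infty}\omega_\lambda(\mcGcomp,S)$; as $p\in C_{qc}$ this is at most $\omega_{qc}(\mcG)$, proving the theorem, and \cref{cor:comp_val_bound} follows by taking the supremum over QPT strategies. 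The substantive content --- \cref{prop:polysecurity} and \cref{thm:quantum strong nonsig c*} --- being already in hand, the only genuinely delicate point here is that the Hilbert spaces $\mcH^\lambda$ and operators $B^\lambda_{yb}$ depend on $\lambda$, which is exactly why one must route everything through the fixed algebra $\msAPOVM^{\mcI_B,\mcO_B}$ before taking limits, and why the order of the two subsequence extractions (limsup first, then Banach--Alaoglu) matters.
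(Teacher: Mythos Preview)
Your proof is correct and follows essentially the same route as the paper: the paper derives \cref{cor:comp_val_bound} as an immediate consequence of \cref{thm:compiled-game-bound}, whose proof proceeds exactly as you do---transport the strategy to $\msAPOVM^{\mcI_B,\mcO_B}$ via the universal property, first pass to a subsequence realizing the $\limsup$, then apply sequential Banach--Alaoglu (iteratively over the finite set $\mcI_A\times\mcO_A$) to extract weak-$*$ limits $\phi_{xa}$, use \cref{prop:polysecurity} on polynomials to verify strong non-signaling, and invoke \cref{thm:quantum strong nonsig c*}. Your emphasis on the order of subsequence extractions and on routing through a fixed algebra matches the paper's reasoning precisely.
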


We now prove the theorem.

\begin{proof}[Proof of \cref{thm:compiled-game-bound}]
Recall from \cref{eq:p lambda qpt prover} that for each value of the security parameter~$\lambda\in\N$ there exists a Hilbert space~$\mcH_\lambda$, positive operators~$\sigma^\lambda_{xa}$ for~$x\in\mcI_A$ and~$a\in\mcO_A$ such that each~$\sigma^\lambda_x := \sum_{a\in\mcO_A} \sigma^\lambda_{xa}$ is a state, and POVMs~$\{B_{yb}^\lambda\}_{b\in\mcO_B}$ for all~$y\in\mcI_B$, such that the correlations take the following form:
\begin{align*}
    p_\lambda(a,b|x,y)
= \tr(\sigma^\lambda_{xa} B_{yb}^\lambda)
\end{align*}
After passing to a subsequence, we may assume that the limit
\begin{align*}
    \lim_{\lambda\to\infty} \omega_\lambda(\mcGcomp,S)
\end{align*}
exists and is equal to the~$\limsup$ of the original sequence.

The theorem follows if we can show that there exists a further subsequence~$\{\lambda_k\}_{k\in\N}$ such that the correlations~$p_{\lambda_k}$ converge to a commuting operator correlation.
To this end, let $\msAPOVM^{\mcI_B,\mcO_B}$ denote the POVM $C^*$-algebra described above, with its generators~$\{ B_{yb} \}$.
By the universal property, there exist $*$-homomorphisms
\[ \vartheta_{\lambda} \colon\msAPOVM^{\mcI_B,\mcO_B}\to \mcB(\mcH_{\lambda}) \]
such that $\vartheta_{\lambda}(B_{yb})=B_{yb}^{\lambda}$ for all~$\lambda,y,b$.
We can use these to define linear functionals
\begin{align}\label{eq:def phi lambda on c*}
    \phi_{xa}^\lambda \colon \msAPOVM^{\mcI_B,\mcO_B} \to \C, \quad \phi_{xa}^\lambda(\cdot) = \tr(\sigma_{xa}^\lambda \vartheta_{\lambda}(\cdot)).
\end{align}
Observe that each~$\phi_{xa}^\lambda$ is a positive linear functional of norm~$\norm{\phi_{xa}^\lambda} = \phi_{xa}^\lambda(\Id) = \tr(\sigma_{xa}^\lambda) \leq 1$.
Thus we can apply the Banach--Alaoglu theorem (\cref{sec:operator algebra prelims}) to deduce that, for each $x\in \mcI_A, a \in \mcO_A$, the sequence $\{\phi_{xa}^\lambda\}_{\lambda\in\N}$ (and any subsequence thereof) has a weak-$*$ convergent subsequence.
By iteratively passing to convergent subsequences (recall that~$\mcI_A$ and~$\mcO_A$ are finite sets), we obtain a strictly increasing subsequence~$\{\lambda_k\}_{k\in\N}$ and positive linear functionals~$\phi_{xa} \colon \msAPOVM^{\mcI_B,\mcO_B} \to \C$ such that
\begin{align}\label{eq:weak star}
    \lim_{k\to\infty} \phi_{xa}^{\lambda_k}(B) = \phi_{xa}(B)
\end{align}
for every~$x\in\mcI_A$, $a\in\mcO_A$, and $B\in\msAPOVM^{\mcI_B,\mcO_B}$.
Let $\phi_x := \sum_{a\in\mcO_A} \phi_{xa}$.
These are states, because~$\sum_{a\in\mcO_A} \phi_{xa}^\lambda(\Id) = \tr(\sigma^\lambda_x) = 1$ and hence also~$\phi_x(\Id) = 1$, by~\eqref{eq:weak star}.
We now show that~$\phi_x = \phi_{x'}$ for all~$x,x'\in\mcI_A$.
To this end, take any fixed polynomial~$P(\{B_{yb}\})$ in the generators~$B_{yb}$ of $\msAPOVM^{\mcI_B,\mcO_B}$.
Using \cref{eq:def phi lambda on c*,eq:weak star},
\begin{align*}
    \phi_{xa}(P(\{B_{yb}\}))
&=   \lim_{k\to\infty} \phi_{xa}^{\lambda_k}\mleft( P(\{B_{yb}\}) \mright) \\
&=   \lim_{k\to\infty} \tr\mleft( \sigma_{xa}^{\lambda_k} \, \vartheta_{\lambda_k}(P(\{B_{yb}\})) \mright) \\
&=   \lim_{k\to\infty} \tr\mleft( \sigma_{xa}^{\lambda_k} \, P(\{B^{\lambda_k}_{yb}\}) \mright)
\end{align*}
and hence
\begin{align*}
    \phi_x(P(\{B_{yb}\})) = \lim_{k\to\infty} \tr\mleft( \sigma_x^{\lambda_k} \, P(\{B^{\lambda_k}_{yb}\}) \mright).
\end{align*}
Now \cref{prop:polysecurity} implies that~$\phi_x(B) = \phi_{x'}(B)$ for all~$x,x'\in\mcI_A$ and any element of the form~$B = P(\{B_{yb}\})$.
Since these elements are dense in $\msAPOVM^{\mcI_B,\mcO_B}$, it follows that~$\phi_x = \phi_{x'}$ for all~$x,x'\in\mcI_A$.
Thus we have proved that the~$C^*$-algebra~$\msAPOVM^{\mcI_B,\mcO_B}$ along with the functionals~$\{\phi_{xa}\}$ and the operators~$\{B_{yb}\}$ constitute a strongly non-signaling algebraic strategy for the sequential game~$\mcGseq$.
Using \cref{thm:quantum strong nonsig c*}, we obtain that
\[ p(a,b|x,y) = \phi_{xa}(B_{yb}) \]
is a commuting operator correlation.
On the other hand, \cref{eq:weak star} implies that
\begin{align*}
    \lim_{k\to\infty} p_{\lambda_k}(a,b|x,y) = p(a,b|x,y)
\end{align*}
for all~$a,b,x,y$.
It follows that
\begin{align*}
  \lim_{\lambda\to\infty} \omega_\lambda(\mcGcomp, S)
= \lim_{k\to\infty} \omega_{\lambda_k}(\mcGcomp, S)
= \lim_{k\to\infty} \omega(\mcG, p_{\lambda_k})
= \omega(\mcG, p)
\leq \omega_{qc}(\mcG),
\end{align*}
and this concludes the proof of the theorem.
\end{proof}

\subsection{Self-testing in compiled nonlocal games}\label{subsec:selftest}
Many applications of nonlocal games take advantage of the concept of self-testing; prominent examples include \cite{reichardt2013classical,coladangelo2024verifier,ji2020quantum}.
The standard definition is as follows: a \emph{self-test} is a quantum correlation such that any quantum strategy realizing the correlation is equivalent to an \emph{ideal strategy} via local dilations (isometries).
In particular, this implies that in any quantum strategy realizing those correlations, the expectation value of any polynomial in Alice's and Bob's POVM operators is uniquely determined.
This point of view suggests a more abstract definition of self-testing, which has recently been described in~\cite{paddock2023operator}.
It has the advantage that it can also be adapted to the commuting operator setting.

To describe their definition, we first observe that any commuting operator strategy (\cref{def:qc_strat}) gives rise to a state~$\Psi$ on the $C^*$-algebra $\msA := \msAPOVM^{\mcI_A,\mcO_A}\ot_{\max}\msAPOVM^{\mcI_B,\mcO_B}$, via the universal property.
Here, $\otimes_{\max}$ denotes the max tensor product of the two POVM algebras, whose generators we denote by~$\{A_{xa}\}$ and~$\{B_{yb}\}$, respectively (see \cref{subsec:main} for their definition).
Concretetly, this state is uniquely defined by the property that, for any noncommutative polynomial~$P$,
\begin{align}\label{eq:Phi def}
    \Psi(P(\{A_{xa}, B_{yb}\})) = \bra\psi P(\{M_{xa}, N_{yb}\}) \ket\psi.
\end{align}
The converse also holds: for any state on~$\msA$ we can obtain a commuting operator strategy, by the GNS construction.
Hence, a correlation~$p(a,b|x,y)$ is a commuting operator correlation if, and only if, there exists a state~$\Psi$ on $\msA$ such that $\Psi(A_{xa}\otimes B_{yb})=p(a,b|x,y)$ for all $a\in \mcO_A, b\in \mcO_B, x\in \mcI_A$, $y\in \mcI_B$.
This characterization is well known~\cite{junge2011connes,fritz2012tsirelson,ozawa2013tsirelson}.
The insight of \cite{paddock2023operator} is that this gives rise to a natural definition of self-testing.
For example, a \emph{commuting operator self-test} is simply a correlation~$p$ such that there is a \emph{unique} such state~$\Psi$~\cite[Def.~7.1]{paddock2023operator}.%
\footnote{Similarly, the finite-dimensional states on~$\msA$ (i.e., those for which the GNS Hilbert space is finite-dimensional) correspond precisely to the quantum correlations, and one recovers the traditional definition of self-testing by demanding that there is a unique \emph{finite-dimensional} state realizing the given correlations.}
Very often, self-tests arise from nonlocal games.
We make the following definition:

\begin{definition}\label{def:com selftest}
A nonlocal game~$\mcG$ is called a \emph{commuting operator self-test} if any commuting operator strategy~$S$ such that~$\omega(\mcG,S)=\omega_{qc}(\mcG)$ determines the \emph{same} state~$\Psi$ on $\msAPOVM^{\mcI_A,\mcO_A}\ot_{\max}\msAPOVM^{\mcI_B,\mcO_B}$ via \cref{eq:Phi def}.
We say that $\mcG$ is a \emph{commuting operator self-test with corresponding state~$\Psi$} to indicate~$\Psi$ explicitly.
\end{definition}

While less well-known than ordinary self-tests, there are many examples of commuting operator self-tests.
For instance, the CHSH game is not just an ordinary self-test but even a commuting operator self-test, meaning that any optimal commuting operator strategy gives rise to the same (finite-dimensional) state~\cite{paddock2023operator,frei2022quantum}.

To derive our self-testing result, it will be useful to translate this definition into the language of sequential games:

\begin{lemma}\label{lem:translate algebra}
Let $\mcG$ be a commuting operator self-test with corresponding state~$\Psi$.
Let~$\tilde S$ be a strongly non-signaling algebraic strategy for~$\mcGseq$, with positive linear functionals $\phi_{xa}$ and POVMs~$\{\tilde B_{yb}\}$.
If~$\omega(\mcGseq,\tilde S) = \omega_{qc}(\mcG)$, then it holds, for all~$x\in\mcI_A$ and~$a\in\mcO_A$ and for every noncommutative polynomial~$P$, that
\[ \phi_{xa}(P(\{\tilde B_{yb}\})) = \Psi(A_{xa} \ot P(\{B_{yb}\})). \]
\end{lemma}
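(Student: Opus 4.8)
The plan is to feed the strategy $\tilde S$ into the GNS-plus-Radon--Nikodym construction used to prove \cref{thm:quantum strong nonsig c*}, observe that the commuting operator strategy it produces for $\mcG$ is \emph{optimal}, and then invoke the self-test hypothesis to identify its expectation values with those of $\Psi$.

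First I would fix a GNS triple $(\mcH_\phi,\pi_\phi,\ket{\nu_\phi})$ for the state $\phi := \sum_{a\in\mcO_A}\phi_{xa}$, which exists and does not depend on $x$ by definition of a strongly non-signaling algebraic strategy (\cref{def:nonsig a strat seq}). Since $\phi_{xa}\leq\phi$ for every pair $(x,a)$, \cref{prop:Blackadar} yields operators $M_{xa}\in\pi_\phi(\msB)'$ with $0\leq M_{xa}\leq\Id$ such that $\phi_{xa}(B)=\bra{\nu_\phi}M_{xa}\pi_\phi(B)\ket{\nu_\phi}$ for all $B\in\msB$. Setting $N_{yb}:=\pi_\phi(\tilde B_{yb})$, the argument in the proof of \cref{thm:quantum strong nonsig c*} shows that $\{M_{xa}\}_{a}$ and $\{N_{yb}\}_{b}$ are POVMs on $\mcH_\phi$, that $[M_{xa},N_{yb}]=0$, and that the commuting operator strategy $(\mcH_\phi,\ket{\nu_\phi},\{M_{xa}\},\{N_{yb}\})$ reproduces the correlation $p(a,b|x,y)=\phi_{xa}(\tilde B_{yb})$. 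In particular it has winning probability $\omega(\mcG,p)=\omega(\mcGseq,\tilde S)=\omega_{qc}(\mcG)$, so it is an optimal commuting operator strategy for $\mcG$.

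Next, by the universal property of the maximal tensor product together with the commutation relation $[M_{xa},N_{yb}]=0$, there is a unique $*$-homomorphism $\msAPOVM^{\mcI_A,\mcO_A}\ot_{\max}\msAPOVM^{\mcI_B,\mcO_B}\to\mcB(\mcH_\phi)$ with $A_{xa}\mapsto M_{xa}$ and $B_{yb}\mapsto N_{yb}$, hence $A_{xa}\ot P(\{B_{yb}\})\mapsto M_{xa}\,P(\{N_{yb}\})$ for every noncommutative polynomial $P$. Since $\mcG$ is a commuting operator self-test with corresponding state $\Psi$ and the strategy just constructed is optimal, \cref{def:com selftest} together with \cref{eq:Phi def} gives $\Psi(A_{xa}\ot P(\{B_{yb}\}))=\bra{\nu_\phi}M_{xa}\,P(\{N_{yb}\})\ket{\nu_\phi}$. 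On the other hand, applying the Radon--Nikodym formula to the element $B=P(\{\tilde B_{yb}\})\in\msB$ and using that $\pi_\phi$ is a $*$-homomorphism gives $\phi_{xa}(P(\{\tilde B_{yb}\}))=\bra{\nu_\phi}M_{xa}\,\pi_\phi\bigl(P(\{\tilde B_{yb}\})\bigr)\ket{\nu_\phi}=\bra{\nu_\phi}M_{xa}\,P(\{N_{yb}\})\ket{\nu_\phi}$, and comparing the two displays yields the claim.

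The only genuinely delicate part is the bookkeeping: one must check that the commuting operator strategy extracted from the proof of \cref{thm:quantum strong nonsig c*} is literally the strategy entering the self-test hypothesis, and that $A_{xa}\ot P(\{B_{yb}\})\mapsto M_{xa}P(\{N_{yb}\})$ is the correct reading of \cref{eq:Phi def} in this setting (in particular that the commutation of $M_{xa}$ and $N_{yb}$ makes the order of factors immaterial). Everything else is routine once those identifications are pinned down.
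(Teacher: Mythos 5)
Your proposal is correct and follows essentially the same route as the paper's own proof: build the commuting operator strategy from the GNS triple and the Radon--Nikodym operators $M_{xa}\in\pi_\phi(\msB)'$, observe it is optimal because it realizes the same correlation as $\tilde S$, and then invoke the self-test hypothesis via \cref{eq:Phi def}. The extra remark about the universal property of $\ot_{\max}$ is a harmless elaboration of what \cref{eq:Phi def} already encodes, and your identification of $\phi_{xa}(P(\{\tilde B_{yb}\}))$ with $\bra{\nu_\phi}M_{xa}P(\{N_{yb}\})\ket{\nu_\phi}$ via the Radon--Nikodym formula is exactly the step the paper uses.
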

\begin{proof}
Let $\msB$ denote the $C^*$-algebra generated by the POVM elements~$\{\tilde B_{yb}\}$, so that~$\tilde S$ is strongly non-signaling with respect to~$\msB$.
Let $(\mcH_\phi, \pi_\phi, \ket{\nu_\phi})$ be a GNS triple associated with~$\phi = \sum_a \phi_{xa}$ (which is independent of~$x$).
In \cref{eq:state to GNS by RN} in the proof of \cref{thm:quantum strong nonsig c*} we showed that for any strongly non-signaling algebraic strategy there are POVMs~$\{M_{xa}\}$ in $\pi_\phi(\msB)'$ such that, for all~$B\in\msB$, $\phi_{xa}(B) = \bra{\nu_\phi} M_{xa} \pi_{\phi}(B) \ket{\nu_\phi}$.
In particular, setting $N_{yb} := \pi_\phi(B_{yb})$, we have
 we have
\begin{align*}
    \phi_{xa}(P(\{\tilde B_{yb}\}))
= \bra{\nu_\phi} M_{xa} \pi_{\phi}(P(\{B_{yb}\})) \ket{\nu_\phi}
= \bra{\nu_\phi} M_{xa} P(\{N_{yb}\}) \ket{\nu_\phi}
\end{align*}
and $[M_{xa},N_{by}]=0$ for all~$x\in\mcI_A,y\in\mcI_B,a\in\mcO_A,b\in\mcO_B$.
Thus we have constructed a commuting operator~$S$ strategy that in particular produces the same correlations,
$\phi_{xa}(\tilde B_{yb}) = \bra{\nu_\phi} M_{xa} N_{yb} \ket{\nu_\phi}$,
and hence $\omega(\mcG,S) = \omega(\mcGseq,\tilde S) = \omega_{qc}(\mcG)$.
Because~$\mcG$ is a commuting operator self-test, we must have
\begin{align*}
    \bra{\nu_\phi} M_{xa} P(\{N_{yb}\}) \ket{\nu_\phi}
= \Psi(A_{xa} \ot P(\{B_{yb}\})),
\end{align*}
which concludes the proof.
\end{proof}

We now provide the asymptotic self-testing statement.

\begin{theorem}\label{thm:com self test}
Let $\mcG$ be a commuting operator self-test with corresponding state~$\Psi$.
If~$S$ is a QPT strategy for the compiled game such that $\lim_{\lambda\to\infty} \omega_\lambda(\mcGcomp,S) = \omega_{qc}(\mcG)$, then it holds that
\begin{align*}
    \lim_{\lambda\to\infty} \tr\mleft( \sigma_{xa}^{\lambda} \, P(\{B^{\lambda}_{yb}\}) \mright)
= \Psi(A_{xa} \ot P(\{B_{yb}\}))
\end{align*}
for every~$x\in\mcI_A,a\in\mcO_A$ and for every noncommutative polynomial~$P$.
In particular:
\begin{align*}
    \lim_{\lambda\to\infty} \tr\mleft( \sigma_x^{\lambda} \, P(\{B^{\lambda}_{yb}\}) \mright) = \Psi(P(\{B_{yb}\})).
\end{align*}
\end{theorem}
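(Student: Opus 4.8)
The plan is to combine the limiting-state construction from the proof of \cref{thm:compiled-game-bound} with \cref{lem:translate algebra}, and to upgrade the subsequential convergence produced by the former into genuine convergence by using that the limit $\lim_{\lambda\to\infty}\omega_\lambda(\mcGcomp,S)=\omega_{qc}(\mcG)$ is assumed to exist and that the self-test state $\Psi$ is \emph{unique}. First I would fix $x\in\mcI_A$, $a\in\mcO_A$, and a noncommutative polynomial $P$, and set $c_\lambda := \tr(\sigma_{xa}^\lambda P(\{B^\lambda_{yb}\}))$. Since the $B^\lambda_{yb}$ are POVM elements, the universal $*$-homomorphism $\vartheta_\lambda\colon\msAPOVM^{\mcI_B,\mcO_B}\to\mcB(\mcH_\lambda)$ from the proof of \cref{thm:compiled-game-bound} gives $\norm{P(\{B^\lambda_{yb}\})}=\norm{\vartheta_\lambda(P(\{B_{yb}\}))}\leq\norm{P(\{B_{yb}\})}_{\msAPOVM^{\mcI_B,\mcO_B}}$, so $\abs{c_\lambda}\leq\norm{P(\{B_{yb}\})}_{\msAPOVM^{\mcI_B,\mcO_B}}\tr(\sigma_{xa}^\lambda)$ is bounded in $\lambda$. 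It therefore suffices to show that every subsequence of $\{c_\lambda\}_\lambda$ has a further subsequence converging to $c:=\Psi(A_{xa}\ot P(\{B_{yb}\}))$; this forces $\lim_{\lambda\to\infty}c_\lambda=c$, which is the first claim.

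Given an arbitrary subsequence, I would run the argument in the proof of \cref{thm:compiled-game-bound} verbatim along it: by the Banach--Alaoglu theorem, separability of $\msAPOVM^{\mcI_B,\mcO_B}$, and finiteness of $\mcI_A\times\mcO_A$, extract a further subsequence $\{\lambda_k\}_k$ along which the positive linear functionals $\phi_{xa}^{\lambda_k}(\cdot)=\tr(\sigma_{xa}^{\lambda_k}\vartheta_{\lambda_k}(\cdot))$ converge weak-$*$ to positive linear functionals $\phi_{xa}$ for every $x\in\mcI_A,a\in\mcO_A$, with $\phi_x:=\sum_{a\in\mcO_A}\phi_{xa}$ a state for each $x$. \Cref{prop:polysecurity} then forces $\phi_x=\phi_{x'}$ for all $x,x'$, so $\tilde S:=(\msAPOVM^{\mcI_B,\mcO_B},\{\phi_{xa}\},\{B_{yb}\})$ is a strongly non-signaling algebraic strategy for $\mcGseq$; by \cref{thm:quantum strong nonsig c*} its correlation $p(a,b|x,y)=\phi_{xa}(B_{yb})$ lies in $C_{qc}$, and since $p_{\lambda_k}\to p$ pointwise while $\lim_{\lambda\to\infty}\omega_\lambda(\mcGcomp,S)=\omega_{qc}(\mcG)$, linearity of $\omega(\mcG,\cdot)$ yields $\omega(\mcGseq,\tilde S)=\omega(\mcG,p)=\omega_{qc}(\mcG)$. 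Now \cref{lem:translate algebra} applies (with $\tilde B_{yb}=B_{yb}$) and gives $\phi_{xa}(P(\{B_{yb}\}))=\Psi(A_{xa}\ot P(\{B_{yb}\}))=c$ for all $x,a$ and all $P$; on the other hand, weak-$*$ convergence and $\vartheta_{\lambda_k}(B_{yb})=B^{\lambda_k}_{yb}$ give $\phi_{xa}(P(\{B_{yb}\}))=\lim_k\tr(\sigma_{xa}^{\lambda_k}P(\{B^{\lambda_k}_{yb}\}))=\lim_k c_{\lambda_k}$. Hence $\lim_k c_{\lambda_k}=c$, completing the argument for the first claim. For the ``in particular'' statement I would simply sum the first claim over $a\in\mcO_A$ (a finite set) and use $\sum_{a\in\mcO_A}A_{xa}=\Id$ in $\msAPOVM^{\mcI_A,\mcO_A}$, so that $\sum_{a\in\mcO_A}\Psi(A_{xa}\ot P(\{B_{yb}\}))=\Psi(\Id\ot P(\{B_{yb}\}))$, which is $\Psi(P(\{B_{yb}\}))$ under the canonical embedding $\msAPOVM^{\mcI_B,\mcO_B}\hookrightarrow\msA$.

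I do not expect a genuine obstacle here, since the heavy lifting is already done in \cref{thm:compiled-game-bound} and \cref{lem:translate algebra}; the content is essentially bookkeeping around passing to limits. The one point that needs care is the promotion from subsequential to full convergence: this is exactly where uniqueness of the self-test state $\Psi$ enters, as it guarantees that \emph{every} weak-$*$ subsequential limit $\phi_{xa}$ of the bounded family $\{\phi_{xa}^\lambda\}$ agrees with $A_{xa}\ot(\cdot)$ paired against $\Psi$ on the dense subalgebra of polynomials. The second point to double-check is that the optimality hypothesis $\omega(\mcG,p)=\omega_{qc}(\mcG)$ survives the passage to a subsequence --- it does, precisely because the \emph{full} limit $\lim_{\lambda\to\infty}\omega_\lambda(\mcGcomp,S)$ is assumed to exist and equal $\omega_{qc}(\mcG)$, so that every subsequential limit of the winning probabilities also equals $\omega_{qc}(\mcG)$.
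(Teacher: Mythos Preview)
Your proposal is correct and follows essentially the same approach as the paper's proof: reduce to showing that every subsequence admits a further subsequence along which the limit is the self-test value, then rerun the Banach--Alaoglu extraction from the proof of \cref{thm:compiled-game-bound} along an arbitrary subsequence to obtain a strongly non-signaling algebraic strategy $\tilde S$ with $\omega(\mcGseq,\tilde S)=\omega_{qc}(\mcG)$, and finish by applying \cref{lem:translate algebra}. Your explicit verification that $\{c_\lambda\}$ is bounded and your derivation of the ``in particular'' clause by summing over~$a$ are minor additions the paper leaves implicit, and your invocation of \cref{thm:quantum strong nonsig c*} is harmless but not strictly needed, since \cref{lem:translate algebra} is what carries the argument.
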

\begin{proof}
To show the desired convergence, it suffices to show that for any fixed~$x$, $a$, $P$ and for any subsequence~$\{\lambda_\ell\}_{\ell\in\N}$ there is a further subsequence~$\{\lambda_{\ell_k}\}_{k\in\N}$ such that
\begin{align}\label{eq:subseq claim}
  \lim_{k\to\infty} \tr\mleft( \sigma_{xa}^{\lambda_{\ell_k}} \, P(\{B^{\lambda_{\ell_k}}_{yb}\}) \mright)
= \Psi(A_{xa} \ot P(\{B_{yb}\}))
\end{align}
We will show that for any subsequence~$\{\lambda_\ell\}_{\ell\in\N}$ there in fact exists a further subsequence~$\{\lambda_{\ell_k}\}_{k\in\N}$ such that the above holds for all~$x,a,P$.
To this end, note that our assumption implies that $\lim_{\ell\to\infty} \omega_{\lambda_\ell}(\mcGcomp,S) = \omega_{qc}(\mcG)$.
Proceeding like in the proof of \cref{thm:compiled-game-bound}, but with the subsequence~$\{\lambda_\ell\}$ in place of~$\{\lambda\}$, we obtain a further subsequence~$\{\lambda_{\ell_k}\}_{k\in\N}$ and a strongly non-signaling algebraic strategy~$\tilde S$ for the sequential game, given by positive functionals~$\phi_{xa}$ and POVM elements~$\{\tilde B_{yb}\}$, such that
\begin{align}\label{eq:subseq got}
    \lim_{k\to\infty} \tr\mleft( \sigma_{xa}^{\lambda_{\ell_k}} \, P(\{B^{\lambda_{\ell_k}}_{yb}\}) \mright)
= \phi_{xa}(P(\{\tilde B_{yb}\}))
\end{align}
and
\begin{align*}
  \omega(\mcGseq, \tilde S)
= \lim_{k\to\infty} \omega_{\lambda_{\ell_k}}(\mcGcomp, S)
= \lim_{\ell\to\infty} \omega_{\lambda_\ell}(\mcGcomp,S)
= \omega_{qc}(\mcG).
\end{align*}
Now we can apply \cref{lem:translate algebra} to deduce \cref{eq:subseq claim} from \cref{eq:subseq got}.
\end{proof}

\section*{Acknowledgements}
We thank William Slofstra for bringing~\cite{NCPV12} to our attention and Victoria Wright and Lewis Wooltorton for drawing our attention to~\cite{wright2023invertible}.
We also thank Victoria Wright for bringing~\cite{chaturvedi2021characterising,PRXQuantum.2.020334} to our attention and for pointing out that our characterization of commuting operator correlations resolves an open problem in~\cite{wright2023invertible}.
Finally, we thank Michal Banacki for bringing his independent work~\cite{banacki2023steering} to our attention after a preprint of our paper had appeared online.
CP thanks the hospitality at RUB for a productive visit.
MW acknowledges the Simons Institute for the Theory of Computing at UC Berkeley for its hospitality and support.
CP acknowledges funding support from the Natural Sciences and Engineering Research Council of Canada (NSERC).
AK, GM, SS and MW acknowledge support by the Deutsche Forschungsgemeinschaft (DFG, German Research Foundation) under Germany's Excellence Strategy - EXC 2092 CASA - 390781972.
GM is also supported by the European Research Council through an ERC Starting Grant (Grant agreement No.~101077455, ObfusQation).
MW is supported by the European Research Council through an ERC Starting Grant (Grant agreement No.~101040907, SYMOPTIC), by the NWO through grant OCENW.KLEIN.267, and by the BMBF through projects Quantum Methods and Benchmarks for Resource Allocation (QuBRA) and Quantum Algorithm Solver Toolkit (QuSol).


\bibliographystyle{plain}
\bibliography{references}


\end{document}